\documentclass[11pt]{article}

\usepackage[margin=1in]{geometry}
\usepackage{proof}
\usepackage{bbm}
\usepackage{microtype}
\usepackage{xspace}
\usepackage{amsmath}
\usepackage{amssymb}
\usepackage{mathtools}
\usepackage{enumitem}
\usepackage[sort,nocompress]{cite}
\usepackage[colorlinks,citecolor=blue,hypertexnames=false]{hyperref}
\usepackage{amsthm}
\usepackage{thmtools}
\usepackage[capitalize,nameinlink]{cleveref}
\usepackage{xcolor}
\usepackage{yhmath} 

\setlist[enumerate,1]{nosep,label=(\alph*),font=\normalfont}

\newtheorem{theorem}{Theorem}[section]
\newtheorem{lemma}[theorem]{Lemma}
\newtheorem{definition}[theorem]{Definition}
\newtheorem{corollary}[theorem]{Corollary}

\DeclarePairedDelimiter{\bra}{\langle}{\vert}
\DeclarePairedDelimiter{\ket}{\vert}{\rangle}
\DeclarePairedDelimiterX{\braket}[2]{\langle}{\rangle}{#1|#2}
\DeclarePairedDelimiterX{\inner}[2]{\langle}{\rangle}{#1,#2}
\DeclarePairedDelimiter{\abs}{|}{|}
\DeclarePairedDelimiter{\norm}{\|}{\|}
\DeclarePairedDelimiter{\ceil}{\lceil}{\rceil}

\DeclareMathOperator{\E}{\mathbb{E}}
\DeclareMathOperator{\fix}{fix}
\DeclareMathOperator{\N}{\mathcal{N}}
\DeclareMathOperator{\poly}{poly}
\DeclareMathOperator{\rank}{rank}
\DeclareMathOperator{\sgn}{sgn}

\DeclareMathOperator{\supp}{supp}
\DeclareMathOperator{\tr}{tr}
\DeclareMathOperator{\translate}{\mathsf{T}}

\newcommand{\transp}{^{\mathsf{T}\!}}

\newcommand{\1}{\mathbbm{1}}
\renewcommand{\d}{\mathrm{d}}
\newcommand{\C}{\mathbb{C}}
\renewcommand{\Cap}{\mathcal{C}}
\newcommand{\D}{\mathcal{D}}

\newcommand{\G}{\mathcal{G}}
\renewcommand{\H}{\mathcal{H}}

\newcommand{\NN}{\mathbb{N}}
\newcommand{\R}{\mathbb{R}}
\newcommand{\Rd}{\mathcal{R}}
\newcommand{\Z}{\mathbb{Z}}
\newcommand{\rankvar}{R}
\renewcommand{\P}{\mathcal{P}}
\newcommand{\indicator}{\mathbf{1}}
\newcommand{\id}{\mathrm{id}}
\newcommand{\diag}{\mathsf{d}}
\newcommand{\offdiag}{\mathsf{od}}

\begin{document}

\title{Quantum lower bounds for convex optimization \\ and real matrix-vector query problems}
\author{Andrew M.\ Childs \\
\normalsize
Joint Center for Quantum Information and Computer Science, \\
\normalsize Department of Computer Science, and Institute for Advanced Computer Studies \\ 
\normalsize University of Maryland}

\date{}

\maketitle

\begin{abstract}
We (the author and the AI systems that did the heavy lifting) show that the quantum query complexity of minimizing a convex function over a convex subset of $\R^n$ with evaluation and membership queries is $\tilde\Omega(n)$, nearly matching the best known upper bound. In particular, we show this even for quadratic minimization, which is equivalent to inverting an $n \times n$ real matrix using matrix-vector queries. We also show linear or nearly linear lower bounds on the quantum query complexity of computing the trace, the sign of the determinant, and the magnitude of the determinant of a real matrix in the matrix-vector query model. We use a novel quantum lower bound technique, the \emph{determinantal witness method}, based on identifying a witness whose Fourier transform vanishes on low-rank matrices and that correlates well with the function being computed.
\end{abstract}

\section{Introduction} \label{sec:intro}

Minimizing convex functions is a basic computational problem with broad applications, including combinatorial optimization, machine learning, control systems, signal processing, and finance, among others. While convex optimization can be performed efficiently by classical algorithms, prior work has investigated the possibility of even faster algorithms using quantum computation.

In one natural formulation of this problem, we are given query access to a convex function evaluating $f\colon \R^n \to \R$ at a given point $x \in \R^n$, an oracle determining whether $x \in \R^n$ belongs to a convex body over which $f$ is to be minimized, and a point in that body. Under standard boundedness and well-roundedness assumptions, and suppressing logarithmic dependence on geometric, scale, and accuracy parameters, the best known classical randomized algorithm for this problem uses $\tilde O(n^2)$ queries \cite{LSV18} (where the tilde indicates that logarithmic factors are neglected), which was recently shown to be nearly optimal \cite{ZZQL26}. There is a quantum algorithm using only $\tilde O(n)$ queries \cite{AGGW18,CCLW18}, a quadratic improvement. The best quantum lower bound we are aware of is only $\tilde\Omega(\sqrt n)$ \cite{AGGW18,CCLW18}, leaving open the possibility of anything from a fourth-power quantum speedup to no quantum speedup.

It is also natural to consider the setting where we additionally have access to an oracle computing the gradient of $f$ at any chosen point. Such an oracle captures the scenario where $f$ is described by an explicit formula that can be efficiently differentiated as well as evaluated. Indeed, provided $f$ is given as an efficient procedure, automatic differentiation can often be used to efficiently evaluate its gradient \cite{BPRS17}. In the gradient oracle setting, classical algorithms can perform convex optimization using only $\tilde O(n)$ queries. However, the addition of a gradient oracle does not provide a significant benefit for quantum algorithms---indeed, the algorithms of \cite{AGGW18,CCLW18} work by computing a (sub)gradient using $\poly(\log n)$ queries to an evaluation oracle using Jordan's gradient estimation algorithm \cite{Jor05}. Therefore, no quantum speedup is known in this setting.

In the work mentioned above, we assume the oracles provide results with arbitrarily high accuracy, we demand that the output is also produced with arbitrarily high accuracy, and we study the query complexity as a function of the dimension $n$. It is also natural to study the complexity as a function of the allowed output error $\epsilon$, restricting attention to algorithms whose complexity is independent of $n$. In this setting, given access to evaluation and (sub)gradient oracles, both the classical and quantum query complexities are $\Theta(1/\epsilon^2)$ (more precisely, they are $\Theta((GR/\epsilon)^2)$ assuming the function is $G$-Lipschitz and one is given a point within a distance $R$ of the true minimum), so there is provably no quantum speedup \cite{GKNS21}. Since this question has essentially been resolved, for the remainder of the paper we focus on dimension dependence for algorithms that achieve arbitrarily small $\epsilon$ (in particular, we give a lower bound that holds for $\epsilon = 1/\poly(n)$).

An especially simple kind of convex function is a positive definite quadratic form plus a linear term, $f_{A,b}(x)=\frac{1}{2}x\transp A x + b\transp x$, where $A \in \R^{n \times n}$ is positive definite (and therefore symmetric) and $b \in \R^n$. The best known quantum query complexity for convex optimization is $\tilde O(n)$ even for the quadratic case. This optimization problem has a clean linear-algebraic description: since $\nabla f_{A,b}(x) = Ax + b$, $f_{A,b}(x)$ is minimized at $x=-A^{-1}b$. Furthermore, using quantum gradient estimation \cite{Jor05}, $\poly(\log n)$ quantum queries to $f_{A,b}$ suffice to determine $b$, and also to implement an oracle that produces $Ax$ on input $x$ (a so-called \emph{matrix-vector query}). Conversely, given a matrix-vector query oracle for $A$ and knowledge of the vector $b$, we can implement $f_{A,b}$ using a constant number of queries. Thus, the tasks of minimizing the quadratic function $f_{A,b}$ given an evaluation oracle, and of computing $A^{-1}$ applied to a particular input vector using matrix-vector queries for $A$, have the same query complexity up to log factors, so we focus on the inversion problem with matrix-vector queries. Clearly, $n$ ideal queries suffice, but lower bounds have been elusive.

The task of determining properties of a matrix using matrix-vector queries has been widely studied in a classical context (see for example \cite{SWYZ19,BHSW20}). There has been some progress on understanding quantum query complexity in the matrix-vector model over finite fields \cite{CHL21}, but only limited progress for real matrices. A significant obstacle is that the lower bounds of \cite{CHL21} primarily rely on the polynomial method \cite{BBCMW01}, which does not extend naturally to query problems with continuous input.

In this paper, we develop techniques for lower bounding quantum query complexity in the matrix-vector model over the reals and use them to resolve the quantum query complexity of convex optimization up to log factors. Along the way, we prove other results for the matrix-vector query model, namely for computing the trace and determinant, resolving several open questions from \cite{CHL21}. 

In the matrix-vector query model, for an unknown matrix $A$, an ideal query acts as
\begin{align}
  U_A\colon \ket{x,y} \mapsto \ket{x,y+Ax}
  \label{eq:query}
\end{align}
for all $x,y \in \R^n$.
It is often more convenient to work with a phase oracle, obtained by conjugating $U_A$ by the Fourier transform on the output register, giving
\begin{align}
  \hat U_A\colon \ket{x,y} \mapsto e^{2 \pi i y\transp A x} \ket{x,y}.
  \label{eq:phase_query}
\end{align}
The above formulation is an idealization since real numbers cannot be represented using finitely many bits. We address this by defining the problem through discretization, representing real numbers with arbitrarily high precision and demanding that an algorithm produce the answer with arbitrarily high precision. We formalize this model in \cref{sec:model} below.

We show query lower bounds using a technique, developed in \cref{sec:method}, that we call the \emph{determinantal witness method}. This approach uses the simple observation that the $A$-dependent phase accumulated by a quantum algorithm using $t$ matrix-vector queries is described by a measure over matrices with Fourier components of rank at most $t$. Consequently, we show in \cref{lem:main} that we can lower bound the failure probability of a $t$-query algorithm by finding a measure that is well-correlated with the function to be computed, and that has no support on Fourier components of rank at most $2t$. As shown in \cref{sec:method_det_wit}, we can construct such a witness by applying the Cayley operator (the determinant of a matrix of partial derivatives) to a smooth function of the input matrix. If such a determinantal witness correlates well with the function, the result is a good lower bound.

This approach can be viewed as a cousin of the polynomial method \cite{BBCMW01}, replacing polynomial degree with matrix rank, and replacing orthogonality to low-degree polynomials by the vanishing of a certain measure on low-rank Fourier components. 
In particular, the notion of a dual witness appears related to the concept of the dual polynomial in Boolean function complexity \cite{She08}.
Note that the ``rank method'' of Boneh and Zhandry \cite{BZ13} is apparently unrelated, despite also featuring matrix rank. The rank method bounds the success probability of a quantum algorithm in terms of the rank of a matrix whose rows are the final states for different oracles; rank plays a different role in the determinantal witness method.

By constructing and analyzing appropriate dual witnesses, we use the determinantal witness method to prove several quantum lower bounds on matrix-vector query problems:
\begin{itemize}
  \item In \cref{sec:trace}, we prove an $\Omega(n/\log n)$ lower bound for computing the trace (\cref{thm:trace}).
  \item In \cref{sec:det}, we prove a lower bound of $\Omega(n)$ for computing the sign of the determinant (\cref{thm:det_sign}), and a bound of $\Omega(n/\log n)$ for computing its magnitude within some $n$-dependent additive error (\cref{thm:det_magnitude}).
  \item In \cref{sec:inv}, we prove an $\Omega(n/\log n)$ lower bound for computing the first column of the inverse (\cref{thm:inv}). Indeed, we show the bound for inverse polynomial accuracy, even for matrices with polynomial condition number. This analysis is the most technically involved, building upon concepts from \cref{sec:det} and applying results from random matrix theory. We initially consider matrices with entries drawn independently from the standard normal distribution, which are very unlikely to be positive definite. However, a reduction shows that the bound also applies to positive definite matrices (\cref{cor:inv_pos}).
\end{itemize}
The last of these results implies an $\tilde\Omega(n)$ lower bound on the quantum query complexity of convex optimization using evaluation queries. Indeed, we show this even for constrained optimization over a fixed unit ball (\cref{cor:constrained}). This implies that the algorithms of \cite{AGGW18,CCLW18} are nearly optimal, achieving at most quadratic speedup over the best known classical algorithm \cite{LSV18}, up to log factors. Furthermore, since a quantum computer can use evaluation queries to implement a gradient oracle with only logarithmic overhead, this result also shows an $\tilde\Omega(n)$ lower bound for optimization when gradient queries are provided, ruling out more than logarithmic quantum speedup in this case.

This work suggests several questions for future investigation. First, while we settle the quantum query complexity of the trace, determinant, and inversion problems for real matrices, we do not consider the problem of rank testing, which is solved for finite fields in \cite{CHL21}. More generally, one might show lower bounds on the matrix-vector quantum query complexity of other tasks. It is also natural to ask whether any of the lower bound techniques developed here can be applied to problems outside the matrix-vector query model. In the context of optimization, we only prove hardness for quite small inverse polynomial error ($O(n^{-49})$). We leave it for future work to tighten this bound and to better understand the best possible scaling with both $n$ and the error. 

\section{Query model} \label{sec:model}

To formalize the setting, we define a matrix-vector query as a discretized version of \cref{eq:phase_query}. When provided with query access to a matrix $A \in \R^{n \times n}$, we are also given a bound $\Lambda \ge \norm{A}_{\max} \coloneqq \max_{i,j \in [n]} |A_{ij}|$. For a parameter $k \in \NN$ that the algorithm is free to choose, each coordinate of $x$ and $y$ is represented by $2k$ bits ($k$ bits representing the integer part and sign, and another $k$ bits representing the fractional part), ranging over $\D \coloneqq \{-2^{2k-1}+1,-2^{2k-1}+2,\ldots,2^{2k-1}-1\}/2^k$ (where we omit one additional value that could be encoded with $2k$ bits so that $\D$ is closed under negation). The corresponding discrete Fourier transform is taken over $\Z_{2^{2k}-1}$. The query acts precisely as in \cref{eq:phase_query}, i.e.,
\begin{align}
  \hat U_A\colon \ket{x,y} \mapsto e^{2 \pi i y\transp A x} \ket{x,y}
  \label{eq:phase_query_discrete},
\end{align}
where $k$ is left implicit, and the only restriction is that $x,y \in \D^n$.
An algorithm can choose any finite $k$ depending on $n$, $\Lambda$, and the desired accuracy of the computation, but not on the unknown $A$, and it must succeed for every admissible $A$. 
As we formalize in \cref{lem:query_sim}, a query in this model can effectively compute the application of $A$ to a given vector with arbitrarily high accuracy.
All the lower bounds we prove are uniform in $k$ and use finite values of $\Lambda$, so that an algorithm can choose a sufficiently large $k$ to avoid aliasing.

Note that we need not separately provide inverse queries, since conjugating the output register by the operation $\ket{y} \mapsto \ket{-y}$ maps $\hat U_A$ to $\hat U_{-A} = \hat U_{A}^{-1}$, so an inverse query can be exactly simulated with one forward query.
The operation $\ket{y} \mapsto \ket{-y}$ is well defined since we chose $\D$ to be closed under negation.

Similarly, as observed in \cite[Theorem 9]{CHL21}, a single matrix-vector query to $A$ can be used to implement a matrix-vector query to $A\transp$. Concretely, letting $S$ swap the two registers of \cref{eq:phase_query_discrete}, a simple calculation shows that $S \hat U_A S = \hat U_{A\transp}$, so we can implement a matrix-vector query to $A\transp$, or the inverse of such a query, using one matrix-vector query to $A$.

When considering an optimization problem with query access to a function $f\colon \R^n \to \R$, we are also given a query region $K \subset \R^n$ and a bound $\Xi \ge \sup_{x \in K} \abs{f(x)}$.
Queries to $f$ are provided via a discretized phase oracle $\hat O_f\colon \ket{x,y} \mapsto e^{2\pi i y f(x)} \ket{x,y}$ for $x \in \D^n \cap K$ and $y \in \D$. Outside $K$, the query acts as the identity. Again an algorithm may choose any finite $k$ depending on $n$, $\Xi$, and the desired accuracy, but not on the unknown function $f$.

\section{The determinantal witness method}\label{sec:method}

In this section, we describe the method used to prove lower bounds in this paper. \cref{sec:dual_witnesses} introduces the concept of a dual witness, and \cref{sec:method_det_wit} shows how to construct such a witness using the Cayley operator.

\subsection{Dual witnesses}\label{sec:dual_witnesses}

We begin with a simple characterization of $t$-query quantum algorithms in terms of a linear combination of phase contributions from rank-$t$ matrices. Let $\rankvar_r \coloneqq \{M \in \R^{n \times n} : \rank(M) \le r \}$ denote the set of real $n \times n$ matrices of rank at most $r$ (the \emph{rank variety}), and let $\inner{A}{M} \coloneqq \tr(A\transp M)$.

Since we consider algorithms with finite query registers, all of the integrals in the following lemma and its proof in fact represent finite sums.

\begin{lemma}\label{lem:rank_integral}
For any quantum algorithm using $t$ queries to $\hat U_A$, the probability of any measurement outcome has the form
\begin{align}
  \int_{\rankvar_{2t}} e^{2 \pi i \inner{A}{M}} \, \d\mu(M),
  \label{eq:meas_prob}
\end{align}
where $\mu$ is an $A$-independent discrete complex measure on $\rankvar_{2t}$.
\end{lemma}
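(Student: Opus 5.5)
We write a general $t$-query algorithm as $V_t \hat U_A V_{t-1} \cdots V_1 \hat U_A V_0$ applied to a fixed initial state $\ket{\psi_0}$, where the $V_j$ are $A$-independent unitaries. Without loss of generality each query acts on its own registers $\ket{x_j,y_j}$ with $x_j,y_j \in \D^n$, together with a workspace register. We expand in the computational basis of these registers. Inserting resolutions of the identity before each query, the final amplitude on a basis state $\ket{z}$ becomes a finite sum over paths $(x_1,y_1,\ldots,x_t,y_t)$ of $A$-independent coefficients $c_z(x,y)$ times the product of phases $\prod_{j=1}^t e^{2\pi i y_j\transp A x_j}$. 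Because $y\transp A x = \tr(A\transp y x\transp) = \inner{A}{y x\transp}$, this product equals $e^{2\pi i \inner{A}{N}}$ with $N = \sum_{j=1}^t y_j x_j\transp$, a matrix of rank at most $t$. So each amplitude has the form $\alpha_z(A) = \sum_{N\in\rankvar_t} a_z(N) e^{2\pi i\inner{A}{N}}$, a finite sum with $A$-independent coefficients.

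Next we take the probability of a measurement outcome. The outcome corresponds to a projector $\Pi$ that is diagonal in the computational basis; this is general, since any final measurement can be absorbed into $V_t$. The probability is $\sum_{z \in \Pi} |\alpha_z(A)|^2$. Expanding, we get
\[
\sum_z \sum_{N,N'} a_z(N)\overline{a_z(N')}\, e^{2\pi i\inner{A}{N-N'}}.
\]
Since $A$ is real, the conjugate of $e^{2\pi i\inner{A}{N'}}$ is $e^{-2\pi i\inner{A}{N'}}$. The matrix $M = N - N'$ has rank at most $2t$ by subadditivity of rank. We define $\mu$ as the discrete measure placing mass $\sum_z\sum_{N-N'=M} a_z(N)\overline{a_z(N')}$ at each $M$. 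This gives \cref{eq:meas_prob} with $\mu$ supported on $\rankvar_{2t}$ and independent of $A$.

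There is one technical point to check. The phase query multiplies by $e^{2\pi i y\transp A x}$ exactly, with real arguments, rather than reducing modulo the $\Z_{2^{2k}-1}$ Fourier structure. Therefore the identity $y\transp A x = \inner{A}{yx\transp}$ holds literally, and no aliasing issue arises in this lemma. Beyond that, the only care needed is bookkeeping: the algorithm may reuse the same registers across queries, or the workspace may be entangled. This is handled because the path expansion sums over the basis values of whatever registers are queried at each step. I expect no real obstacle; the main point is the observation that the product of rank-one phases gives a phase linear in $A$ with a low-rank Fourier matrix.
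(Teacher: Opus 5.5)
Your proof is correct and takes essentially the same route as the paper. You expand each amplitude as a finite path sum with phase $e^{2\pi i \inner{A}{\sum_j y_j x_j\transp}}$ over matrices of rank at most $t$, then take the squared modulus to get differences $N-N'$ of rank at most $2t$. The only cosmetic difference is in handling general measurements: you absorb them into $V_t$ and sum over computational-basis outcomes, whereas the paper computes $|\bra{\phi}V_A\ket{\psi}|^2$ and notes that general outcome probabilities are linear combinations of such terms.
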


\begin{proof}
Consider a general $t$-query algorithm
\begin{align}
  V_A = V_t \hat U_A V_{t-1} \cdots V_1 \hat U_A V_0
\end{align}
with $A$-independent unitaries $V_0,\ldots,V_t$ acting on the query registers tensored with a workspace register.
Inserting the identity between queries (acting at an implicit discretization level $k$), for any states $\ket{\psi},\ket{\phi}$ we have
\begin{align}
  \bra{\phi} V_A \ket{\psi}
  &= \int_{(\D^n)^{2t}}
  e^{2\pi i \sum_{j=1}^t y_j\transp A x_j}
  \bra{\phi} V_t \ket{x_t,y_t}
  \biggl( \prod_{j=t-1}^{1} \bra{x_{j+1},y_{j+1}}V_j\ket{x_j,y_j} \biggr)
  \bra{x_1,y_1}V_0\ket{\psi} \, \d^t{x} \, \d^t{y} \\
  &= \int_{\rankvar_t} e^{2 \pi i \inner{A}{M}} \, \d{\nu(M)},
\end{align}
where the product is ordered from left to right as the indices decrease and
\begin{align}
  \nu(\{M\}) &\coloneqq \sum_{x,y \colon \sum_{j=1}^t y_j x_j\transp = M}
  \bra{\phi}V_t\ket{x_t,y_t} \biggl( \prod_{j=t-1}^{1} \bra{x_{j+1},y_{j+1}}V_j\ket{x_j,y_j} \biggr) \bra{x_1,y_1}V_0\ket{\psi}
\end{align}
is a discrete measure supported on matrices of rank at most $t$.

When measuring the final state in a basis that includes $\ket{\phi}$, the probability of obtaining that outcome is
\begin{align}
  |\bra{\phi} V_A \ket{\psi}|^2
  &= \int_{\rankvar_t^2} e^{2\pi i \inner{A}{M - M'}} \, \d\nu(M) \, \overline{\d\nu(M')},
\end{align}
which has the form of \cref{eq:meas_prob} since its measure is supported on matrices $M-M' \in \rankvar_t - \rankvar_t = \rankvar_{2t}$. More generally, the probability of obtaining a given outcome for a more general measurement is a linear combination of such expressions, so it has the same form.
\end{proof}

We emphasize that the conclusion of \cref{lem:rank_integral} holds independently of the discretization level $k$ chosen by the algorithm.

We can use this representation to lower bound the query complexity of a matrix-vector query problem. Suppose the input is promised to come from a set $\P \subseteq \R^{n \times n}$, and consider a decision problem $D \colon \P \to \{0,1\}$. A $t$-query lower bound for computing $D$ follows from the existence of a complex measure whose Fourier transform vanishes on matrices of rank at most $2t$, and that has significant correlation with $D$. For additional flexibility, we allow the measure to have some support outside $\P$. Let $\norm{\xi}_1 \coloneqq \int \abs{\d\xi}$ denote the total variation of a measure $\xi$.

\begin{definition}\label{def:dual_witness}
For a matrix-vector query decision problem $D\colon \P \to \{0,1\}$ with promise set $\P \subseteq \R^{n \times n}$, a \emph{dual witness} for rank $r$ is a complex measure $\xi = \xi_0 + \xi_1$, with $\xi_0$ supported on $\P$ (i.e., $\xi_0(A)=0$ if $A \notin \P$), $0 < \norm{\xi_0}_1 < \infty$, and $\norm{\xi_1}_1 < \infty$, such that
\begin{align}\label{eq:no_low_rank}
  \forall M \in \rankvar_r,~
  \hat\xi(M) \coloneqq \int_{\R^{n \times n}} e^{2\pi i \inner{A}{M}} \, \d\xi(A) = 0.
\end{align}
\end{definition}

\begin{lemma}\label{lem:main}
If $\xi=\xi_0+\xi_1$ is a rank-$2t$ dual witness for $D\colon \P \to \{0,1\}$ with promise set $\P \subseteq \R^{n \times n}$, then any $t$-query quantum algorithm has failure probability (in the worst case over $\P$) at least
\begin{align}
  \frac{\abs{\int_{\R^{n \times n}} D(A) \, \d\xi(A)}-\norm{\xi_1}_1}{\norm{\xi_0}_1},
  \label{eq:failprob}
\end{align}
for any extension of $D$ to a $\{0,1\}$-valued function on all of $\R^{n \times n}$.
\end{lemma}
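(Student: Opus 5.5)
Fix a $t$-query algorithm and let $p(A)$ be the probability that it outputs $1$ on input $A$. By \cref{lem:rank_integral}, $p(A)=\int_{\rankvar_{2t}} e^{2\pi i\inner{A}{M}}\,\d\mu(M)$ for a discrete complex measure $\mu$ that does not depend on $A$. Because $\mu$ is finitely supported, I can integrate $p$ against $\xi$ and swap the order of integration (Fubini is fine since $\norm{\xi}_1<\infty$ and $\mu$ is a finite sum):
\begin{align}
  \int p(A)\,\d\xi(A) = \int_{\rankvar_{2t}} \hat\xi(M)\,\d\mu(M) = 0,
\end{align}
using \cref{eq:no_low_rank} with $r=2t$. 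So the acceptance probability is exactly orthogonal to the witness. This holds for every $A\in\R^{n\times n}$, not just $A\in\P$, since $p$ is defined everywhere and $0\le p\le 1$ everywhere.

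Next, let $\delta := \sup_{A\in\P}\abs{p(A)-D(A)}$. On $\P$ this is at most the worst-case failure probability: the failure probability on input $A$ is $1-p(A)$ if $D(A)=1$ and $p(A)$ if $D(A)=0$, i.e.\ exactly $\abs{p(A)-D(A)}$. Then, for any $\{0,1\}$-valued extension of $D$,
\begin{align}
  \Bigl|\int D\,\d\xi\Bigr| = \Bigl|\int (D-p)\,\d\xi\Bigr|
  \le \int_{\P}\abs{D-p}\,\abs{\d\xi_0} + \int \abs{D-p}\,\abs{\d\xi_1}
  \le \delta\norm{\xi_0}_1 + \norm{\xi_1}_1 .
\end{align}
Here the first term uses that $\xi_0$ is supported on $\P$. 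The second term uses $\abs{D-p}\le 1$ everywhere, since both $D$ and $p$ take values in $[0,1]$. Rearranging with $\norm{\xi_0}_1>0$ gives $\delta\ge(\abs{\int D\,\d\xi}-\norm{\xi_1}_1)/\norm{\xi_0}_1$, which is \cref{eq:failprob}.

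The only delicate point is justifying the interchange of integrals and the meaning of $\xi$ off the discretization. The approach is to note that $p(A)$ is a finite trigonometric sum in $A$ for each fixed $k$, so the interchange is just linearity against a finite-variation measure. Since \cref{lem:rank_integral} holds for every $k$, the bound is uniform in the discretization.
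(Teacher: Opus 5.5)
Your proposal is correct and follows the paper's proof essentially step for step. You use \cref{lem:rank_integral} and the finiteness of $\mu$ to get $\int p\,\d\xi=\int_{\rankvar_{2t}}\hat\xi\,\d\mu=0$, split $\int (D-p)\,\d\xi$ into its $\xi_0$ and $\xi_1$ parts, bound these by $\sup_{A\in\P}\abs{D(A)-p(A)}\,\norm{\xi_0}_1$ and $\norm{\xi_1}_1$ respectively, and rearrange (the supremum is in fact exactly the worst-case failure probability over $\P$, not merely a lower bound for it, though your inequality direction already suffices).
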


\begin{proof}
Let $p(A)$ be the acceptance probability of a $t$-query algorithm on input $A \in \R^{n \times n}$. By \cref{lem:rank_integral},
\begin{align}
  \int_{\R^{n \times n}} p(A) \, \d\xi(A)
  &= \int_{\R^{n \times n}} \int_{\rankvar_{2t}} e^{2 \pi i \inner{A}{M}} \, \d{\mu(M)} \, \d\xi(A) \\
  &= \int_{\rankvar_{2t}} \hat\xi(M) \, \d\mu(M),
\end{align}
(the exchange of integrals is justified since the integral over $\rankvar_{2t}$ is in fact a finite sum), which vanishes since $\hat\xi(M)=0$ for $M \in \rankvar_{2t}$ by \cref{def:dual_witness}.
Therefore
\begin{align}
  \int_{\R^{n \times n}} D(A) \, \d\xi(A)
  &= \int_{\R^{n \times n}} \bigl(D(A)-p(A)\bigr) \, \d\xi(A) \\
  &= \int_{\R^{n \times n}} \bigl(D(A)-p(A)\bigr) \, \d\xi_0(A) + \int_{\R^{n \times n}} \bigl(D(A)-p(A)\bigr) \, \d\xi_1(A).
\end{align}
Using H\"older's inequality for the first term, $|D(A)-p(A)|\le 1$ for the second (since $D(A) \in \{0,1\}$ and $p(A) \in [0,1]$), and the triangle inequality, we have
\begin{align}
  \left| {\int_{\R^{n \times n}} D(A) \, \d\xi(A)} \right|
  &\le \norm{D-p}_{\infty,\P} \, \norm{\xi_0}_1 + \norm{\xi_1}_1,
  \label{eq:main_final}
\end{align}
where $\norm{D-p}_{\infty,\P} \coloneqq \sup_{A \in \P} \abs{D(A) - p(A)}$.
For an input $A \in \P$, the failure probability of the algorithm is $1-p(A)$ if $D(A)=1$ and $p(A)$ if $D(A)=0$, i.e., exactly $\abs{D(A)-p(A)}$, so $\norm{D-p}_{\infty,\P}$ is the worst-case failure probability, and rearranging \cref{eq:main_final} gives \cref{eq:failprob}.
\end{proof}

\subsection{Determinantal witnesses}\label{sec:method_det_wit}

To apply \cref{lem:main}, we construct a family of measures that vanish on $\rankvar_r$.

For an $n \times n$ matrix $X$ of variables $X_{ij}$, define $\partial_{X_{ij}} \coloneqq \frac{\partial}{\partial X_{ij}}$. The \emph{Cayley operator} is
\begin{align}\label{eq:cayley_operator}
  \det(\partial_X) = \sum_{\sigma \in S_n} \sgn(\sigma) \prod_{i=1}^n \partial_{X_{i \sigma(i)}}.
\end{align}
We define the Fourier transform of $f \colon \R^{n \times n} \to \C$ as
\begin{align}
  \hat f(M) = \int_{\R^{n \times n}} e^{2\pi i \inner{X}{M}} f(X) \, \d{X}.
\end{align}
We say that $f$ is a \emph{Schwartz function} if $\sup_{X \in \R^{n \times n}} |X^r \partial^s f(X)| < \infty$ for all $r,s \in \{0,1,2,\ldots\}^{n \times n}$, where $X^r \coloneqq \prod_{i,j=1}^n X_{i,j}^{r_{i,j}}$ and $\partial^s \coloneqq \prod_{i,j=1}^n \partial_{X_{i,j}}^{s_{i,j}}$.

The following result characterizes the Fourier transform of a polynomial of the Cayley operator.

\begin{lemma}\label{lem:fourier_cayley_schwartz}
Let $b\colon \R^{n \times n} \to \R$ be a Schwartz function. Let $P$ be a polynomial in $\{X_{ij} : i,j \in [n]\}$. Then
\begin{align}
  \widehat{P(\partial_X)b}(M) = P(-2\pi i M) \hat b(M).
\end{align}
\end{lemma}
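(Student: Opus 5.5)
By linearity of the Fourier transform and of the map $P \mapsto P(\partial_X)$, it suffices to prove the identity for a single monomial $P(X) = X^s = \prod_{i,j} X_{ij}^{s_{ij}}$. Then $P(\partial_X) = \partial^s$, and the claim becomes $\widehat{\partial^s b}(M) = (-2\pi i M)^s \hat b(M)$. The right-hand side is correct for a general polynomial because $P(-2\pi i M)$ is linear in the coefficients of $P$, so summing the monomial identities gives the statement.

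For a monomial, I will argue by induction on the total order $|s| = \sum_{ij} s_{ij}$. The base case $|s|=0$ is trivial. For the inductive step, write $\partial^s = \partial_{X_{kl}} \partial^{s'}$ with $|s'| = |s|-1$, and set $g \coloneqq \partial^{s'} b$. This $g$ is again Schwartz, since derivatives of Schwartz functions are Schwartz directly from the definition. It therefore suffices to show the single-derivative identity
\begin{align}
  \widehat{\partial_{X_{kl}} g}(M) = -2\pi i M_{kl} \, \hat g(M)
\end{align}
for any Schwartz $g$. Iterating this identity and noting that the scalar factors commute yields the monomial formula.

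The single-derivative identity follows from integration by parts in the variable $X_{kl}$. By Fubini, justified because $\partial_{X_{kl}} g$ is Schwartz and hence integrable on $\R^{n\times n}$, we first integrate over $X_{kl}$ with the other entries held fixed:
\begin{align}
  \int_{\R} e^{2\pi i \inner{X}{M}} \partial_{X_{kl}} g(X)\, \d X_{kl}
  = \Bigl[ e^{2\pi i \inner{X}{M}} g(X) \Bigr]_{X_{kl}=-\infty}^{\infty}
  - \int_{\R} 2\pi i M_{kl}\, e^{2\pi i \inner{X}{M}} g(X) \, \d X_{kl}.
\end{align}
This uses $\partial_{X_{kl}} \inner{X}{M} = M_{kl}$, which holds because $\inner{X}{M} = \sum_{ij} X_{ij} M_{ij}$. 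The boundary term vanishes since $g$ decays in $X_{kl}$. Integrating over the remaining variables then gives $-2\pi i M_{kl}\hat g(M)$, which is the desired identity.

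The only real care needed is in the analytic justification: the vanishing of the boundary terms and the legitimacy of Fubini. Both follow from the Schwartz bounds with $r$ chosen large enough to give integrable decay, for instance $\prod_{ij}(1+X_{ij}^2)^{-1}$ domination. So I expect no serious obstacle beyond bookkeeping.
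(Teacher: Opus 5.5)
Your proposal is correct and takes essentially the same route as the paper. A single integration by parts gives $\widehat{\partial_{X_{ij}} b}(M) = -2\pi i M_{ij}\hat b(M)$ with boundary terms vanishing by Schwartz decay; iterating handles monomials, and linearity gives general $P$. Your remarks on Fubini and on derivatives of Schwartz functions remaining Schwartz just make explicit what the paper leaves implicit.
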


\begin{proof}
Integration by parts gives
\begin{align}
  \widehat{\partial_{X_{ij}}b}(M)
  &= \int_{\R^{n \times n}} e^{2\pi i \inner{X}{M}} \partial_{X_{ij}} b(X) \, \d{X} \\
  &= - 2 \pi i M_{ij} \int_{\R^{n \times n}} e^{2\pi i \inner{X}{M}} b(X) \, \d{X} \\
  &= -2\pi i M_{ij} \hat b(M)
\end{align}
since the boundary terms vanish for a Schwartz function.
For a monomial $P = \prod_{i,j} X_{ij}^{a_{ij}}$, repeating this process for all factors of $P$ gives $(-2\pi i)^{\sum_{i,j} a_{ij}} \prod_{i,j} M_{ij}^{a_{ij}} \hat b(M) = P(-2\pi i M) \hat b(M)$.
The result for a general polynomial follows by linearity.
\end{proof}

For a given Schwartz function $b\colon \R^{n \times n} \to \R$, we construct an associated function $\beta\colon \R^{n \times n} \to \C$ defined as
\begin{align}\label{eq:determinantal_witness}
  \beta = \det\bigl(-\tfrac{1}{2\pi i} \partial_X\bigr) b ,
\end{align}
which we call a \emph{determinantal witness}.
More generally, for $I,J \subseteq [n]$, let $\det(M_{I,J})$ denote the minor of $M \in \R^{n \times n}$ corresponding to rows in $I$ and columns in $J$, and define
\begin{align}\label{eq:beta_minor}
  \beta_{I,J} = \det\bigl((-\tfrac{1}{2\pi i} \partial_X)_{I,J}\bigr) b. 
\end{align}
We also define the associated complex measure $\d\beta(X)=\beta(X) \, \d{X}$, and similarly for $\beta_{I,J}$.

This construction gives measures whose Fourier transforms vanish on matrices of bounded rank, satisfying \cref{eq:no_low_rank} in \cref{def:dual_witness}.

\begin{lemma}\label{lem:beta_rank}
For all $M \in \R^{n \times n}$, $\hat\beta(M) = \det(M) \hat b(M)$, which is $0$ for $M \in \rankvar_{n-1}$. More generally, for $I,J\subseteq[n]$ with $|I|=|J|=r+1$, $\hat\beta_{I,J}(M)=\det(M_{I,J})\hat b(M)$ vanishes on $\rankvar_r$.
\end{lemma}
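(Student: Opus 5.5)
The plan is to apply \cref{lem:fourier_cayley_schwartz} directly, with $P$ a scaled determinant or minor polynomial, and then use elementary linear algebra to see that the resulting factor vanishes on low-rank matrices.

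For the first claim, take $P(X) = \det\bigl(-\tfrac{1}{2\pi i} X\bigr)$. This is a polynomial in the entries $X_{ij}$, homogeneous of degree $n$, and equal to $(-\tfrac{1}{2\pi i})^n \det(X)$. Substituting the operator $\partial_X$ for $X$ gives exactly the Cayley operator expansion \cref{eq:cayley_operator}, scaled by $(-\tfrac{1}{2\pi i})^n$. So $P(\partial_X) b = \beta$ by \cref{eq:determinantal_witness}. Here I use that the partial derivatives commute on the smooth function $b$, so the product ordering in each monomial of $\det$ does not matter.

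\cref{lem:fourier_cayley_schwartz} then gives
\[
\hat\beta(M) = P(-2\pi i M)\hat b(M) = \det\bigl(-\tfrac{1}{2\pi i}(-2\pi i M)\bigr)\hat b(M) = \det(M)\hat b(M).
\]
If $M \in \rankvar_{n-1}$, then $M$ is singular, so $\det(M)=0$ and therefore $\hat\beta(M)=0$.

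The general case works the same way. Take
\[
P_{I,J}(X) = \det\bigl((-\tfrac{1}{2\pi i}X)_{I,J}\bigr) = (-\tfrac{1}{2\pi i})^{r+1}\det(X_{I,J}),
\]
a polynomial in the entries $X_{ij}$ with $i\in I$, $j\in J$. Then $P_{I,J}(\partial_X)b=\beta_{I,J}$ by \cref{eq:beta_minor}, and \cref{lem:fourier_cayley_schwartz} gives $\hat\beta_{I,J}(M)=\det(M_{I,J})\hat b(M)$. If $\rank M\le r$, then every $(r+1)\times(r+1)$ submatrix $M_{I,J}$ also has rank at most $r$, since its columns are restrictions of columns of $M$ and deleting rows cannot increase rank. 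So $M_{I,J}$ is singular, its determinant is zero, and $\hat\beta_{I,J}(M)=0$.

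There is no real obstacle here. The only points needing care are bookkeeping the scaling constants so that $(-\tfrac{1}{2\pi i})^{m}(-2\pi i)^{m}=1$ for $m=n$ or $m=r+1$, and noting that the determinant, viewed as a polynomial, evaluated at commuting derivatives is well defined.
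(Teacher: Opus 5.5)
Your proposal is correct and follows the paper's own proof: apply \cref{lem:fourier_cayley_schwartz} with $P(X)=\det\bigl((-\tfrac{1}{2\pi i}X)_{I,J}\bigr)$, check that the scaling factors cancel, and note that $\det(M_{I,J})=0$ whenever $\rank M\le r$. Your extra remarks, that partial derivatives commute and that a submatrix has rank at most $\rank M$, just make explicit what the paper leaves implicit.
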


\begin{proof}
The identities follow from \cref{lem:fourier_cayley_schwartz} with $P(X)=\det(-\tfrac{1}{2\pi i}X_{I,J})$. We have $\hat\beta_{I,J}(M)=0$ for $M \in \rankvar_r$ since then $\det(M_{I,J})=0$.
\end{proof}

\section{Trace}\label{sec:trace}

We now show how to use this method to lower bound the query complexity of computing the trace. Note that the lower bound for matrix inversion builds upon that for the determinant, but does not directly use results from this section, so a reader who is not interested in the trace lower bound for its own sake can safely skip to \cref{sec:det}.

We first characterize the behavior of a determinantal witness for a function that separates over the matrix entries, and then choose the factors to achieve the desired lower bound. Let
\begin{align}\label{eq:product_bump}
  b(A) = \prod_{i=1}^n b_\diag(A_{ii}) \prod_{i \ne j} b_\offdiag(A_{ij})
\end{align}
for pointwise nonnegative Schwartz functions $b_\diag,b_\offdiag\colon \R \to \R$ with $\norm{b_\diag}_1 = \norm{b_\offdiag}_1=1$.

\begin{lemma}\label{lem:D_integral_conv}
Define $\beta$ as in \cref{eq:determinantal_witness} using $b(A)$ as in \cref{eq:product_bump}. Then for any bounded measurable function $D(A)=d(\tr A)$,
\begin{align}\label{eq:D_integral}
  \int_{\R^{n \times n}} D(A) \, \d\beta(A) = \Bigl(-\frac{1}{2\pi i}\Bigr)^n \int_\R d(\tau) (b_\diag')^{*n}(\tau) \, \d\tau
\end{align}
where a prime denotes differentiation and $*n$ denotes $n$-fold convolution.
\end{lemma}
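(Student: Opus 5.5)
The plan is to expand the Cayley operator as a sum over permutations, integrate each term against $D(A)=d(\tr A)$, and show that only the identity permutation survives.

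Write
\[
\beta(A) = \Bigl(-\tfrac{1}{2\pi i}\Bigr)^n \sum_{\sigma \in S_n} \sgn(\sigma) \prod_{i=1}^n \partial_{A_{i\sigma(i)}} b(A).
\]
For a fixed $\sigma$, each derivative $\partial_{A_{i\sigma(i)}}$ hits a distinct entry, since the pairs $(i,\sigma(i))$ are distinct. Because $b$ is a product over entries, the $\sigma$-term factors as a product of one-variable functions. Diagonal entries with $\sigma(i)=i$ contribute $b_\diag'(A_{ii})$. Diagonal entries with $\sigma(i)\neq i$ contribute $b_\diag(A_{ii})$. Off-diagonal entries hit by $\sigma$ contribute $b_\offdiag'(A_{i\sigma(i)})$, and all remaining off-diagonal entries contribute $b_\offdiag$.

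Now integrate a term for $\sigma\neq\id$ against $d(\tr A)$. Such a $\sigma$ has at least one (indeed at least two) $i$ with $\sigma(i)\neq i$, so the term contains a factor $b_\offdiag'(A_{i\sigma(i)})$ in an off-diagonal variable. The integrand $d(\tr A)$ does not depend on any off-diagonal variable. By Fubini, which is justified since $d$ is bounded and every factor is Schwartz and hence integrable, I integrate that variable first and get $\int_\R b_\offdiag'(u)\,\d u = 0$, because $b_\offdiag$ is Schwartz and vanishes at $\pm\infty$. So every non-identity term vanishes.

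For $\sigma=\id$, the off-diagonal factors are $b_\offdiag$, which integrate to $1$ because $b_\offdiag\ge 0$ and $\norm{b_\offdiag}_1=1$. This leaves
\[
\Bigl(-\tfrac{1}{2\pi i}\Bigr)^n \int_{\R^n} d\Bigl(\sum_i a_i\Bigr) \prod_i b_\diag'(a_i)\,\d a.
\]
To finish, I change variables to $\tau=\sum_i a_i$. The standard fact that the density of a sum of independent variables is the convolution of the individual densities, which holds equally for signed integrable functions by Fubini, shows that the pushforward of $\prod_i b_\diag'(a_i)\,\d a$ under $a\mapsto\sum_i a_i$ is $(b_\diag')^{*n}(\tau)\,\d\tau$. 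Substituting gives \cref{eq:D_integral}.

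The only delicate points are measure-theoretic. One is absolute integrability, which holds since $|d|\le C$ and products of $L^1$ functions of distinct variables are $L^1$. The other is that the convolution identity holds for the signed functions $b_\diag'$, which I would verify by induction on $n$ using Fubini. I do not expect a real obstacle; the key observation is just that derivatives on off-diagonal variables integrate to zero.
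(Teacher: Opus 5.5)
Your proposal is correct and follows the paper's proof essentially step by step. Like the paper, you expand the Cayley operator over permutations and kill each non-identity term because the integral of $b_\offdiag'$ over an off-diagonal variable vanishes. You then integrate out the off-diagonal $b_\offdiag$ factors and recognize the remaining diagonal integral as the $n$-fold convolution; your remarks on Fubini and the signed convolution identity only spell out justifications the paper leaves implicit.
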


\begin{proof}
From \cref{eq:determinantal_witness}, $\beta(A) = (-\frac{1}{2\pi i})^n \sum_{\sigma\in S_n} \sgn(\sigma) \prod_{i=1}^n \partial_{A_{i \sigma(i)}} b(A)$. Since the form of $b$ in \cref{eq:product_bump} has each entry $A_{ij}$ in exactly one factor, we have
\begin{align}\label{eq:beta_sigma}
  \prod_{i=1}^n \partial_{A_{i\sigma(i)}} b(A)
  &= \prod_{i\colon \sigma(i)=i} b_\diag'(A_{ii})
     \prod_{i\colon \sigma(i)\ne i} b_\offdiag'(A_{i\sigma(i)})
     \prod_{(j,k)\notin\{(i,\sigma(i)) : i \in [n]\}} b_{jk}(A_{jk})
\end{align}
where $b_{jj}=b_\diag$ and $b_{jk}=b_\offdiag$ for $j \ne k$. Therefore
\begin{align}
  \int_{\R^{n \times n}} D(A) \, \d\beta(A)
  = \Bigl(-\frac{1}{2\pi i}\Bigr)^n \sum_{\sigma\in S_n} \sgn(\sigma) \int_{\R^{n \times n}} d(\tr A) \prod_{i=1}^n \partial_{A_{i\sigma(i)}} b(A) \, \d{A}.
\end{align}
For fixed $\sigma$, the integrand is a product of univariate functions, one for each entry of $A$, and the weight $d(\tr A)$ only depends on the sum of the diagonal entries, so the integral factors as a product over the off-diagonal entries of $A$ and an $n$-dimensional integral over the diagonal.

We claim that the terms where $\sigma \ne \id$ vanish. Indeed, if $\sigma(i) \ne i$, then the integral over the variable $A_{i\sigma(i)}$ is proportional to $\int_\R b_\offdiag'(A_{i\sigma(i)}) \, \d{A_{i\sigma(i)}}$, which is zero by the fundamental theorem of calculus (as $b_\offdiag$ is Schwartz). Therefore
\begin{align}
  \int_{\R^{n \times n}} D(A) \, \d\beta(A)
  &= \Bigl(-\frac{1}{2\pi i}\Bigr)^n \int_{\R^{n \times n}} d(\tr A) \prod_{i=1}^n \partial_{A_{ii}} b(A) \, \d{A} \\
  &= \Bigl(-\frac{1}{2\pi i}\Bigr)^n \int_{\R^{n \times n}} d(\tr A) \prod_{i=1}^n b_\diag'(A_{ii}) \prod_{i \ne j} b_\offdiag(A_{ij}) \, \d{A}
\end{align}
by \cref{eq:beta_sigma}. Performing the integrals over the off-diagonal terms using $\norm{b_\offdiag}_1=1$ and writing $a_i \coloneqq A_{ii}$, we have
\begin{align}
  \int_{\R^{n \times n}} D(A) \, \d{\beta(A)}
  &= \Bigl(-\frac{1}{2\pi i}\Bigr)^n \int_{\R^n} d(a_1+\cdots+a_n) \prod_{i=1}^n b_\diag'(a_i) \, \d{a}.
\end{align}
Since the $n$-fold convolution of a function $f$ is
\begin{align}
  f^{*n}(\tau) = \int_{\R^{n-1}} f(a_1) f(a_2) \cdots f(a_{n-1}) f(\tau-a_1-\cdots-a_{n-1}) \, \d{a},
\end{align}
this is equivalent to \cref{eq:D_integral}.
\end{proof}

To apply \cref{lem:main} with $\xi=\beta$, it remains to choose the functions $b_\diag,b_\offdiag$, the split $\xi=\xi_0+\xi_1$, and a concrete decision function and promise, to make the ratio in \cref{eq:failprob} sufficiently large.
For simplicity, suppose $n$ is even (this is without loss of generality since a simple padding argument shows that the complexity is nondecreasing in $n$). Consider the decision problem $D(A) = \indicator_{\tr A \ge \theta}$ (i.e., $d(\tau)=\indicator_{\tau \ge \theta}$) for $\theta=(n-1)/2$, where $\indicator$ denotes the indicator function, with the promise $\P = \{A \in \R^{n \times n} : \abs{\theta - \tr A} \ge \frac{1}{3},\, \norm{A}_{\max} \le 24n^{3/2}\}$. 

Let $g_\delta\colon \R \to \R$ be a smooth, even, nonnegative mollifier with $\norm{g_\delta}_1=1$, supported on $[-\delta,\delta]$ with $\delta \le \frac{1}{6n}$, and let $b_\diag = \indicator_{[0,1]} * g_\delta$. We have $\norm{b_\diag}_1=1$, and since $b_\diag'(\tau) = g_\delta(\tau) - g_\delta(\tau-1)$, $\norm{b_\diag'}_1 = 2$. For the off-diagonal part, set $w=16n^{3/2}$ and let $b_\offdiag = \tfrac{1}{w}\indicator_{[0,w]} * g_{w/2}$ where $g_{w/2}\colon \R \to \R$ is a smooth, even, nonnegative mollifier with $\norm{g_{w/2}}_1=1$, supported on $[-w/2,w/2]$. Then $\norm{b_\offdiag}_1=1$ and $\norm{b_{\offdiag}'}_1=\frac{2}{w}$ (which we can make small by taking $w$ large). Finally, define $\beta$ according to \cref{eq:determinantal_witness,eq:product_bump}.

Let $\beta_\id(A) \coloneqq (-\frac{1}{2\pi i})^n \prod_{i=1}^n b_\diag'(A_{ii}) \prod_{i \ne j} b_\offdiag(A_{ij})$ (the $\sigma=\id$ term of the determinant), and let $\beta_{\overline\id} \coloneqq \beta-\beta_\id$ (the remaining terms). (When applying \cref{lem:main}, we use $\xi_0=\beta_\id$ and $\xi_1=\beta_{\overline\id}$). Then we can compute the correlation with the decision function as follows.

\begin{lemma}\label{lem:trace_D_integral}
$\abs{\int_{\R^{n \times n}} D(A) \, \d\beta(A)} = \binom{n-1}{\frac{n}{2}-1}/(2\pi)^n$.
\end{lemma}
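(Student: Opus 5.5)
By \cref{lem:D_integral_conv} with $d=\indicator_{[\theta,\infty)}$, the quantity to compute is $(2\pi)^{-n}\bigl|\int_\theta^\infty (b_\diag')^{*n}(\tau)\,\d\tau\bigr|$. The plan is to expand the $n$-fold convolution explicitly. Since $b_\diag' = g_\delta - \translate_1 g_\delta$, where $\translate_1$ denotes translation by $1$, and translations commute with convolution, the binomial theorem gives
\begin{align}
  (b_\diag')^{*n} = \sum_{k=0}^n \binom{n}{k} (-1)^k \, \translate_k\, g_\delta^{*n}.
\end{align}
Here $g_\delta^{*n}$ is a nonnegative, even function with unit integral, supported on $[-n\delta,n\delta]\subseteq[-\frac16,\frac16]$.

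Next, integrate each term over $[\theta,\infty)$ with $\theta=(n-1)/2$. The $k$th term is supported on $[k-\frac16,k+\frac16]$. For $n$ even, $\theta$ is a half-integer, so it lies at distance $\frac12>\frac16$ from every integer. Hence each term contributes its full mass $1$ if $k>\theta$, i.e.\ $k\ge n/2$, and contributes $0$ otherwise. The integral therefore equals $\sum_{k=n/2}^{n}(-1)^k\binom{n}{k}$.

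It remains to evaluate this alternating tail sum. Using Pascal's rule $\binom{n}{k}=\binom{n-1}{k}+\binom{n-1}{k-1}$, the sum telescopes to $\pm\binom{n-1}{n/2-1}$. Equivalently, one can use the standard identity $\sum_{k\le m}(-1)^k\binom{n}{k}=(-1)^m\binom{n-1}{m}$ together with the fact that the full alternating sum vanishes. Taking absolute values and including the prefactor $|(-\frac{1}{2\pi i})^n|=(2\pi)^{-n}$ yields $\binom{n-1}{\frac n2-1}/(2\pi)^n$.

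I expect no real obstacle here. The one point needing care is the support argument: the condition $\delta\le\frac{1}{6n}$ is exactly what keeps each shifted copy $\translate_k g_\delta^{*n}$ away from the threshold $\theta$. I would also confirm the sign and index bookkeeping in the telescoping step. Note that the off-diagonal factors and the $\sigma\ne\id$ terms have already been removed by \cref{lem:D_integral_conv}, so they play no role in this computation.
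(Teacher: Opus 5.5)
Your proposal is correct and follows the paper's proof essentially step for step. The shared steps are the binomial expansion of $(g_\delta - \translate g_\delta)^{*n}$ using commutation of translation with convolution, the support argument placing each shifted copy of $g_\delta^{*n}$ at distance at least $\frac{1}{3}$ from the half-integer threshold $\theta$, and the evaluation $\sum_{k\ge n/2}(-1)^k\binom{n}{k} = (-1)^{n/2}\binom{n-1}{\frac{n}{2}-1}$. Your Pascal-rule telescoping only spells out an identity the paper uses without derivation.
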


\begin{proof}
\cref{lem:D_integral_conv} expresses $\int_{\R^{n \times n}} D(A) \, \d\beta(A)$ in terms of $(b_\diag')^{*n}$. Let $\translate$ denote an operator that translates a function by one unit, i.e., $\translate f(\tau) = f(\tau-1)$. Since translation commutes with convolution, we have
\begin{align}
  (b_\diag')^{*n}(\tau) 
  &= \bigl( g_\delta(\tau) - \translate g_\delta(\tau) \bigr)^{*n}(\tau) \\
  &= \sum_{k=0}^n \binom{n}{k}(-1)^k \translate^k g_\delta^{*n}(\tau).
\end{align}
Observe that $g_\delta^{*n}$ is supported on $[-n\delta,n\delta]\subseteq[-\frac{1}{6},\frac{1}{6}]$ and $\norm{g_\delta^{*n}}_1=1$. We have
\begin{align}
  \int_\R d(\tau) \bigl(\translate^k g_\delta^{*n}(\tau)\bigr) \, \d\tau
  &= \int_\R d(\tau+k) g_\delta^{*n}(\tau) \, \d\tau \\
  &= \int_{\frac{n-1}{2}-k}^\infty g_\delta^{*n}(\tau) \, \d\tau \\
  &= \indicator_{k \ge n/2},
\end{align}
so
\begin{align}
  \int_{\R^{n \times n}} D(A) \, \d\beta(A) 
  &= \Bigl(-\frac{1}{2\pi i}\Bigr)^n \sum_{k \ge n/2} \binom{n}{k} (-1)^k \\
  &= \Bigl(-\frac{1}{2\pi i}\Bigr)^n (-1)^{n/2} \binom{n-1}{\frac{n}{2}-1}
\end{align}
and the first claimed identity follows.
\end{proof}

\begin{lemma}\label{lem:trace_beta_facts}
We have $\supp \beta_\id \subseteq \P$, $\norm{\beta_\id}_1 = 1/\pi^n$, and provided $\norm{b_{\offdiag}'}_1 \le \frac{1}{8} n^{-3/2}$,  $\norm{\beta_{\overline\id}}_1 \le \frac{1}{8}n^{-1/2}\norm{\beta_\id}_1$.
\end{lemma}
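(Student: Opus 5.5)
The plan is to prove the three claims in turn, working directly from the product structure of $b$ in \cref{eq:product_bump} and the explicit term-by-term form of $\beta$ given in \cref{eq:beta_sigma}.

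\emph{Support.} First I record the supports of the factors. We have $b_\diag'=g_\delta-\translate g_\delta$, which is supported on $[-\delta,\delta]\cup[1-\delta,1+\delta]$. The function $b_\offdiag=\frac1w\indicator_{[0,w]}*g_{w/2}$ is supported on $[-w/2,3w/2]$. Now take any $A\in\supp\beta_\id$. Each diagonal entry has the form $A_{ii}=k_i+\epsilon_i$ with $k_i\in\{0,1\}$ and $|\epsilon_i|\le\delta$. Hence $\tr A=k+\epsilon$ with $k\in\Z$ and $|\epsilon|\le n\delta\le\frac16$. Since $\theta=(n-1)/2$ is a half-integer (recall $n$ is even), this gives $|\theta-\tr A|\ge\frac12-\frac16=\frac13$. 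For the size bound, the diagonal entries satisfy $|A_{ii}|\le 1+\delta$, and the off-diagonal entries satisfy $|A_{ij}|\le 3w/2=24n^{3/2}$. So $\norm{A}_{\max}\le 24n^{3/2}$, and therefore $A\in\P$.

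\emph{Norm of $\beta_\id$.} Since $\beta_\id$ is a product of univariate functions, its $L^1$ norm factors. This gives
\[
\norm{\beta_\id}_1=(2\pi)^{-n}\norm{b_\diag'}_1^n\norm{b_\offdiag}_1^{n^2-n}=(2\pi)^{-n}2^n=\pi^{-n},
\]
using $\norm{b_\diag'}_1=2$ and $\norm{b_\offdiag}_1=1$.

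\emph{Bound on $\beta_{\overline\id}$.} This is the step needing the most care, though it is still routine. I apply the triangle inequality over the terms $\sigma\ne\id$ of the determinant. By \cref{eq:beta_sigma}, the term for $\sigma$ is again a product of univariate factors, so its $L^1$ norm is
\[
(2\pi)^{-n}\,2^{f(\sigma)}\,\norm{b_\offdiag'}_1^{\,n-f(\sigma)},
\]
where $f(\sigma)$ is the number of fixed points of $\sigma$. Write this as $\pi^{-n}(\norm{b_\offdiag'}_1/2)^{m}$ with $m=n-f(\sigma)\ge 2$. The number of permutations with exactly $m$ non-fixed points is at most $\binom nm m!\le n^m$. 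Setting $\eta\coloneqq n\norm{b_\offdiag'}_1/2\le\frac1{16}n^{-1/2}$, we get
\[
\norm{\beta_{\overline\id}}_1\le\pi^{-n}\sum_{m\ge2}\eta^m=\pi^{-n}\frac{\eta^2}{1-\eta}.
\]
Since $\eta\le\frac1{16}$, this is at most $\pi^{-n}\cdot\frac{16}{15}\cdot\frac{1}{256}n^{-1}$, which is at most $\frac18 n^{-1/2}\pi^{-n}=\frac18 n^{-1/2}\norm{\beta_\id}_1$. The only subtlety is the permutation-counting bound, and the crude estimate $n^m$ suffices.
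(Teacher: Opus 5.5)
Your proof is correct and follows the paper's argument step for step: the same support analysis via the half-integrality of $\theta$, the factorization of $L^1$ norms through \cref{eq:beta_sigma}, and the triangle inequality with the $n^m$ count of permutations moving $m$ points. Your one refinement is noting that a non-identity permutation moves at least two points, which gives $\pi^{-n}\eta^2/(1-\eta)=O(n^{-1})\norm{\beta_\id}_1$; the paper sums from $m=1$ to get $\pi^{-n}\eta/(1-\eta)$, which already suffices for the stated $\frac18 n^{-1/2}$ bound.
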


\begin{proof}
First, on the support of $\beta_\id$, each diagonal entry of $A$ lies in $\supp b_\diag' \subseteq [-\delta,\delta] \cup [1-\delta,1+\delta]$, so $\tr A$ is within $n\delta \le \frac{1}{6}$ of some integer $k$. Since $\theta$ is a half-integer, $\abs{\theta-\tr A} \ge \frac{1}{2}-\frac{1}{6} = \frac{1}{3}$. Moreover, $\supp b_\offdiag \subseteq [-w/2,3w/2]$, so every matrix $A \in \supp \beta_\id$ satisfies $\norm{A}_{\max} \le 24n^{3/2}$. Therefore $\supp \beta_\id \subseteq \P$.

Second, we have $\norm{\beta_\id}_1 = \norm{b_\diag'}_1^n/(2\pi)^n = 1/\pi^n$ (recall $\norm{b_\diag'}_1 = 2$).

Finally, by \cref{eq:beta_sigma},
\begin{align}
  \norm*{\prod_{i=1}^n \partial_{A_{i\sigma(i)}}b(A)}_1
  = \norm{b_\diag'}_1^{\fix(\sigma)} \norm{b_{\offdiag}'}_1^{n-\fix(\sigma)}, 
\end{align}
where $\fix(\sigma)$ is the number of fixed points of $\sigma$. Applying the triangle inequality to the expression for $\beta_{\overline\id}$ as a sum over non-identity permutations and using the fact that there are at most $n^k$ elements of $S_n$ with $n-k$ fixed points gives
 \begin{align}
  \norm{\beta_{\overline{\id}}}_1
  &\le \frac{1}{(2\pi)^n} \sum_{k=1}^n n^k \norm{b_\diag'}_1^{n-k} \norm{b_{\offdiag}'}_1^k \\
  &\le \frac{1}{\pi^n} \sum_{k=1}^n \Big(\frac{n \norm{b_{\offdiag}'}_1}{2}\Big)^k \\
  &\le \frac{1}{\pi^n} \cdot \frac{n \norm{b_{\offdiag}'}_1}{2-n \norm{b_{\offdiag}'}_1} \\
  &\le \frac{\norm{\beta_\id}_1}{8 \sqrt{n}}
\end{align}
as claimed.
\end{proof}

Combining the above results gives the desired lower bound for trace computation.

\begin{theorem}[Trace lower bound]\label{thm:trace}
The bounded-error quantum query complexity of computing the trace of an $n \times n$ real matrix in the matrix-vector query model with additive error less than $1/3$ is $\Omega(n/\log n)$, even with the promise that the entries of the matrix are at most $24n^{3/2}$ in magnitude.
\end{theorem}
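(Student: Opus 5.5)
We plan to apply \cref{lem:main} with the determinantal witness $\xi=\beta$ constructed above, split as $\xi_0=\beta_\id$ and $\xi_1=\beta_{\overline\id}$. Since $\beta$ is built from a single $n \times n$ determinant, \cref{lem:beta_rank} only guarantees vanishing on $\rankvar_{n-1}$. So $\beta$ is a rank-$2t$ dual witness whenever $2t \le n-1$, and \cref{lem:trace_beta_facts} gives $\supp\beta_\id\subseteq\P$. The parameters satisfy the hypothesis of \cref{lem:trace_beta_facts}, because $\norm{b_\offdiag'}_1 = 2/w = \frac18 n^{-3/2}$. We also need $b$ to be Schwartz: $b_\diag$ and $b_\offdiag$ are smooth and compactly supported, so the product in \cref{eq:product_bump} is Schwartz. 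Any algorithm computing the trace to additive error less than $1/3$ decides $D(A)=\indicator_{\tr A\ge\theta}$ on $\P$, since $\abs{\tr A-\theta}\ge 1/3$ there. It therefore suffices to show that a $t$-query algorithm with $t$ below roughly $n/\log n$ cannot decide $D$ with bounded error.

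The direct application of \cref{lem:main} is too weak, and the main difficulty is fixing this. \cref{lem:trace_D_integral,lem:trace_beta_facts} give a failure probability of at least
\[
\frac{\binom{n-1}{n/2-1}(2\pi)^{-n} - \frac{1}{8\sqrt n}\pi^{-n}}{\pi^{-n}} = 2^{-n}\binom{n-1}{\frac n2-1} - \frac{1}{8\sqrt n} \approx \frac{1}{\sqrt{2\pi n}} - \frac{1}{8\sqrt n} = \Omega(n^{-1/2}).
\]
This is positive but tends to zero, so it only shows that $t\ge n/2$ queries are needed to succeed with error $O(n^{-1/2})$. The remedy is amplification. Suppose a $t$-query algorithm has error at most $1/3$. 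Repeating it $O(\log n)$ times and taking the majority vote gives an algorithm with $O(t\log n)$ queries and worst-case error below $c\,n^{-1/2}$ on $\P$, for a small constant $c$ chosen below the bound above, say $c=0.2$.

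For $n$ large enough, this amplified algorithm contradicts \cref{lem:main} unless $2\cdot O(t\log n)\ge n$. Hence $t=\Omega(n/\log n)$. Small $n$ are absorbed into the constant. Odd $n$ are handled by the padding remark: embed the problem in dimension $n+1$ with a zero row and column.

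The step needing the most care is the constant bookkeeping in the $n^{-1/2}$ gap. We must check that $2^{-n}\binom{n-1}{n/2-1} = \frac12\cdot 2^{-(n-1)}\binom{n-1}{(n-1)/2\pm 1/2}$ is at least $\frac{1}{\sqrt{2\pi n}}(1-o(1))$, and that this exceeds $\frac{1}{8\sqrt n}$ by a constant factor. Since $1/\sqrt{2\pi}\approx 0.399 > 0.125$, both hold via Stirling's formula, or via the elementary bound $\binom{2m}{m}\ge 4^m/(2\sqrt m)$. Once the gap is established, the $\log n$ loss from amplification is the only slack in the argument.
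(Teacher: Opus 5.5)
Your proposal is correct and follows the paper's proof essentially step for step: the same witness $\beta$ split as $\xi_0=\beta_\id$, $\xi_1=\beta_{\overline\id}$, the same $\Omega(n^{-1/2})$ failure bound from \cref{lem:main} via \cref{lem:trace_D_integral,lem:trace_beta_facts} and a central binomial estimate, and the same $O(\log n)$ boosting step to reach $\Omega(n/\log n)$. One small slip: for odd $n$ the padding goes the other way, embedding the even $(n-1)$-dimensional hard instances into dimension $n$ so that a dimension-$n$ algorithm solves them, since embedding the dimension-$n$ problem into dimension $n+1$ only lower-bounds the dimension-$(n+1)$ complexity.
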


\begin{proof}
By \cref{lem:beta_rank}, $\hat\beta$ vanishes on $\rankvar_{n-1}$, satisfying \cref{eq:no_low_rank} in \cref{def:dual_witness} with $r=n-1$.
By the first fact in \cref{lem:trace_beta_facts}, $\xi=\beta$ is a dual witness for the given decision problem with $\xi_0=\beta_\id$ and $\xi_1=\beta_{\overline\id}$. Using \cref{lem:trace_D_integral,lem:trace_beta_facts}, \cref{eq:failprob} of \cref{lem:main} shows that the failure probability of any quantum algorithm making fewer than $n/2$ queries is at least
\begin{align}
  \frac{\frac{1}{(2\pi)^n}\binom{n-1}{\frac{n}{2}-1} - \frac{1}{8 \pi^n}n^{-1/2}}{1/\pi^n}
  \ge \frac{1}{4\sqrt n} - \frac{1}{8\sqrt n} = \frac{1}{8\sqrt n}
\end{align}
where we used the bound $\frac{1}{2^n}\binom{n-1}{\frac{n}{2}-1} = \frac{1}{2^{n+1}}\binom{n}{n/2} \ge \frac{1}{4\sqrt{n}}$ \cite[p.~309]{MS77}.
By a standard boosting argument, logarithmic overhead suffices to reduce the error probability from constant to $1/\poly(n)$, so this shows that $\Omega(n/\log n)$ queries are required for the decision problem (and therefore also to compute the trace).
\end{proof}

\section{Determinant}\label{sec:det}

We now consider the quantum query complexity of computing the determinant. In \cref{sec:det_sign} we give a simple argument showing an $\Omega(n)$ lower bound for computing the sign of the determinant, and in \cref{sec:det_magnitude} we show that $\Omega(n/\log n)$ queries are needed to compute its magnitude. The analysis relies on properties of matrices whose entries are chosen independently from $\N(0,1)$, the Gaussian distribution with mean $0$ and variance $1$.

Let
\begin{align}
  \gamma(X) \coloneqq \frac{1}{(2\pi)^{n^2/2}} e^{-\inner{X}{X}/2}
  \label{eq:gaussian}
\end{align}
denote the standard Gaussian density on $\R^{n \times n}$, so that $\int_{\R^{n \times n}} f(X) \gamma(X) \, \d{X} = \E[f(A)]$ where the entries of $A$ are drawn independently from $\N(0,1)$ (throughout \cref{sec:det}, expectations are over this distribution unless specified otherwise).

\subsection{Sign}\label{sec:det_sign}

It is straightforward to lower bound the quantum query complexity of computing the sign of the determinant. Indeed, the argument is simple enough that it avoids incurring logarithmic overhead.

The Cayley operator (\cref{eq:cayley_operator}) has a simple action on the Gaussian density $\gamma$.

\begin{lemma}\label{lem:cayley_on_gaussian}
$\det(\partial_X) e^{-\inner{X}{X}/2} = (-1)^n \det(X) e^{-\inner{X}{X}/2}$.
\end{lemma}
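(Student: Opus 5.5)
Expand the Cayley operator as in \cref{eq:cayley_operator}, writing $\det(\partial_X)=\sum_{\sigma \in S_n}\sgn(\sigma)\prod_{i=1}^n \partial_{X_{i\sigma(i)}}$, and apply it to $e^{-\inner{X}{X}/2}=\prod_{j,k} e^{-X_{jk}^2/2}$ one permutation at a time. The Gaussian factors completely over the entries $X_{jk}$. For a fixed $\sigma$, the derivatives $\partial_{X_{1\sigma(1)}},\ldots,\partial_{X_{n\sigma(n)}}$ hit $n$ distinct entries, one in each row and each column, because $\sigma$ is a bijection. Hence each derivative acts on its own univariate factor exactly once, and no second derivatives of the same variable occur.

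For a single entry, $\partial_{X_{jk}} e^{-X_{jk}^2/2} = -X_{jk}e^{-X_{jk}^2/2}$. The distinctness just noted then gives
\begin{align}
  \prod_{i=1}^n \partial_{X_{i\sigma(i)}} e^{-\inner{X}{X}/2} = (-1)^n \prod_{i=1}^n X_{i\sigma(i)} \, e^{-\inner{X}{X}/2}.
\end{align}
This uses the same observation as \cref{eq:beta_sigma}: each entry appears in exactly one factor. Multiplying by $\sgn(\sigma)$ and summing over $S_n$ recovers the Leibniz expansion $\det(X)=\sum_\sigma \sgn(\sigma)\prod_i X_{i\sigma(i)}$, which yields $(-1)^n\det(X)e^{-\inner{X}{X}/2}$.

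There is no real obstacle. The only point needing care is that the derivatives within a single term commute and act on distinct variables, so no cross terms arise. That is immediate from $\sigma$ being a permutation.
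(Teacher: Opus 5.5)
Your proposal is correct and follows the paper's proof essentially verbatim: expand $\det(\partial_X)$ via the Leibniz sum, use $\partial_{X_{ij}} e^{-\inner{X}{X}/2} = -X_{ij} e^{-\inner{X}{X}/2}$ on each term, and recognize $\det(X)$. Your explicit remark that, because $\sigma$ is a bijection, the $n$ derivatives in each term act on distinct entries (so no terms like $X_{ij}^2-1$ arise) is a step the paper leaves implicit.
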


\begin{proof}
Since $\partial_{X_{ij}} e^{-\inner{X}{X}/2} = -X_{ij} e^{-\inner{X}{X}/2}$, we have
\begin{align}
  \det(\partial_X) e^{-\inner{X}{X}/2}
  &= \sum_{\sigma \in S_n} \sgn(\sigma) \prod_{i=1}^n \partial_{X_{i\sigma(i)}} e^{-\inner{X}{X}/2} \\
  &= (-1)^n \biggl[ \sum_{\sigma \in S_n} \sgn(\sigma) \prod_{i=1}^n X_{i\sigma(i)} \biggr] e^{-\inner{X}{X}/2}.
\end{align}
The expression in square brackets is simply $\det(X)$.
\end{proof}

Define $\beta$ by \cref{eq:determinantal_witness} with $b=\gamma$. By \cref{lem:cayley_on_gaussian},
\begin{align}\label{eq:beta}
  \beta(X) 
  &= \det(-\tfrac{1}{2\pi i} \partial_X) \gamma(X)
   = \Bigl( \frac{1}{2\pi i} \Bigr)^n \det(X) \gamma(X).
\end{align}
This dual witness directly gives the lower bound for the sign of the determinant.

\begin{theorem}\label{thm:det_sign}
Deciding the sign of $\det A$ (and hence computing $\det A$) for $A \in \R^{n \times n}$ with $\abs{\det A} \ge \frac{1}{8}\E\abs{\det A}$ requires $\Omega(n)$ quantum matrix-vector queries to $A$, even with some finite upper bound on the magnitudes of the entries of $A$.
\end{theorem}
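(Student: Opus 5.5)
We apply \cref{lem:main} with the Gaussian determinantal witness $\beta(X)=(\frac{1}{2\pi i})^n\det(X)\gamma(X)$ from \cref{eq:beta}. By \cref{lem:beta_rank}, $\hat\beta(M)=\det(M)\hat\gamma(M)$ vanishes on $\rankvar_{n-1}$, so $\beta$ has the vanishing property needed for a rank-$2t$ dual witness whenever $2t\le n-1$. The decision problem is $D(A)=\indicator_{\det A>0}$, with promise set
\[
\P=\{A:\abs{\det A}\ge c\,\E\abs{\det A},\ \norm{A}_{\max}\le\Lambda\},
\]
where $c=\frac18$ and $\Lambda$ is a large finite constant depending on $n$. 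We split $\beta=\xi_0+\xi_1$ with $\xi_0=\beta\,\indicator_\P$ and $\xi_1=\beta\,\indicator_{\P^c}$. We extend $D$ to all of $\R^{n\times n}$ by the same formula.

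The correlation term is computed directly. Since $\beta$ is a constant of modulus $(2\pi)^{-n}$ times $\det(X)\gamma(X)$,
\[
\Bigl|\int D\,\d\beta\Bigr|=(2\pi)^{-n}\,\E[\det A\,\indicator_{\det A>0}]=(2\pi)^{-n}\tfrac12\E\abs{\det A}.
\]
The second equality uses the symmetry $A\mapsto -A$ for $n$ odd, or flipping the sign of one row in general, which preserves $\gamma$ and negates $\det$. We also have $\norm{\beta}_1=(2\pi)^{-n}\E\abs{\det A}$, so $\norm{\xi_0}_1\le(2\pi)^{-n}\E\abs{\det A}$.

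For the mass outside $\P$, write $\E\abs{\det A}=\mu$. The part where $\abs{\det A}<\mu/8$ contributes at most $(2\pi)^{-n}\mu/8$ to $\norm{\xi_1}_1$. The part where $\norm{A}_{\max}>\Lambda$ contributes $(2\pi)^{-n}\E[\abs{\det A}\indicator_{\norm{A}_{\max}>\Lambda}]$. This tends to $0$ as $\Lambda\to\infty$ by dominated convergence, so we choose $\Lambda$ with this term at most $(2\pi)^{-n}\mu/8$. Then $\norm{\xi_1}_1\le(2\pi)^{-n}\mu/4$. Also $\norm{\xi_0}_1>0$, because $\det A\gamma(A)$ is nonzero on a positive-measure subset of $\P$.

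By \cref{lem:main}, any algorithm with $t\le (n-1)/2$ queries has worst-case failure probability at least
\[
\frac{\frac12\mu-\frac14\mu}{\mu}=\frac14
\]
on $\P$. Since $\P$ lies within the stated promise, this rules out failure probability below $1/4$. To get a lower bound for bounded error $1/3$ without a logarithmic loss, the margin needs to be made adjustable, and I expect this to be the main obstacle. One option is to shrink $c$ and the tail contribution so the bound approaches $\frac12$. With threshold $c$ and negligible tail mass, the bound is $\frac{\frac12-c}{1}$, which is at least $\frac12-\frac18-\epsilon=\frac38-\epsilon$ when $c=\frac18$; this exceeds $\frac13$ for small $\epsilon$. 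So we keep the tail term arbitrarily small rather than $\mu/8$. The rest is bookkeeping, yielding failure probability above $1/3$ for $t<n/2$ queries and hence an $\Omega(n)$ lower bound. Computing $\det A$ to sufficient accuracy determines its sign, so the same bound applies to computing $\det A$.
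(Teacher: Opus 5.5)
Your proposal is correct and is essentially the paper's proof: the paper uses the same witness $\beta=(\frac{1}{2\pi i})^n\det(X)\gamma(X)$ and the same split $\xi_0=\beta\indicator_\P$. It also makes exactly your fix, choosing $\Lambda$ so the tail mass is at most $\frac{1}{32}\E\abs{\det A}$, which gives failure probability at least $\frac12-\frac18-\frac{1}{32}=\frac{11}{32}>\frac13$. (The ``main obstacle'' you flag is not really one: your first bound of $\frac14$ already gives $\Omega(n)$, because reducing the error from $\frac13$ to below $\frac14$ takes only a constant number of repetitions, and a logarithmic loss arises only when the failure bound is inverse-polynomial.)
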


\begin{proof}
Let $D(A) = \indicator_{\det A \ge 0}$ and $\Delta \coloneqq \frac{1}{8} \E\abs{\det A}$.
Choose a finite value of $\Lambda$ so that
\begin{align}
  \E[\abs{\det A} \indicator_{\norm{A}_{\max}>\Lambda}] \le \tfrac{1}{32}\E\abs{\det A}.
\end{align}
Let the promise be $\P = \{A \in \R^{n \times n} : \abs{\det A} \ge \Delta,\, \norm{A}_{\max} \le \Lambda\}$, and choose $\xi_0 = \beta \indicator_\P$ and $\xi_1 = \beta - \xi_0$.
We have
\begin{align}
  \int_{\R^{n \times n}} D(A) \, \d\beta(A)
  &= \frac{1}{(2\pi i)^n} \E[ \indicator_{\det A > 0} \det A ].
\end{align}
Since negating a row of $A$ flips the sign of its determinant and does not change the distribution $A$ is drawn from, $\E[ \indicator_{\det A > 0} \det A ] = - \E[ \indicator_{\det A < 0} \det A ]$, so
\begin{align}
  \E[ \indicator_{\det A > 0} \det A ]
  &= \tfrac{1}{2} \bigl(\E[ \indicator_{\det A > 0} \det A ] - \E[ \indicator_{\det A < 0} \det A ]\bigr) \\
  &= \tfrac{1}{2} \E\abs{\det A}.
\end{align}
Also, $\norm{\xi_0}_1 \le \norm{\beta}_1 = \E\abs{\det A} / (2\pi)^n$ and, by the union bound, $\norm{\xi_1}_1 \le (\Delta + \frac{1}{32} \E\abs{\det A}) / (2\pi)^n$.
By \cref{lem:beta_rank}, $\hat\beta(M) = \det(M) \hat\gamma(M)$ vanishes on $\rankvar_{n-1}$, so it is a rank-$(n-1)$ dual witness. Therefore, by \cref{lem:main} (which applies for $2t \le n-1$), any algorithm using fewer than $n/2$ queries fails on some input with probability at least
\begin{align}
  \frac{\frac{1}{2} \E\abs{\det A} - \Delta - \frac{1}{32}\E\abs{\det A}}{\E\abs{\det A}}
  &= \frac{11}{32}.
\end{align}
Therefore a bounded-error algorithm requires $\Omega(n)$ queries.
\end{proof}

\subsection{Magnitude} \label{sec:det_magnitude}

We now consider the problem of computing the magnitude of the determinant. The witness constructed in \cref{sec:det_sign} is an odd function of $\det A$, so it has no correlation with a decision problem that is a function of $\abs{\det A}$. Instead, we consider
\begin{align}
  \beta^{(2)}(X) \coloneqq \det(-\tfrac{1}{2\pi i} \partial_X)^2 \gamma(X).
  \label{eq:beta2}
\end{align}
By \cref{lem:fourier_cayley_schwartz} with $P(X) = \det(-\frac{1}{2\pi i}X)^2$, we have $\widehat{\beta^{(2)}}(M) = \det(M)^2 \hat \gamma(M)$, which vanishes for $M \in \rankvar_{n-1}$.

Define
\begin{align}
  q(X) \label{eq:qdef}
  &\coloneqq  e^{\inner{X}{X}/2} \det(\partial_X)^2 e^{-\inner{X}{X}/2} \\
  &= \gamma(X)^{-1} \det(\partial_X)^2 \gamma(X) \label{eq:q_gamma_det} \\
  &= (2\pi i)^{2n} \gamma(X)^{-1} \beta^{(2)}(X).
\end{align}
Then $\beta^{(2)}(X) = q(X)\gamma(X)/(2\pi i)^{2n}$, so we have
\begin{align}
  \frac{\abs{\int_{\R^{n \times n}} D(A) \, \d\beta^{(2)}(A)}}{\norm{\beta^{(2)}}_1}
  &= \frac{\abs{\E[q(A) D(A)]}}{\E|q(A)|}.
  \label{eq:det_ratio}
\end{align}

To bound this ratio, it is useful to compute $\E[q(A)^2]$ and $\E[q(A)(\det A)^2]$. In the calculation, we use the following standard fact about the action of the Cayley operator on powers of the determinant (see for example \cite{CSS13}).

\begin{lemma}[Cayley's identity]\label{lem:cayley_identity}
For an $n \times n$ matrix $X$ of variables $X_{ij}$ and $z \in \NN$,
\begin{align}
  \det(\partial_X)(\det X)^z = z^{(n)} (\det X)^{z-1}
\end{align}
where $z^{(n)} \coloneqq \prod_{j=0}^{n-1}(z+j)$ denotes the rising factorial.
\end{lemma}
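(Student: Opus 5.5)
The plan is to prove Cayley's identity by a direct computation that reduces the $n$-th order operator $\det(\partial_X)$ to a sum over its expansion, exploiting the structure of derivatives of $\det X$. Two routes suggest themselves. One is a Laplace-expansion induction that needs a Jacobi-type identity for derivatives of minors of powers of the determinant. The other is a Fourier/Gaussian argument. I would take the cleanest classical route. Since both sides are polynomials in the entries of $X$ (for $z\ge 1$; for $z=0$ both sides vanish because $0^{(n)}=0$ and $\det(\partial_X)1=0$), it suffices to prove the identity on the dense open set of invertible matrices.

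The first key step is an equivariance property of the operator. For $G,H\in GL_n(\R)$, substituting $X=GYH$ and applying the chain rule gives $\partial_Y = G\transp \partial_X H\transp$ as matrices of operators. Because the $\partial_{X_{ij}}$ commute, the determinant of this operator matrix is multiplicative, so $\det(\partial_Y)=\det G\det H\,\det(\partial_X)$. Now set $F(X)=\det(\partial_X)(\det X)^z$, which is a polynomial. Using $(\det GYH)^z = (\det G\det H)^z(\det Y)^z$, we obtain
\begin{align}
F(GYH)=(\det G\det H)^{z-1}F(Y).
\end{align}
In particular, $F$ is invariant under $X\mapsto GXH$ whenever $\det G\det H=1$. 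Since $SL_n\times SL_n$ acts transitively on each level set $\{\det X = c\}$ with $c\neq 0$, $F$ is constant on every such level set. Combined with homogeneity ($F$ is homogeneous of degree $n(z-1)$), this forces $F(X)=c_{n,z}(\det X)^{z-1}$ on invertible matrices, and hence everywhere by continuity.

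The remaining step, which I expect to be the main obstacle, is computing the constant $c_{n,z}=z^{(n)}$. To do so, I would evaluate at a diagonal matrix, or equivalently compute $F(I)$. The plan is to induct on $n$ via a Laplace expansion of $\det(\partial_X)$ along the last row, evaluated at $X=\mathrm{diag}(x_1,\dots,x_n)$. Alternatively, one can work with the restriction to the block form $X=\begin{pmatrix}Y & u\\ v\transp & s\end{pmatrix}$ and use the Schur complement $\det X=\det Y\,(s-v\transp Y^{-1}u)$. Applying the $s$-derivative together with the mixed $u,v$ derivatives (which pair up through the expansion) produces a factor $(z+n-1)$ times $\det(\partial_Y)$ applied to $(\det Y)^{z}$ at the next level. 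This yields the recursion $c_{n,z}=(z+n-1)c_{n-1,z}$, with $c_{1,z}=z$, so $c_{n,z}=z^{(n)}$. If this bookkeeping proves messy, a fallback is a Gaussian-integral check. Pairing both sides against $e^{-\inner{X}{X}/2}$ and integrating by parts as in \cref{lem:cayley_on_gaussian} gives $(-1)^n\E[\det(A)(\det A)^z]$. This must equal $c_{n,z}\E[(\det A)^{z-1}]$, and known Gaussian determinant moments (the Bartlett decomposition $\E[(\det A)^{2m}]=\prod_j \E[\chi^{2m}_{n-j}]$) determine $c_{n,z}$ for even $z-1$. Since $c_{n,z}$ is a polynomial in $z$ (by expanding the derivatives), agreement at infinitely many $z$ suffices.
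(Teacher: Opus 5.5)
The paper gives no proof of this lemma. It quotes Cayley's identity as a standard fact and refers to \cite{CSS13}, where Cayley-type identities are proved by algebraic and combinatorial (Grassmann-integral) methods. Your argument takes a genuinely different, self-contained route, and it is correct. Since $G,H$ are constant, $\partial_Y=G\transp\partial_X H\transp$ is a matrix over the commutative ring of constant-coefficient operators, so $\det(\partial_Y)=\det G\det H\det(\partial_X)$ and $F(GYH)=(\det G\det H)^{z-1}F(Y)$. Setting $Y=H=\1$ gives $F(G)=(\det G)^{z-1}F(\1)$ directly, so the detour through level sets and homogeneity is unnecessary; for even $n$, homogeneity alone would not connect $\{\det X>0\}$ with $\{\det X<0\}$ anyway. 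Your recursion also checks out, though you should write out the bookkeeping:
\begin{itemize}
\item Write $\det(\partial_X)=\partial_s\det(\partial_Y)-\partial_v\transp\mathrm{adj}(\partial_Y)\partial_u$ and evaluate at $u=v=0$. The first term gives $zs^{z-1}\det(\partial_Y)(\det Y)^z$.
\item In the second term, only the part of $(s\det Y-v\transp\mathrm{adj}(Y)u)^z$ linear in the bilinear form survives. So $\partial_{v_i}\partial_{u_j}$ yields $-z(s\det Y)^{z-1}\mathrm{adj}(Y)_{ij}=-s^{z-1}\partial_{Y_{ji}}(\det Y)^z$.
\item The identity $\sum_{i,j}\mathrm{adj}(\partial_Y)_{ij}\partial_{Y_{ji}}=(n-1)\det(\partial_Y)$ (Laplace expansion along each row) turns this into $(n-1)s^{z-1}\det(\partial_Y)(\det Y)^z$, giving $c_{n,z}=(z+n-1)c_{n-1,z}$.
\end{itemize}
In the Gaussian fallback you dropped a sign. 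The adjoint of $\det(\partial_X)$ contributes $(-1)^n$ and \cref{lem:cayley_on_gaussian} contributes another, so the pairing equals $\E[(\det A)^{z+1}]$. With your sign you would get $(-1)^n z^{(n)}$. Once corrected, $\E[(\det A)^{2m+2}]/\E[(\det A)^{2m}]=\prod_{j=1}^n(j+2m)=z^{(n)}$ for $z=2m+1$, and polynomiality of $F(\1)$ in $z$ finishes it.

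As for what each approach buys: the citation is brief and covers many variants at once. Your argument is elementary and extends almost verbatim to complex $z$ on $\{\det X>0\}$ with the real-logarithm branch: use $G,H$ of positive determinant, and note that $F(\1)$ is still the same polynomial in $z$. That complex-$z$ version is exactly what the paper later invokes in \cref{lem:inv_scalar_derivative} via \cite[Section 2.7]{CSS13}.
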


\begin{lemma}\label{lem:q_expectations}
$\E[q(A)^2]=\E[q(A)(\det A)^2]=(n+1)!n!$.
\end{lemma}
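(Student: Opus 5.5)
The plan is to move both copies of the Cayley operator off the Gaussian and onto the other factor by integration by parts, and then evaluate the result with Cayley's identity (\cref{lem:cayley_identity}). By \cref{eq:q_gamma_det}, for any polynomial $f$ on $\R^{n \times n}$,
\begin{align}
  \E[q(A) f(A)] = \int_{\R^{n \times n}} f(X) \det(\partial_X)^2 \gamma(X) \, \d{X} = \int_{\R^{n \times n}} \gamma(X) \det(\partial_X)^2 f(X) \, \d{X}.
\end{align}
Here each of the $2n$ partial derivatives is moved across by integration by parts. The boundary terms vanish because $\gamma$ times a polynomial decays rapidly. Each transfer contributes a factor $-1$, and there are $2n$ of them, so the total sign is $(-1)^{2n}=1$. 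Consequently $\E[q(A)f(A)] = \E[(\det(\partial_X)^2 f)(A)]$.

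For the second expectation, take $f=(\det X)^2$ and apply \cref{lem:cayley_identity} twice. With $z=2$ we get $\det(\partial_X)(\det X)^2 = 2^{(n)}\det X = (n+1)!\det X$. With $z=1$ we get $\det(\partial_X)\det X = 1^{(n)} = n!$. Hence $\det(\partial_X)^2(\det X)^2 = (n+1)!\,n!$ is a constant, and its Gaussian expectation is exactly $(n+1)!\,n!$.

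For the first expectation, take $f=q$, which is a polynomial. It then suffices to show that $q(X) = (\det X)^2 + (\text{terms of total degree} < 2n)$. The reason is that $\det(\partial_X)^2$ is a homogeneous differential operator of order $2n$. It therefore annihilates every polynomial of degree below $2n$, which gives $\det(\partial_X)^2 q = \det(\partial_X)^2(\det X)^2 = (n+1)!\,n!$, and we conclude as before.

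To identify the top-degree part of $q$, conjugate the operator by the Gaussian weight: $e^{\inner{X}{X}/2}\,\partial_{X_{ij}}\,e^{-\inner{X}{X}/2} = \partial_{X_{ij}} - X_{ij}$ as operators. The shifted operators $\partial_{X_{ij}}-X_{ij}$ pairwise commute, since
\begin{align}
  [\partial_{X_{ij}}, -X_{kl}] + [-X_{ij}, \partial_{X_{kl}}] = -\delta + \delta = 0,
\end{align}
where $\delta$ is $1$ if $(i,j)=(k,l)$ and $0$ otherwise. So the determinant $\det(\partial_X - X)$ is well defined, and $q = \det(\partial_X - X)^2\,1$. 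Expanding this product, each factor either multiplies by $-X_{ij}$, which raises the degree by one, or differentiates, which lowers it. The only terms of degree $2n$ therefore come from choosing $-X_{ij}$ in every factor, giving $\det(-X)^2 = (\det X)^2$.

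The main obstacle is justifying this leading-term claim carefully, in particular the commutation that makes $\det(\partial_X - X)$ unambiguous. Once that is in place, the rest is the routine integration by parts and two applications of Cayley's identity.
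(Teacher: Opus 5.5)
Your proposal is correct and follows the paper's proof essentially step for step. Both integrate by parts to get $\E[q f] = \E[\det(\partial_X)^2 f]$, write $q = \det(\partial_X - X)^2 1$ to see that it is $(\det X)^2$ plus lower-degree terms killed by $\det(\partial_X)^2$, and finish with two applications of Cayley's identity. Your commutation check is harmless but not really needed: conjugation by $e^{\inner{X}{X}/2}$ is an algebra homomorphism, so it preserves any fixed ordering of the factors.
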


\begin{proof}
For any polynomial $p(A)$, we have
\begin{align}
  \E[p(A) q(A)]
  &= \int_{\R^{n \times n}} p(A) q(A) \gamma(A) \, \d{A} \\
  &= \int_{\R^{n \times n}} p(A) \det(\partial_A)^2 \gamma(A) \, \d{A}.
  \label{eq:Epq_integral}
\end{align}
Using integration by parts, for smooth functions $f,g\colon \R^{n \times n} \to \R$ for which the integrals converge and the boundary terms vanish, we have
\begin{align}
  \int_{\R^{n \times n}} \bigl((\partial_{X_{ij}}) f(X) \bigr) g(X) \, \d{X} 
  = -\int_{\R^{n \times n}} f(X) \bigl((\partial_{X_{ij}}) g(X) \bigr) \, \d{X}.
\end{align}
Since $\det(\partial_X)$ is a sum of products of $n$ such operators,
\begin{align}\label{eq:det_adjoint}
  \int_{\R^{n \times n}} \bigl(\det(\partial_X) f(X) \bigr) g(X) \, \d{X} 
  = (-1)^n \int_{\R^{n \times n}} f(X) \bigl(\det(\partial_{X}) g(X) \bigr) \, \d{X},
\end{align}
and correspondingly,
\begin{align}
  \int_{\R^{n \times n}} \bigl(\det(\partial_X)^2 f(X) \bigr) g(X) \, \d{X} 
  = \int_{\R^{n \times n}} f(X) \bigl(\det(\partial_{X})^2 g(X) \bigr) \, \d{X}.
\end{align}
Using this in \cref{eq:Epq_integral}, we see that
\begin{align}
  \E[p(A) q(A)] = \E[\det (\partial_A)^2 p(A)].
  \label{eq:Epq}
\end{align}
Since $\det(\partial_A)^2$ is a differential operator of order $2n$ with no lower-order terms, it annihilates any polynomial $p$ of degree less than $2n$.

We claim that $q(A)$ is $(\det A)^2$ plus a polynomial of degree at most $2n-2$. To see this, observe that for any polynomial $r(X)$, we have 
\begin{align}
  \partial_{X_{ij}}\bigl(r(X) e^{-\inner{X}{X}/2}\bigr) 
  = (\partial_{X_{ij}}r(X) - X_{ij}r(X)) e^{-\inner{X}{X}/2}.
\end{align}
Therefore
\begin{align}
  q(X) 
  &= e^{\inner{X}{X}/2} \det(\partial_X)^2 e^{-\inner{X}{X}/2} \\
  &= \det(\partial_X - X)^2 1.
\end{align}
Expanding this as a sum of monomials, each factor of $-X_{ij}$ increases the degree, while each factor of $\partial_{X_{ij}}$ lowers it. The degree-$2n$ part of $q(X)$ comes from taking $-X_{ij}$ in every factor, giving $\det(-X)^2=\det(X)^2$. Every other term contains at least one derivative, so it has lower degree (at most $2n-2$ since the next-highest-degree term has $2n-1$ contributions that increase the degree and one contribution that decreases it).

Thus, by taking $p(A)=q(A)$ in \cref{eq:Epq} and using the fact that $\det(\partial_A)^2$ annihilates all terms other than $(\det A)^2$, we find $\E[q(A)^2] = \E[\det(\partial_A)^2 q(A)] = \E[\det(\partial_A)^2 (\det A)^2]$.
Instead taking $p(A) = (\det A)^2$ in \cref{eq:Epq} gives $\E[q(A) (\det A)^2] = \E[\det(\partial_A)^2 (\det A)^2]$, which is identical.
Finally, applying \cref{lem:cayley_identity} twice, we have $\det(\partial_A)^2(\det A)^2 = 2^{(n)} \det(\partial_A) \det A = 2^{(n)}1^{(n)} = (n+1)!n!$.
\end{proof}

We use this result to bound the ratio in \cref{eq:det_ratio} for a suitable decision function.

\begin{lemma}\label{lem:det_mag_ratio}
For $D_\theta(A) = \indicator_{(\det A)^2 \ge \theta}$, we have
\begin{align}\label{eq:det_mag_ratio}
  \sup_{\theta > 0} \frac{\abs{\E[q(A) D_\theta(A)]}}{\E |q(A)|} 
  &\ge \frac{\E[q(A)^2]^{3/2}}{8 \E[(\det A)^6]}
  = \frac{6}{\sqrt{n+1}(n+2)^2(n+3)(n+4)} = \Omega(n^{-9/2}).
\end{align}
\end{lemma}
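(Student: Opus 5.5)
The plan is a layer-cake decomposition of $(\det A)^2$, a truncation at a level $T$ chosen to balance two error terms, and then an explicit computation of Gaussian determinant moments.

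Write $Y \coloneqq (\det A)^2 \ge 0$, $S \coloneqq \E[q(A)^2]$ and $M \coloneqq \sup_{\theta>0}\abs{\E[q(A)D_\theta(A)]}$. By Cauchy--Schwarz, $\E\abs{q(A)} \le \sqrt{S}$. So it suffices to show
\begin{align}
  M \ge \frac{S^2}{8\,\E[Y^3]}.
\end{align}
To get this, use $\E[q(A)Y] = S$ from \cref{lem:q_expectations} together with the layer-cake identity $Y = \int_0^\infty D_\theta(A)\,\d\theta$. Fix a truncation level $T>0$ and split $Y = \int_0^T D_\theta\,\d\theta + (Y-T)_+$. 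Exchanging expectation and integral (justified because $\E[\abs{q}Y]<\infty$) gives
\begin{align}
  S = \int_0^T \E[q D_\theta]\,\d\theta + \E[q\,(Y-T)_+] \le T M + \sqrt{S}\,\sqrt{\E[(Y-T)_+^2]},
\end{align}
where Cauchy--Schwarz was applied to the second term.

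The key elementary inequality controls the tail by the third moment. Maximizing $(y-T)^2/y^3$ over $y \ge T$ gives the maximum at $y = 3T$, hence
\begin{align}
  (Y-T)_+^2 \le \frac{4Y^3}{27T}.
\end{align}
Choose $T = 16\,\E[Y^3]/(27S)$. Then the tail term is at most $\sqrt{S}\cdot\sqrt{S/4} = S/2$, so $TM \ge S/2$. This yields
\begin{align}
  M \ge \frac{S}{2T} = \frac{27 S^2}{32\,\E[Y^3]} \ge \frac{S^2}{8\,\E[Y^3]},
\end{align}
which gives the first inequality in \cref{eq:det_mag_ratio}.

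For the explicit formula, compute $\E[(\det A)^6]$ for a Gaussian matrix.
\begin{itemize}
  \item Gram--Schmidt on the columns (the Bartlett decomposition) shows that $\abs{\det A}$ is distributed as $\prod_{j=1}^n \chi_j$ with independent $\chi_j \sim \chi$-distributed with $j$ degrees of freedom.
  \item Since $\E[\chi_j^6] = j(j+2)(j+4)$, we get
  \begin{align}
    \E[(\det A)^6] = \prod_{j=1}^n j(j+2)(j+4) = n!\cdot\frac{(n+2)!}{2}\cdot\frac{(n+4)!}{24}.
  \end{align}
  \item Substitute $S = (n+1)!\,n!$ and use $\bigl((n+1)!\,n!\bigr)^{3/2} = (n!)^3 (n+1)^{3/2}$.
  \item Cancelling factorials then leaves exactly $6/\bigl(\sqrt{n+1}\,(n+2)^2(n+3)(n+4)\bigr) = \Omega(n^{-9/2})$.
\end{itemize}

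The main obstacle is the tail control. A naive Markov-type bound such as $(Y-T)_+ \le Y^2/(4T)$ would require $\E[(\det A)^8]$. The third-moment bound above is what is needed to land on $\E[(\det A)^6]$ as stated. The rest is bookkeeping with factorials.
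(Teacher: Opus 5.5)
Your proof is correct and follows essentially the same route as the paper. Both use the layer-cake identity $(\det A)^2=\int_0^\infty D_\theta(A)\,\d\theta$ together with $\E[q(A)(\det A)^2]=\E[q(A)^2]$, truncate at a level where the Cauchy--Schwarz tail term is at most half of $\E[q(A)^2]$, divide by $\E\abs{q(A)}\le\sqrt{\E[q(A)^2]}$, and evaluate $\E[(\det A)^6]$ via the chi-square product. The differences are minor: your pointwise bound $(Y-T)_+^2\le 4Y^3/(27T)$ is sharper than the paper's $(Y-\Theta)_+\le Y\indicator_{Y\ge\Theta}$ followed by $\indicator_{Y\ge\Theta}\le Y/\Theta$, giving $27/32$ in place of $1/8$ (though the cruder route already lands on the sixth moment, so it is not the only way to do so), and bounding $\int_0^T$ directly by $T$ times the supremum over $\theta>0$ spares you the paper's separate check that its averaging point $\theta^*$ is nonzero.
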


\begin{proof}
We begin by computing
\begin{align}
  \int_0^\infty \E[q(A) D_\theta(A)] \, \d\theta
  &= \int_{\R^{n \times n}} q(A) \gamma(A) \int_0^\infty D_\theta(A) \, \d\theta \, \d{A} \\
  &= \E[q(A) (\det A)^2] \\
  &= (n+1)!n!,
\end{align}
where the interchange of the integral over $\theta$ and the expectation is justified by Fubini's theorem, since $\E[\abs{q(A)}(\det A)^2] \le \sqrt{\E[q(A)^2] \E[(\det A)^4]} < \infty$, and the last step uses \cref{lem:q_expectations}.

Now we bound the contribution to the integral from large values of $\theta$. For any $\Theta>0$, we have
\begin{align}
  \int_\Theta^\infty \abs{\E[q(A) D_\theta(A)]} \,\d\theta
  &\le \int_\Theta^\infty \E[ |q(A)| D_\theta(A)] \, \d\theta \\
  &= \E[ |q(A)| \max\{(\det A)^2 - \Theta,0\} ] \\
  &\le \E[ |q(A)| (\det A)^2 \indicator_{(\det A)^2 \ge \Theta}] \\
  &\le \sqrt{\E[q(A)^2] \E[(\det A)^4 \indicator_{(\det A)^2\ge\Theta}]} & \text{(Cauchy-Schwarz)} \\
  &\le \sqrt{(n+1)!n! \E[(\det A)^6]/\Theta}. & \text{($\indicator_{(\det A)^2 \ge \Theta} \le (\det A)^2/\Theta$)}
  \label{eq:large_theta_bound}
\end{align}

Choosing $\Theta = 4 \E[(\det A)^6]/((n+1)!n!)$, the bound in \cref{eq:large_theta_bound} is $\frac{1}{2}(n+1)!n!$. Then we have $\int_0^\Theta {\E[q(A) D_\theta(A)]} \,\d\theta \ge \frac{1}{2}(n+1)!n!$, so there is some $\theta^* \in [0,\Theta]$ such that
\begin{align}
  \E[q(A) D_{\theta^*}(A)]
  &\ge \frac{1}{\Theta} \int_0^\Theta {\E[q(A) D_\theta(A)]} \,\d\theta
  \ge \frac{((n+1)!n!)^2}{8 \E[(\det A)^6]}.
\end{align}
Since $D_0=1$ and $\E[q(A)D_0(A)] = 0$ by \cref{eq:Epq} with $p=1$, we have $\theta^*>0$.
Dividing by $\E \abs{q(A)} \le \sqrt{\E[q(A)^2]} = \sqrt{(n+1)!n!}$ (by \cref{lem:q_expectations}) gives
\begin{align}
  \frac{\abs{\E[q(A) D_{\theta^*}(A)]}}{\E |q(A)|}
  &\ge \frac{((n+1)!n!)^{3/2}}{8 \E[(\det A)^6]}.
  \label{eq:det_ratio_six}
\end{align}

To compute the denominator, we use the fact that for a matrix with independent Gaussian entries, $(\det A)^2$ is distributed as a product $\prod_{j=1}^n \chi_j^2$ of independent chi-squared variables, where $\chi_j^2$ has $j$ degrees of freedom \cite[Theorem 3.2.15]{Mui82}. Since $\E[(\chi^2_j)^k] = \prod_{i=0}^{k-1}(j+2i)$, we have
\begin{align}
  \E[(\det A)^6] = \prod_{j=1}^n j(j+2)(j+4) = \frac{1}{48} n! (n+2)! (n+4)!.
\end{align}

Finally, using this calculation in \cref{eq:det_ratio_six} gives
\begin{align}
  \frac{\abs{\E[q(A) D_{\theta^*}(A)]}}{\E |q(A)|}
  &\ge \frac{6((n+1)!n!)^{3/2}}{n!(n+2)!(n+4)!},
\end{align}
which simplifies to \cref{eq:det_mag_ratio}.
\end{proof}

Using this bound on the ratio, we obtain the desired lower bound on the query complexity of the magnitude of the determinant.

\begin{theorem}\label{thm:det_magnitude}
Computing $\abs{\det A}$ (to within some $n$-dependent additive error $\varepsilon_n$) requires $\Omega(n/\log n)$ quantum matrix-vector queries to $A \in \R^{n \times n}$, even with some finite upper bound on the magnitudes of the entries of $A$.
\end{theorem}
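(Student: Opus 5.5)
We apply \cref{lem:main} with $\xi = \beta^{(2)}$ and the decision problem $D_{\theta^*}(A) = \indicator_{(\det A)^2 \ge \theta^*}$, where $\theta^*>0$ is a threshold achieving the bound of \cref{lem:det_mag_ratio}. By \cref{lem:fourier_cayley_schwartz}, $\widehat{\beta^{(2)}}(M) = \det(M)^2\hat\gamma(M)$ vanishes on $\rankvar_{n-1}$, so $\beta^{(2)}$ satisfies \cref{eq:no_low_rank} with $r=n-1$, and \cref{lem:main} applies to any algorithm with $t<n/2$ queries.

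By \cref{eq:det_ratio} and \cref{lem:det_mag_ratio}, the correlation ratio $\abs{\int D_{\theta^*}\,\d\beta^{(2)}}/\norm{\beta^{(2)}}_1$ is at least $c_n = \Omega(n^{-9/2})$. To obtain a promise with a gap and bounded entries, choose a gap parameter $\eta>0$ and an entry bound $\Lambda$. The promise is
\begin{align}
  \P = \{A : \abs{(\det A)^2-\theta^*} \ge \eta,\ \norm{A}_{\max}\le\Lambda\}.
\end{align}
Split $\xi_0 = \beta^{(2)}\indicator_\P$ and $\xi_1 = \beta^{(2)}-\xi_0$. Then $\norm{\xi_1}_1$, measured relative to $\norm{\beta^{(2)}}_1$, equals $\E[\abs{q(A)}\indicator_{A\notin\P}]/\E\abs{q(A)}$.

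Using Cauchy--Schwarz together with $\E[q(A)^2]=(n+1)!n!$ from \cref{lem:q_expectations}, this is bounded by $\sqrt{(n+1)!n!\,\Pr[A\notin\P]}/\E\abs{q(A)}$. So it suffices to make $\Pr[A\notin\P]$ small. The term $\Pr[\norm{A}_{\max}>\Lambda]$ goes to $0$ as $\Lambda\to\infty$. The term $\Pr[\abs{(\det A)^2-\theta^*}<\eta]$ goes to $0$ as $\eta\to0$, because $(\det A)^2$ has an absolutely continuous distribution: it is a product of independent chi-squared variables.

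We also need $\E\abs{q(A)}>0$. This holds since $q$ is a nonzero polynomial, as its top-degree part is $(\det A)^2$. Hence, for fixed $n$, we can choose $\eta_n,\Lambda_n$ so that $\norm{\xi_1}_1 \le \frac{c_n}{2}\norm{\beta^{(2)}}_1$. \cref{lem:main} then gives worst-case failure probability at least $c_n/2 = \Omega(n^{-9/2})$ for any algorithm using fewer than $n/2$ queries.

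Next, convert this to a statement about $\abs{\det A}$. An algorithm computing $\abs{\det A}$ to additive error $\varepsilon_n$ decides $D_{\theta^*}$ on $\P$ whenever $\varepsilon_n$ is small enough. The requirement is that $\abs{\det A}$ lies outside the interval $(\sqrt{\theta^*-\eta},\sqrt{\theta^*+\eta})$, so we take $\varepsilon_n$ less than half the distance from $\sqrt{\theta^*}$ to either endpoint. If $\eta\ge\theta^*$, shrink $\eta$ first. The decision is made by comparing the estimate to $\sqrt{\theta^*}$.

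Finally, apply boosting. Repeating a bounded-error algorithm $O(\log n)$ times and taking a median brings its failure probability below $c_n/2 = \Omega(n^{-9/2})$. So $t\cdot O(\log n)\ge n/2$, which gives $t=\Omega(n/\log n)$.

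The main obstacle is controlling $\norm{\xi_1}_1$, since the correlation is only polynomially small and the excluded mass must be smaller still. The Cauchy--Schwarz route handles this only because $\eta_n,\Lambda_n,\varepsilon_n$ are allowed to depend on $n$ non-explicitly, which the statement permits. A secondary subtlety is that $\E\abs{q(A)}$ may be much smaller than $\sqrt{\E[q(A)^2]}$. Our absolute-continuity argument sidesteps this, because for fixed $n$ the ratio $\E\abs{q(A)}/\sqrt{\E[q(A)^2]}$ is a fixed positive number and $\Pr[A\notin\P]$ can be made arbitrarily small relative to it.
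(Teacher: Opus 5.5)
Your proposal is correct and follows essentially the same route as the paper: the witness $\beta^{(2)}$ with the threshold $\theta^*$ from the proof of the ratio lemma, restriction to a gapped, entry-bounded promise set, absolute continuity to keep the excluded mass below $r_n/2$ times $\|\beta^{(2)}\|_1$, and boosting with logarithmic overhead. The differences are cosmetic. You put the gap on $(\det A)^2$ rather than on $|\det A|$, and you bound the excluded $|q|$-weighted mass via Cauchy--Schwarz and Gaussian probabilities, whereas the paper chooses $\epsilon_n$ and $\Lambda$ directly in terms of the $|q|$-weighted mass.
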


\begin{proof}
Choose $\theta^*$ as in the proof of \cref{lem:det_mag_ratio}. We have $\theta^*>0$ because the correlation vanishes at $\theta=0$ (then $D_0=1$, and $\E[q(A)]=0$ by \cref{eq:Epq} with $p=1$), whereas the correlation in \cref{eq:det_ratio_six} is positive. Let
\begin{align}
  r_n 
  &\coloneqq \frac{6}{\sqrt{n+1}(n+2)^2(n+3)(n+4)} \\
  \epsilon_n 
  &\coloneqq \frac{1}{2} \sup\left\{ \epsilon \in (0,\sqrt{\theta^*}/2) :
  \frac{\E\bigl[\abs{q(A)} \indicator_{\abs{\abs{\det A} - \sqrt{\theta^*}}<\epsilon}\bigr]}{\E \abs{q(A)}} \le \frac{r_n}{4} \right\}.
\end{align}
The value $\epsilon_n$ is well-defined because the weighted mass of matrices $A$ with $\abs{\det A}=\sqrt{\theta^*}$ is zero, so absolute continuity of the determinant implies that the set of admissible positive $\epsilon$ above is nonempty.
Also, choose a finite $\Lambda$ so that $\E[\abs{q(A)}\indicator_{\norm{A}_{\max}>\Lambda}] < \frac{r_n}{4}\E\abs{q(A)}$.

Consider the problem of determining whether $\abs{\det A} \ge \sqrt{\theta^*} + \epsilon_n$ or $\abs{\det A} \le \sqrt{\theta^*} - \epsilon_n$, i.e., let $\P = \{A \in \R^{n \times n} : (\abs{\det A} \ge \sqrt{\theta^*} + \epsilon_n \text{~or~} \abs{\det A} \le \sqrt{\theta^*} - \epsilon_n),\, \norm{A}_{\max}\le\Lambda\}$.
Consider the decision function $D_{\theta^*}$ on all of $\R^{n \times n}$, which agrees with the promise problem on $\P$.
Let $\xi_0$ and $\xi_1$ be the restrictions of $\beta^{(2)}$ (defined in \cref{eq:beta2}) to $\P$ and its complement, respectively. The Fourier transform of $\beta^{(2)}$ vanishes on $\rankvar_{n-1}$ by the calculation following \cref{eq:beta2}. By the choice of truncation parameters, $\norm{\xi_1}_1 \le r_n \norm{\beta^{(2)}}_1/2$. Therefore, \cref{lem:main,lem:det_mag_ratio} show that any algorithm making fewer than $n/2$ queries has failure probability at least
\begin{align}
  \frac{r_n \norm{\beta^{(2)}}_1 - \norm{\xi_1}_1}{\norm{\xi_0}_1} \ge \frac{r_n}{2} = \Omega(n^{-9/2}),
\end{align}
so by a standard boosting argument, the quantum query complexity of the decision problem is $\Omega(n/\log n)$.
Finally, setting $\varepsilon_n = \epsilon_n/3$, an estimate with additive error $\varepsilon_n$ serves to solve this decision problem.
\end{proof}

Note that this proof does not establish particular scaling of the additive error with $n$ for which the determinant computation is hard. Since our main goal is to understand the inversion problem, we leave the error scaling of determinant estimation for future work.

\section{Matrix inversion and convex optimization}\label{sec:inv}

Finally, we lower bound the quantum matrix-vector query complexity of computing the first column of the inverse of an invertible matrix $A \in \R^{(n+1) \times (n+1)}$ (we increase the dimension of the matrix by $1$ for notational simplicity in the proof) and thereby lower bound the quantum query complexity of convex optimization.
In \cref{sec:inv_witness}, we describe the determinantal witness for matrix inversion and state a key correlation bound that it satisfies. We reduce this bound to the expectation of a scalar function in \cref{sec:inv_scalar} and then establish the bound in \cref{sec:inv_corr} using standard properties of Gaussian-distributed matrices. In \cref{sec:inv_bound}, we use these results to prove the matrix inversion query lower bound. Finally, in \cref{sec:inv_opt} we apply this result to show hardness of quadratic optimization, which involves showing hardness of matrix inversion for positive semidefinite matrices and defining an associated quadratic function whose minimum is within the unit ball.

Motivated by the proof of Theorem 24 in \cite{CHL21}, consider a block matrix of the form
\begin{align}
  A = \begin{pmatrix} a & u\transp \\ v & B \end{pmatrix} \in \R^{(n+1) \times (n+1)}
\end{align}
where $a \in \R$, $u,v \in \R^n$, and $B \in \R^{n\times n}$.
Provided $B$ is invertible, the Schur complement gives
\begin{align}
  (A^{-1})_{11} &= \bigl(a - u\transp B^{-1} v \bigr)^{-1}.
\end{align}
We consider the task of computing $(A^{-1})_{11}$, and more precisely, deciding whether its magnitude is larger than some threshold $\theta$, as indicated by the decision function
\begin{align}
  D_\theta(A) = \indicator_{\abs{(A^{-1})_{11}} \ge \theta}.
\end{align}
We set $D_\theta(A)=0$ when $A$ is singular.

\subsection{The witness}\label{sec:inv_witness}

Letting $I \coloneqq \{2,\ldots,n+1\}$, consider the witness
\begin{align}
  \beta^{(2)}_I(A)
  &\coloneqq \det((-\tfrac{1}{2\pi i} \partial_A)_{I,I})^2 \gamma(A)
\end{align}
where the determinant with subscript $I,I$ denotes a minor as in \cref{eq:beta_minor}, $\gamma(A)$ is the Gaussian defined in \cref{eq:gaussian} (with $n$ replaced by $n+1$), and $q(B)$ is defined as in \cref{eq:qdef}.
By \cref{eq:q_gamma_det}, since the Gaussian density factors as a product over $B$ and the border variables $a,u,v$, and $\det((\partial_X)_{I,I})$ differentiates only the former,
\begin{align}
  \beta^{(2)}_I(A)
  = \frac{1}{(2\pi i)^{2n}} q(B) \gamma(A).
\end{align}
By \cref{lem:fourier_cayley_schwartz} with $P(X) = \det((-\frac{1}{2\pi i} X)_{I,I})^2$, we have
\begin{align}
  \widehat{\beta^{(2)}_I}(M) = \det(M_{I,I})^2 \hat \gamma(M),
  \label{eq:det_beta_hat}
\end{align}
which vanishes for $M \in \rankvar_{n-1}$.

Define
\begin{align}
  P_\theta(B) \coloneqq \Pr_{a,u,v}[\abs{(A^{-1})_{11}} \ge \theta]
  \label{eq:P_theta}
\end{align}
where $a$ and the entries of $u,v\in\R^n$ are chosen independently from $\N(0,1)$ and, here and throughout \cref{sec:inv}, expectations are over a matrix $B$ with entries chosen independently from $\N(0,1)$, unless specified otherwise.

The following lower bound on the witness correlation is the key technical element of the proof.

\begin{lemma}\label{lem:inv_correlation}
For any fixed $\theta \ge 5$, there is a constant $c_\theta$ such that for all sufficiently large $n$,
\begin{align}
  \abs*{\int_{\R^{(n+1) \times (n+1)}} D_\theta(A) \, \d\beta^{(2)}_I(A)}
  &= \frac{\E[q(B) P_\theta(B)]}{(2\pi)^{2n}}
  \ge \frac{c_\theta}{(2\pi)^{2n}} \frac{(n-1)!}{n}.
  \label{eq:inv_correlation}
\end{align}
\end{lemma}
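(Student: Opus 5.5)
The plan is to integrate out the border variables first. Then I would reduce $P_\theta(B)$ to a function of one scalar Gaussian attached to $B$, and finally use the Hermite-like structure of $q$ to evaluate the correlation against that scalar.

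\textbf{Step 1 (the equality).} Since $\gamma(A)=\gamma(a)\gamma(u)\gamma(v)\gamma(B)$ and $\beta^{(2)}_I(A)=q(B)\gamma(A)/(2\pi i)^{2n}$, Fubini gives
\begin{align}
\int D_\theta(A)\,\d\beta^{(2)}_I(A)=\frac{\E[q(B)P_\theta(B)]}{(2\pi i)^{2n}}.
\end{align}
Here $(2\pi i)^{2n}=(-1)^n(2\pi)^{2n}$, so the absolute value is as stated once we show $\E[q(B)P_\theta(B)]>0$. Singular $B$ has measure zero.

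\textbf{Step 2 (scalar reduction of $P_\theta$).} Conditioning on $v$, the quantity $a-u\transp B^{-1}v$ is distributed as $\N(0,1+\norm{B^{-1}v}^2)$. Hence
\begin{align}
P_\theta(B)=\E_v\bigl[\psi_\theta(1+\norm{B^{-1}v}^2)\bigr],\qquad \psi_\theta(s)\coloneqq\Pr\bigl[\abs{\N(0,s)}\le 1/\theta\bigr],
\end{align}
and $\psi_\theta(s)\approx \tfrac{2}{\theta\sqrt{2\pi s}}$ for $\theta\ge5$. Replacing $B$ by $B\transp$ and using rotation invariance, which is compatible with $q$ because $q(B\transp)=q(B)$ and $q(OBO')=q(B)$ for orthogonal $O,O'$, I would reduce $\norm{B^{-1}v}$ to $\norm{v}/\abs{z}$ times a factor. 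Here $\abs{z}$ is the distance from one column $b_1$ of $B$ to the span of the others. Conditioned on the other columns, $z$ is a standard Gaussian coordinate of $b_1$ orthogonal to that span, since the first row of $B^{-1}$ has norm $1/\mathrm{dist}(b_1,\mathrm{span}(b_2,\ldots,b_n))$. After this step $P_\theta(B)=g(z)$, where $g(z)\coloneqq\E_v\psi_\theta(1+\norm{v}^2/z^2)$ is an even function that increases in $\abs{z}$.

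\textbf{Step 3 (conditional expectation of $q$; main obstacle).} I need $\E[q(B)\mid z]$, where $z$ is the normal component of $b_1$ in a frame depending on $b_2,\ldots,b_n$. The facts I would use are these.
\begin{itemize}
\item $q(B)=\det(\partial_B-B)^2 1$ is a polynomial of degree at most $2$ in the entries of $b_1$.
\item $\E[q\,p]=\E[\det(\partial)^2 p]$ for smooth $p$ of moderate growth. This is \cref{eq:Epq} extended by the same integration-by-parts argument.
\item $\det B = z\cdot\mathrm{vol}(b_2,\ldots,b_n)$, which is invariant under the rotation that carries $b_1$'s normal direction to a fixed axis.
\end{itemize}
Using these, I expect $\E[q\mid z]=\kappa_n(z^2-1)$, a second Hermite polynomial. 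The degree-$0$ part of this conditional expectation vanishes because $\E[q]=0$. The degree-$1$ part vanishes by the sign symmetry $b_1\mapsto -b_1$. To fix the constant, I would compute $\kappa_n=\E[q\,z^2]/2$ via $\E[q\,\det(B)^2]=(n+1)!n!$ from \cref{lem:q_expectations}, together with $\det(B)^2=z^2\,\mathrm{vol}^2$ and $\E[\mathrm{vol}^2]=(n-1)!$ (a product of $\chi^2$ means, as in \cite{Mui82}). This should give $\kappa_n$ of order $n!$, up to a polynomial factor that has to be tracked carefully. This step is where I expect the real work, in particular justifying that the conditional expectation has exactly this form. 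If the exact form fails, I would instead project $q$ onto functions of $z$ via the Hermite expansion in the variable $z$. Then I would bound the contribution of the higher Hermite components using \cref{lem:q_expectations} and Cauchy--Schwarz.

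\textbf{Step 4 (the scalar correlation).} We then have $\E[qP_\theta]=\kappa_n\,\E_z[(z^2-1)g(z)]$. This is positive because $z^2-1$ and $g$ are both increasing functions of $\abs z$, so Chebyshev's association inequality applies, and $\E[z^2-1]=0$. For the size, $\norm{v}^2\approx n$ with concentration, and the asymptotics of $\psi_\theta$ give $g(z)\approx \tfrac{2\abs z}{\theta\sqrt{2\pi n}}$ for $\abs z\ll\sqrt n$. Thus $\E[(z^2-1)g(z)]\ge c'_\theta n^{-1/2}$. Combining this with $\kappa_n$ should yield a bound of the claimed order $c_\theta\,(n-1)!/n$; I have not checked that the powers of $n$ match. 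The last check is that the error terms are lower order for $n$ large: these come from the tail of $\norm{v}^2$, from the region $\abs z\gtrsim\sqrt n$, and from replacing $\psi_\theta$ by its asymptotic form, which is where $\theta\ge5$ is used.
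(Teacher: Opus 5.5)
\textbf{Gap: the constant $\kappa_n$.} The step you flag as the main obstacle is actually fine. Conditional on $b_2,\dots,b_n$, $q$ is quadratic in $b_1=z\nu+w$, and $z$ is independent of $(w,b_2,\dots,b_n)$. Writing $q=b_1\transp M b_1-\tr M$ with $M$ depending only on $b_2,\dots,b_n$, you get $\E[q\mid z,b_2,\dots,b_n]=(z^2-1)\,\alpha(b_2,\dots,b_n)$. Hence $\E[q\mid z]=\kappa_n(z^2-1)$ with $\kappa_n=\E[\alpha]$.

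A smaller point: ``$P_\theta(B)=g(z)$'' is not a pointwise identity, because the relevant normal direction depends on $v$. It does hold inside $\E_B[q(B)\,\cdot\,]$ after rotating each fixed $v$ to $\norm{v}e_1$, and that suffices.

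The real gap is the value of $\kappa_n$, which carries all the quantitative content. Your recipe computes $\E[q\det(B)^2]/(2\E[\mathrm{vol}^2])$. This implicitly assumes $\E[qz^2\mathrm{vol}^2]=\E[qz^2]\,\E[\mathrm{vol}^2]$. But $\alpha$ is correlated with $\mathrm{vol}$, so the recipe yields $\E[\alpha\,\mathrm{vol}^2]/\E[\mathrm{vol}^2]$, not $\E[\alpha]$. Separately, $\E[\mathrm{vol}^2]=\prod_{k=2}^n k=n!$, not $(n-1)!$.

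Already for $n=2$ you can see the failure. There $q=\det(B)^2-\norm{B}_F^2+2$, and with $r=\norm{b_2}$ one finds $\E[q\mid z,r]=(z^2-1)(r^2-1)$. So $\kappa_2=\E[r^2]-1=1$, whereas your recipe gives $12/(2\cdot 2)=3$. In general $\kappa_n=(n-1)!$, and the recipe overestimates it by a factor $n(n+1)/2$ even with the corrected volume moment. For a lower bound this overestimate is fatal. The Hermite/Cauchy--Schwarz fallback does not determine $\kappa_n$ either.

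\textbf{Missing idea and how the paper supplies it.} You need a way to pair $q$ against non-polynomial functions such as $K(z)$. The moment identities of \cref{lem:q_expectations} cannot isolate $\E[\alpha]$, because $\alpha$ involves the unit normal. The paper's proof provides this pairing in two steps.
\begin{enumerate}
\item \cref{lem:gaussian_int_by_parts}, with care at the singular matrices, converts $\E[qF]$ into $\E[\det(B)\det(\partial_B)F]$.
\item Cayley's identity applied to $B+\lambda vu\transp$ gives \cref{lem:inv_scalar_derivative}: $\det(B)\det(\partial_B)h(s)=-(n-1)!\,s\,h'(s)$ with $s=u\transp B^{-1}v$.
\end{enumerate}
Averaging over $u$ with $v=e_1$, this says $\det(B)\det(\partial_B)K(z)=(n-1)!\,zK'(z)$ for $K(z)=\E_G h(G/z)$. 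Stein's identity $\E[zK'(z)]=\E[(z^2-1)K(z)]$ then gives exactly $\kappa_n=(n-1)!$.

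With that input your Step 4 goes through: the association inequality plus $g(z)\approx 2\abs{z}/(\theta\sqrt{2\pi n})$ gives the bound. It even yields $\Omega((n-1)!/\sqrt n)$, sharper than the paper's $\sigma_{\min}$-based estimate of $\E[g_\theta(u\transp B^{-1}v)]$. Without it, the argument has no valid value for the constant.
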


To prove this bound, we use a Gaussian integration-by-parts identity (\cref{lem:gaussian_int_by_parts}), recast the quantity of interest as a one-dimensional expectation (\cref{lem:inv_scalar}, proven using an identity in \cref{lem:inv_scalar_derivative}), and apply standard bounds for properties of Gaussian matrices (\cref{lem:gaussian_bounds}).

\subsection{Reduction to a scalar expectation}\label{sec:inv_scalar}

For $1 \le i,j \le n$, define the differential operator $\Rd_{ij}$ acting as
\begin{align}
  \Rd_{ij} F(B) = \frac{\d}{\d{t}} F(B(\1 + t E_{ij})) \Bigr|_{t=0}
  \label{eq:Rdef}
\end{align}
on a function $F\colon \R^{n \times n} \to \R$,
where $E_{ij}$ is the matrix with a $1$ in the $i,j$ entry and $0$ elsewhere.
A straightforward application of the chain rule shows that
\begin{align}
  \Rd_{ij}(B)
  &= \sum_{k=1}^n B_{ki} \partial_{B_{kj}}.
\end{align}
The Capelli identity \cite{Cap87} gives
\begin{align}
  \det(B)\det(\partial_B) 
  = \Cap(B)
  \coloneqq \sum_{\sigma \in S_n} \sgn(\sigma) \prod_{j=1}^n \bigl(\Rd_{\sigma(j),j}(B) + (n-j)\delta_{j,\sigma(j)}\bigr),
  \label{eq:capelli}
\end{align}
where the factors are ordered so that $j$ increases from left to right.

The following is essentially a straightforward application of integration by parts, but some care must be taken to show that the singular matrices do not introduce boundary terms. Let $C^n$ denote the class of functions whose derivatives of order $1$ through $n$ exist and are continuous.

\begin{lemma}[Gaussian integration by parts]\label{lem:gaussian_int_by_parts}
Let $F\colon \R^{n \times n} \to \R$ be bounded on $\R^{n \times n}$ and $C^n$ on $\{B \in \R^{n \times n} : \det(B) \ne 0\}$, and suppose that every function of the form $\Rd_{i_1 j_1} \cdots \Rd_{i_m j_m} F$ with $m \le n$ is bounded there. Then
\begin{align}
\E[q(B)F(B)] = \E[\det(B) \det(\partial_B) F(B)].
\label{eq:gaussian_int_by_parts}
\end{align}
\end{lemma}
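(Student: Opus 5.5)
The plan is to start from $\E[q(B)F(B)] = \int F(B)\det(\partial_B)^2\gamma(B)\,\d B$ (by \cref{eq:q_gamma_det}) and to move one Cayley operator onto $F$ by integration by parts, using the adjoint relation \cref{eq:det_adjoint}. This gives $(-1)^n\int \det(\partial_B)F(B)\,\det(\partial_B)\gamma(B)\,\d B$. By \cref{lem:cayley_on_gaussian}, $\det(\partial_B)\gamma = (-1)^n\det(B)\gamma$, so the signs cancel and we obtain $\E[\det(B)\det(\partial_B)F(B)]$, which is \cref{eq:gaussian_int_by_parts}.

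The difficulty is that $F$ is only $C^n$ off the singular set $\{\det B=0\}$. The derivatives $\partial_B F$ themselves need not be bounded: $F$ is typically a function of $B^{-1}$, and near singular matrices its ordinary derivatives blow up. Only the ``right-invariant'' derivatives $\Rd_{ij}F$ are assumed bounded. I would therefore not integrate by parts one $\partial_{B_{ij}}$ at a time. Instead I would express the combination $\det(B)\det(\partial_B)F$ via the Capelli identity \cref{eq:capelli} as $\Cap(B)F$, a polynomial in the operators $\Rd_{ij}$ with constant shifts, and all such expressions are bounded by hypothesis. So the right-hand side is a genuinely integrable function, namely a bounded function times $\gamma$.

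To justify the identity rigorously, I would regularize by cutting off near the singular set. Choose a smooth cutoff $\chi_\epsilon(B)=\eta(\det(B)/\epsilon)$, with $\eta$ vanishing near $0$ and equal to $1$ outside $[-1,1]$. Then $\chi_\epsilon F$ is $C^n$ everywhere with Gaussian-dominated growth, so the integration by parts above is valid for it. This gives
\begin{align}
\E[q(B)\chi_\epsilon(B)F(B)] = \E[\det(B)\det(\partial_B)(\chi_\epsilon F)(B)].
\end{align}
As $\epsilon\to0$, the left side converges to $\E[qF]$ by dominated convergence, since $F$ is bounded and $q$ is a polynomial.

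On the right, expand $\det(B)\det(\partial_B)(\chi_\epsilon F)=\Cap(B)(\chi_\epsilon F)$ using the Leibniz rule for the first-order operators $\Rd_{ij}$. The term where no $\Rd$ hits $\chi_\epsilon$ is $\chi_\epsilon\,\Cap(B)F$, which converges to the desired integrand. The remaining terms contain factors $\Rd_{i_1j_1}\cdots\Rd_{i_mj_m}\chi_\epsilon$ with $m\ge1$. The key computation is that $\Rd_{ij}\det B=\delta_{ij}\det B$, because $\det(B(\1+tE_{ij}))=\det B\,(1+t\delta_{ij})$. Hence every such factor is a bounded function of $\det(B)/\epsilon$, namely a combination of $s^k\eta^{(k)}(s)$, supported where $|\det B|\le\epsilon$. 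Multiplying by the bounded $\Rd$-derivatives of $F$, these error terms are uniformly bounded and supported on a set of Gaussian measure tending to zero. They therefore vanish in the limit.

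I expect the main obstacle to be this last step. It requires checking that the Capelli expansion behaves well under the Leibniz rule, since the $\Rd$'s do not commute and the ordering and shift constants matter. The argument only needs boundedness of each resulting term, though, and that holds because all derivatives of $\chi_\epsilon$ taken along $\Rd$ remain functions of $\det(B)/\epsilon$ with bounded coefficients. A secondary point is justifying the integration by parts for $\chi_\epsilon F$ with $\det(\partial_B)^2\gamma$. Here I would note that $\chi_\epsilon F$ has bounded ordinary derivatives on $\{|\det B|\ge\epsilon/2\}\cap\{\norm{B}\le R\}$, which grow at most polynomially in $\norm{B}$ (via $\partial = $ combinations of $(B^{-1})\Rd$). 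Against Gaussian decay this is enough for the boundary terms to vanish.
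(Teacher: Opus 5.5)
Your proposal is correct and takes essentially the same approach as the paper: both cut off near the singular set, integrate by parts, rewrite $\det(B)\det(\partial_B)$ through the Capelli identity, and discard the Leibniz terms in which some $\Rd_{ij}$ hits the cutoff, using that $\Rd_{ij}$ acts on functions of $\det B$ as $\delta_{ij}$ times an Euler operator. The differences are only technical. The paper uses the logarithmic cutoff $\rho(2+\frac{1}{\mu}\log|\det B|)$, whose $\Rd$-derivatives are uniformly $O(1/\mu)$, whereas your linear cutoff's $\Rd$-derivatives are merely bounded and supported on a shrinking set. The paper also adds a compact cutoff $\chi(B/\nu)$ at infinity, where you instead argue that the ordinary derivatives of $\chi_\epsilon F$ grow polynomially so the boundary terms vanish.
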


\begin{proof}
Choose a smooth function $\rho\colon \R \to [0,1]$ with $\rho(x)=0$ for $x \le 0$ and $\rho(x)=1$ for $x \ge 1$, and another smooth function $\chi\colon \R^{n \times n} \to [0,1]$ with $\chi(B)=1$ for $\norm{B}_F \le 1$ and $\chi(B)=0$ for $\norm{B}_F \ge 2$.
For constants $\mu,\nu \ge 1$, let
\begin{align}
  s_\mu(B) &\coloneqq \begin{cases}
      \rho(2+\tfrac{1}{\mu} \log\abs{\det B}) & \det B \ne 0 \\
      0 & \det B = 0
  \end{cases} \\
  \chi_\nu(B) &\coloneqq \chi(B/\nu).
\end{align}
Then the function
\begin{align}
  F_{\mu,\nu}(B) &\coloneqq \chi_\nu(B) s_\mu(B) F(B)
\end{align}
is $C^n$ and compactly supported, and vanishes in a neighborhood of the singular matrices.

We have $q(B)\gamma(B) = \det(\partial_B)^2 \gamma(B)$ and, by \cref{lem:cayley_on_gaussian}, 
\begin{align}
  \det(\partial_B)e^{-\inner{B}{B}/2} = (-1)^n \det(B) e^{-\inner{B}{B}/2}.
\end{align}
Therefore, integrating by parts and using the fact that the adjoint of $\det(\partial_B)$ is $(-1)^n \det(\partial_B)$ (by \cref{eq:det_adjoint}), we have
\begin{align}
  \E[q(B)F_{\mu,\nu}(B)] 
  &= \int_{\R^{n \times n}} F_{\mu,\nu}(B) \det(\partial_B)^2 \gamma(B) \, \d{B} \\
  &= (-1)^n \int_{\R^{n \times n}} \bigl( \det(\partial_B) F_{\mu,\nu}(B) \bigr) \bigl( \det(\partial_B) \gamma(B) \bigr) \, \d{B} \\
  &= \int_{\R^{n \times n}} \bigl( \det(\partial_B) F_{\mu,\nu}(B) \bigr) \det(B) \gamma(B) \, \d{B} \label{eq:gauss_ibp_last_integral} \\
  &= \E[\det(B) \bigl( \det(\partial_B) F_{\mu,\nu}(B) \bigr)]
\end{align}
using \cref{lem:cayley_on_gaussian} to get \cref{eq:gauss_ibp_last_integral}.
By the Capelli identity, \cref{eq:capelli}, this equals
\begin{align}
  \E[\Cap(B) F_{\mu,\nu}(B)]
  &= \E[\Cap(B)(\chi_\nu(B) s_\mu(B) F(B))].
\end{align}

Now we remove the factors of $\chi_\nu$ and $s_\mu$. First, when $\Cap(B)$ acts on $\chi_\nu(B) s_\mu(B) F(B)$, if any term containing a derivative acts on $\chi_\nu(B)$, the result only has support where $\nu \le \norm{B}_F \le 2\nu$. In particular, $\Rd_{ij}(B) \chi_\nu(B) = \frac{1}{\nu} \sum_{k=1}^n B_{ki} (\partial_{B_{kj}} \chi)(B/\nu)$, which is $O(1)$ uniformly in $\nu$ since $\abs{B_{ki}} \le \norm{B}_F \le 2\nu$. The same holds for terms in which several derivatives act on $\chi_\nu$, since each derivative of $\chi$ contributes a factor of $1/\nu$ while each factor $B_{ki}$ is $O(\nu)$ on this support. Taking $\nu \to \infty$, the Gaussian integrals of these terms tend to zero, so we have
\begin{align}
  \lim_{\nu\to\infty} \E[\Cap(B)(\chi_\nu(B) s_\mu(B) F(B))]
  &= \lim_{\nu\to\infty} \E[\chi_\nu(B) \Cap(B)(s_\mu(B) F(B))] \\
  &= \E[\Cap(B)(s_\mu(B) F(B))].
\end{align}

It remains to remove $s_\mu$. For invertible $B$, a straightforward calculation using \cref{eq:Rdef} gives $\Rd_{ij}(B) \log\abs{\det B} = \delta_{ij}$, so
\begin{align}
  \Rd_{ij} s_\mu(B)
  &= \tfrac{1}{\mu} \rho'(2+\tfrac{1}{\mu}\log\abs{\det B}) \delta_{ij},
\end{align}
and more generally,
\begin{align}
  \Rd_{i_1j_1} \cdots \Rd_{i_mj_m} s_\mu(B)
  &= \tfrac{1}{\mu^m} \rho^{(m)}(2+\tfrac{1}{\mu}\log\abs{\det B}) \prod_{k=1}^m \delta_{i_k j_k}.
\end{align}
Thus, when $\Cap(B)$ acts on $s_\mu(B) F(B)$, any term with some $\Rd_{ij}$ acting on $s_\mu(B)$ vanishes in the $\mu \to \infty$ limit, and in that limit we have
\begin{align}
  \lim_{\mu\to\infty} \E[\Cap(B)(s_\mu(B) F(B))]
  &= \E[\Cap(B) F(B)].
\end{align}
By dominated convergence, $\lim_{\mu,\nu \to \infty} \E[q(B) F_{\mu,\nu}(B)] = \E[q(B)F(B)]$.
Therefore, $\E[q(B) F(B)] = \E[\Cap(B) F(B)]$. Finally, using the Capelli identity again, we obtain \cref{eq:gaussian_int_by_parts} as claimed.
\end{proof}

To show that the quantity in \cref{eq:inv_correlation} can be expressed as the expectation of a scalar function, next we establish the following technical lemma.

\begin{lemma}\label{lem:inv_scalar_derivative}
Fix $u,v \in \R^n$ and let $s(B) \coloneqq u\transp B^{-1} v$. Then for any invertible $B$ and any $h\colon \R \to \R$ that extends to an entire function,
\begin{align}
  \det(B) \det(\partial_B) h(s(B)) = -(n-1)! s(B) h'(s(B)).
  \label{eq:inv_scalar_derivative}
\end{align}
\end{lemma}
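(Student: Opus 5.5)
The plan is to use the Capelli identity \cref{eq:capelli} to replace $\det(B)\det(\partial_B)$ by $\Cap(B)$, which involves only the first-order operators $\Rd_{ij}$. These operators act very simply on $s(B)=u\transp B^{-1}v$.

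First I compute the action of $\Rd_{ij}$. Since $(B(\1+tE_{ij}))^{-1}=(\1+tE_{ij})^{-1}B^{-1}$, writing $w\coloneqq B^{-1}v$ gives
\begin{align}
  \Rd_{ij}\, s(B) = -u_i w_j, \qquad \Rd_{ij}\, w_l = -\delta_{li} w_j .
\end{align}
So on functions of $B$ that depend only on the vector $w$ (with $u,v$ fixed), $\Rd_{ij}$ acts as the vector field $-w_j\partial_{w_i}$. In other words, the $\Rd_{ij}$ realize the (dual) action of $\mathfrak{gl}_n$ on polynomials in $w$. In particular, $\Rd_{ij}$ preserves the space $V_d$ of homogeneous polynomials of degree $d$ in $w$, and $s^d=(u\transp w)^d\in V_d$.

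Next I reduce to monomials. Because $h$ is entire, $h(s)=\sum_d h_d s^d$ converges locally uniformly together with all its derivatives. Termwise differentiation is therefore justified on the open set of invertible $B$, and it suffices to prove
\begin{align}
  \Cap(B)\, s^d = -(n-1)!\, d\, s^d
\end{align}
for every $d\ge 0$. Summing over $d$ then gives $-(n-1)!\, s\, h'(s)$. The $d=0$ case is immediate: every term of $\Cap(B)$ contains at least one derivative except the $\sigma=\id$ constant term $\prod_j(n-j)$, and this product vanishes because of the $j=n$ factor.

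For $d\ge1$ I use the fact that the Capelli element is central in $U(\mathfrak{gl}_n)$; the paper's own Capelli identity already exhibits it as $\det(B)\det(\partial_B)$, which commutes with the $GL_n$ action $B\mapsto BG$. The space $V_d$ is an irreducible $GL_n$-module (the symmetric power of the dual of the standard representation). Hence $\Cap(B)$ acts on $V_d$ as a scalar $c_d$, and it suffices to evaluate it on a convenient vector such as the extremal-weight vector $w_n^d$ or $w_1^d$.

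On an extremal-weight vector, the ordered product $\prod_{j}(\Rd_{\sigma(j),j}+(n-j)\delta_{j\sigma(j)})$ should collapse: most permutations give raising operators that annihilate the vector, and the surviving terms are diagonal, contributing $(-d+n-j)$-type factors. This should yield a scalar of the form $\prod_{j=1}^{n-1}(n-j)\cdot(-d)=-(n-1)!\,d$. The classical formula for Capelli eigenvalues on highest-weight modules, namely $\prod_j(\lambda_j+n-j)$ shifted appropriately, with weight $(0,\dots,0,-d)$ for the dual symmetric power, gives exactly $(n-1)!\cdot(-d)$, which I would cite as a cross-check.

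The main obstacle is doing this evaluation rigorously with the specific left-to-right ordering and the sign conventions $\Rd_{ij}=-w_j\partial_{w_i}$. One has to check carefully which ordered products kill the chosen test vector. If that bookkeeping becomes unwieldy, a fallback is to take $u=v=e_1$ and $B$ near the identity; by invariance this suffices. Another fallback is to verify the identity directly for $d=1$ using $\det(\partial_B)(B^{-1})_{11}$ via cofactor expansion together with Cayley's identity (\cref{lem:cayley_identity}), and then extend to general $d$ using the fact that $\Cap$ is the second-order-free part of a Casimir-type polynomial.
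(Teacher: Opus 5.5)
Your proof is correct, but it follows a genuinely different route from the paper's. The paper does not use the Capelli identity for this lemma. It proceeds as follows:
\begin{itemize}
\item It applies Cayley's identity with complex exponent $z$ to the rank-one perturbation $\tilde B = B+\lambda v u\transp$.
\item It uses the matrix determinant lemma to write $(\det\tilde B)^z=(\det B)^z(1+\lambda s)^z$.
\item It differentiates at $z=0$ to get $\det(B)\det(\partial_B)\log(1+\lambda s)=-(n-1)!\,\lambda s/(1+\lambda s)$.
\item It reads off $\det(B)\det(\partial_B)s^j=-(n-1)!\,j\,s^j$ from the Taylor coefficients in $\lambda$. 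This is done first for $\det B>0$, then extended to all invertible $B$ by a polynomial-identity argument.
\end{itemize}
You instead pass to $\Cap(B)$. You use that $B\mapsto w=B^{-1}v$ intertwines $\Rd_{ij}$ with $-w_j\partial_{w_i}$, and you extract the eigenvalue from equivariance plus a highest-weight computation.

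The step you flag as the main obstacle is actually immediate on $w_n^d$. It is not immediate on $w_1^d$, where several permutations contribute under this ordering. On $w_n^d$:
\begin{itemize}
\item The rightmost factor $j=n$ acts first and equals $-w_n\partial_{w_{\sigma(n)}}$. It gives $-d\,w_n^d$ if $\sigma(n)=n$ and $0$ otherwise.
\item For each $j<n$ we then have $\sigma(j)\neq n$, so $\Rd_{\sigma(j),j}$ annihilates $w_n^d$. Only the shift $(n-j)\delta_{j\sigma(j)}$ survives, which forces $\sigma=\id$.
\item The result is exactly $-(n-1)!\,d$.
\end{itemize}

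You can also avoid irreducibility and Schur's lemma entirely. The operator $\det(B)\det(\partial_B)$ commutes with $F\mapsto F(\cdot\,G)$, and $s_u(B)=s_{e_n}(BG)$ whenever $(G^{-1})\transp e_n=u$. So any $u\neq 0$ reduces to $u=e_n$ (rather than the $e_1$ of your fallback), where $s^d=w_n^d$.

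As for the trade-off: the paper's argument is elementary given the complex-exponent Cayley identity, but it must manage the branch and the sign of $\det B$. Yours reuses the Capelli identity that the paper needs anyway in \cref{lem:gaussian_int_by_parts} and treats all invertible $B$ uniformly. It also yields a more structural fact: every degree-$d$ homogeneous polynomial in $B^{-1}v$, not just $(u\transp B^{-1}v)^d$, is an eigenfunction of $\det(B)\det(\partial_B)$ with eigenvalue $-(n-1)!\,d$.
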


\begin{proof}
First, suppose $\det B>0$. For $X \in \R^{n \times n}$ with $\det X > 0$ and $z \in \C$, fix the branch $(\det X)^z \coloneqq \exp(z \log \det X)$, where $\log$ is the real logarithm. Note that Cayley's identity (\cref{lem:cayley_identity}) holds for complex $z$ with this fixed branch \cite[Section 2.7]{CSS13}.

Consider the rank-one perturbation $\tilde B \coloneqq B + \lambda vu\transp$. Since $B$ and $\tilde B$ differ by a constant, $\det(\partial_{\tilde B}) = \det(\partial_B)$, and by the matrix determinant lemma, $\det \tilde B = \det(B)(1+\lambda s(B))$.
Restrict $\lambda \in \R$ so that $\abs{\lambda s(B)}<1$. Then $\det \tilde B > 0$, and we have $(\det \tilde B)^z = (\det B)^z (1+\lambda s(B))^z$, where the powers on the right are defined using the real logarithm.

Applying \cref{lem:cayley_identity} to $\tilde B$ gives
\begin{align}
  \det(\partial_{\tilde B}) (\det \tilde B)^z
  &= z^{(n)} (\det \tilde B)^{z-1},
\end{align}
i.e.,
\begin{align}
  \det(\partial_B) \bigl[(\det B)^z (1+\lambda s(B))^z \bigr]
  &= z^{(n)} (\det B)^{z-1} (1+\lambda s(B))^{z-1}. \label{eq:scalar_waypoint1}
\end{align}
Differentiating both sides of \cref{eq:scalar_waypoint1} with respect to $z$ at $z=0$ gives
\begin{align}
  \det(\partial_B) [\log\det B + \log(1+\lambda s(B))] 
  &= (n-1)! (\det B)^{-1} (1+\lambda s(B))^{-1}
  \label{eq:scalar_waypoint2}
\end{align}
since $z^{(n)} |_{z=0} = 0$ and $\frac{\d}{\d{z}} z^{(n)} |_{z=0} = (n-1)!$. Setting $\lambda=0$ gives
\begin{align}
  \det(\partial_B) \log\det B = (n-1)! (\det B)^{-1}.
\end{align}
Subtracting this from \cref{eq:scalar_waypoint2} and multiplying both sides by $\det B$ gives
\begin{align}
  \det(B) \det(\partial_B) \log(1+\lambda s(B)) 
  = -(n-1)! \frac{\lambda s(B)}{1+\lambda s(B)}.
\end{align}
Since these expressions are analytic in $\lambda$ near $\lambda=0$ for any fixed invertible $B$, we can match the coefficients of their Taylor expansions in $\lambda$.
We have
\begin{align}
    \log(1+\lambda x) &= -\sum_{j=1}^\infty \frac{(-\lambda x)^j}{j}, 
    \qquad
    -\frac{\lambda x}{1+\lambda x} = \sum_{j=1}^\infty (-\lambda x)^j,
\end{align}
so $\det(B) \det(\partial_B) s(B)^j = -(n-1)! j s(B)^j$ for all $j \in \NN$. For each $j$, this is an identity of rational functions in the entries of $B$. After clearing powers of $\det B$, it is a polynomial identity. Since it holds on the nonempty open set $\{B \in \R^{n \times n} : \det B>0\}$, it holds for every invertible $B$. Therefore \cref{eq:inv_scalar_derivative} holds for $h(x)=x^j$, since $x \frac{\d}{\d{x}} x^j = j x^j$. More generally, the result applies to any convergent Taylor series, and hence to any entire function: near any fixed invertible $B$, the function $s$ is holomorphic and bounded, so the series $\sum_{j=0}^\infty c_j s(B)^j$ converges uniformly on a complex neighborhood and can be differentiated term by term.
\end{proof}

With this result in hand, we now give a simplified expression for \cref{eq:inv_correlation}.
Let
\begin{align}
  \Phi(x) \coloneqq \frac{1}{\sqrt{2\pi}} \int_{-\infty}^x e^{-y^2/2} \, \d{y}
\end{align}
denote the standard normal cumulative distribution function, and let $\phi(x) \coloneqq \Phi'(x) = e^{-x^2/2}/\sqrt{2\pi}$ denote the standard normal probability density function.
Also, let
\begin{align}
  \Psi_\theta(x) \coloneqq \Phi(x+\tfrac{1}{\theta}) - \Phi(x-\tfrac{1}{\theta}) = \textstyle\Pr_a[|a-x|\le\tfrac{1}{\theta}].
\end{align}

\begin{lemma}\label{lem:inv_scalar}
With $g_\theta(x) \coloneqq -x \Psi_\theta'(x)$, we have
\begin{align}
  \E[q(B) P_\theta(B)] 
  = (n-1)! \E_{u,v,B}[g_\theta(u\transp B^{-1} v)]
  \ge 0.
\end{align}
\end{lemma}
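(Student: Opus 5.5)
Since $(A^{-1})_{11} = (a - s)^{-1}$ with $s = u\transp B^{-1}v$, the event $\abs{(A^{-1})_{11}} \ge \theta$ is exactly $\abs{a-s}\le 1/\theta$. Integrating over $a \sim \N(0,1)$ with $u,v$ fixed gives $\Pr_a[\abs{(A^{-1})_{11}}\ge\theta] = \Psi_\theta(u\transp B^{-1}v)$. Hence
\begin{align}
  P_\theta(B) = \E_{u,v}[\Psi_\theta(u\transp B^{-1} v)]
\end{align}
for invertible $B$; the singular $B$ form a null set and can be ignored. The plan is to apply \cref{lem:gaussian_int_by_parts} to $F_{u,v}(B) \coloneqq \Psi_\theta(u\transp B^{-1}v)$ for each fixed $u,v$. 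Then \cref{lem:inv_scalar_derivative} with $h=\Psi_\theta$ gives $\det(B)\det(\partial_B)F_{u,v}(B) = (n-1)!\,g_\theta(u\transp B^{-1}v)$. This is legitimate because $\Psi_\theta$ is a difference of shifted copies of the entire function $\Phi$. Finally, we average over $u,v$ and use Fubini to interchange $\E_{u,v}$ with $\E_B$.

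The key step is checking the hypotheses of \cref{lem:gaussian_int_by_parts}. Boundedness of $F_{u,v}$ is clear, since $0\le\Psi_\theta\le 1$, and $F_{u,v}$ is smooth on the invertible matrices. For the derivatives, note that $\Rd_{ij}$ is generated by right multiplication $B\mapsto B(\1+tE_{ij})$. Under this map $B^{-1}\mapsto(\1+tE_{ij})^{-1}B^{-1}$, so $\Rd_{ij}s = -u\transp E_{ij}B^{-1}v = -u_i(B^{-1}v)_j$. This is not obviously bounded, so a naive bound fails.

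I would instead pass to scalar invariants. Iterated $\Rd$'s of $h(s)$ are combinations of terms $h^{(k)}(s)$ times products of entries $u_i$ and $(B^{-1}v)_j$. Since $u$ is fixed, the problem is the factor $w\coloneqq B^{-1}v$, which can be large near singular $B$. If that obstruction persists, I would swap the roles of $u$ and $v$ using the transpose symmetry, or replace $\Rd$ by a left-multiplication version. Neither fully fixes it, so the fallback is a truncation. Write $F_{u,v}$ as a limit of functions of the form $\Psi_\theta(s)\,\eta(\norm{B^{-1}}/R)$ with a cutoff $\eta$, verify the identity for each truncation, and pass to the limit by dominated convergence. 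Because $h^{(k)}(s)$ decays like a Gaussian in $s$, terms such as $s^m h^{(k)}(s)$ are bounded, and these control the pieces that depend only on $s$. I expect this boundedness-or-truncation step to be the main obstacle.

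Nonnegativity comes from the explicit formula $\Psi_\theta'(x) = \phi(x+1/\theta)-\phi(x-1/\theta)$. For $x>0$ we have $\abs{x+1/\theta}>\abs{x-1/\theta}$, so $\Psi_\theta'(x)<0$, and symmetrically $\Psi_\theta'(x)>0$ for $x<0$. Thus $-x\Psi_\theta'(x)\ge 0$ for all $x$, so $g_\theta\ge0$ pointwise and its expectation is nonnegative. Combining these steps gives
\begin{align}
  \E[q(B)P_\theta(B)] = \E_{u,v}\E_B[q(B)F_{u,v}(B)] = (n-1)!\,\E_{u,v,B}[g_\theta(u\transp B^{-1}v)] \ge 0.
\end{align}
The interchange of expectations is justified by $\E\abs{q}<\infty$ together with $\abs{F_{u,v}}\le 1$.
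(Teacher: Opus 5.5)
Your computation of $P_\theta$, your use of \cref{lem:inv_scalar_derivative} with $h=\Psi_\theta$, and your nonnegativity argument all match the paper. The step you flagged as the main obstacle, however, is a genuine gap. For fixed $u,v$, the function $F_{u,v}(B)=\Psi_\theta(u\transp B^{-1}v)$ does not satisfy the hypotheses of \cref{lem:gaussian_int_by_parts}. Indeed, $\Rd_{ij}F_{u,v}=-\Psi_\theta'(s)\,u_i(B^{-1}v)_j$ is unbounded, because $B^{-1}v$ can be arbitrarily large while $s=u\transp B^{-1}v$ stays where $\Psi_\theta'(s)\neq 0$.

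Your truncation fallback does not repair this. For $F_R=\Psi_\theta(s)\,\eta(\norm{B^{-1}}/R)$, the identity $\E[qF_R]=\E[\Cap(B)F_R]$ is fine. The problem is that expanding $\Cap(B)F_R$ by the Leibniz rule gives mixed terms, in which some $\Rd$'s hit the cutoff and the rest hit $\Psi_\theta(s)$. These terms carry products of several factors $u_i(B^{-1}v)_j$, each of size up to order $R$ on the shell $\norm{B^{-1}}\sim R$, and the $O(1/R)$ probability of that shell does not obviously compensate. The cancellation that makes the full sum $\Cap(B)\Psi_\theta(s)=(n-1)!\,g_\theta(s)$ bounded is not available once the Capelli sum is split this way. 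Your stated justification, that the Gaussian decay of $\Psi_\theta^{(k)}(s)$ keeps $s^m\Psi_\theta^{(k)}(s)$ bounded, only controls quantities that are functions of $s$. The factors $(B^{-1}v)_j$ are not controlled by $s$. So the limit $R\to\infty$ is unjustified.

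The missing idea is not to fix $u$. Instead, apply \cref{lem:gaussian_int_by_parts} to the averaged function $F=P_\theta=\E_{u,v}[\Psi_\theta(u\transp B^{-1}v)]$; averaging over $u$ alone already suffices. Since $\Rd_{ij}$ generates right multiplication, look at $P_\theta(BG)=\E_{u,v}[\Psi_\theta(u\transp G^{-1}B^{-1}v)]$ for $G$ near $\1$. The substitution $u=G\transp u'$ moves all the $G$-dependence into the Gaussian weight:
\[
P_\theta(BG)=\E_{u,v}[\Psi_\theta(u\transp B^{-1}v)\,w_G(u)], \qquad w_G(u)=\abs{\det G}\exp\bigl((\norm{u}^2-\norm{G\transp u}^2)/2\bigr).
\]
Every iterated $\Rd$-derivative of $P_\theta$ is then the Gaussian expectation of a polynomial in $u$ times $w_G$, integrated against a function bounded by $1$. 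It is therefore bounded uniformly in $B$, and the lemma applies. After that, \cref{lem:inv_scalar_derivative} is applied under $\E_{u,v}$ exactly as you planned; differentiating under the expectation is justified for fixed invertible $B$.
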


\begin{proof}
Recalling the definition of $P_\theta$ in \cref{eq:P_theta}, we have
\begin{align}
  P_\theta(B) = \E_{u,v}[\Psi_\theta(u\transp B^{-1} v)].
\end{align}
We define $P_\theta(B)=0$ on the singular matrices, which have probability zero.

To use \cref{lem:gaussian_int_by_parts} with $F=P_\theta$, we first verify its hypotheses. For invertible $B$, consider the coordinate change $B \mapsto BG$ for a matrix $G$ close to the identity. Since this is a smooth change of variables, it suffices to show that $P_\theta(BG)$ is a smooth function of $G$ near $I$. Changing variables in the Gaussian integral over $u$ gives
\begin{align}
  P_\theta(BG) = \E_{u,v}[\Psi_\theta(u\transp B^{-1}v) w_G(u)]
\end{align}
where
\begin{align}
  w_G(u) \coloneqq \abs{\det G} \exp((\norm{u}^2-\norm{G\transp u}^2)/2).
\end{align}
Every derivative of $w_G(u)$ with respect to entries of $G$ gives $w_G(u)$ times a polynomial in $u$ with bounded coefficients, so the Gaussian integral of any such derivative is bounded. Since $0 \le \Psi_\theta(x) \le 1$, differentiation under the expectation shows that $P_\theta$ is $C^n$ on the invertible matrices, and that every function of the form $\Rd_{i_1 j_1} \cdots \Rd_{i_m j_m} P_\theta(B)$ is bounded. Thus \cref{lem:gaussian_int_by_parts} applies.

Since $\det(B)\det(\partial_B) = \Cap(B)$ is a polynomial in the $\Rd_{ij}$, and the computation above justifies differentiating under $\E_{u,v}$, we can apply \cref{lem:inv_scalar_derivative} inside the expectation. Applying that result with $h=\Psi_\theta$ (which can be extended to an entire function),
\begin{align}\label{eq:normal_cdf}
  \det(B) \det(\partial_B) P_\theta(B)
  &= (n-1)!\E_{u,v}[g_\theta(u\transp B^{-1}v)].
\end{align}
Thus, by \cref{lem:gaussian_int_by_parts} with $F=P_\theta$,
\begin{align}
  \E[q(B) P_\theta(B)] 
  &= \E[\det(B)\det(\partial_B)P_\theta(B)] \\
  &= (n-1)! \E_{u,v,B}[g_\theta(u\transp B^{-1}v)]
\end{align}
as claimed.

To see that this is nonnegative, observe that
\begin{align}
  g_\theta(x) = x\bigl( \phi(x-\tfrac{1}{\theta}) - \phi(x+\tfrac{1}{\theta}) \bigr)
\end{align}
and that $\phi(x)$ is an even function of $x$.
If $x>0$, $\abs{x-\frac{1}{\theta}} \le \abs{x+\frac{1}{\theta}}$, and since $\phi(x)$ is monotonically decreasing for $x \ge 0$, $g_\theta(x) = x\bigl( \phi(\abs{x-\tfrac{1}{\theta}}) - \phi(\abs{x+\tfrac{1}{\theta}})\bigr) \ge 0$. Then $g_\theta(-x)=g_\theta(x)\ge 0$, and $g_\theta(0)=0$.
\end{proof}

\subsection{Proof of the correlation bound}\label{sec:inv_corr}

The final ingredients for the proof of \cref{lem:inv_correlation} are several standard bounds on properties of Gaussian matrices. Let $\sigma_{\max}(B)$ and $\sigma_{\min}(B)$ denote the largest and smallest singular values of $B$, respectively.

\begin{lemma}\label{lem:gaussian_bounds} \leavevmode
\begin{enumerate}
  \item \label{item:gauss_max}
        $\Pr[\sigma_{\max}(B) > 3\sqrt{n}] \le e^{-n/2}$ and $\E[\sigma_{\max}(B)^2] \le 12n$,
  \item \label{item:gauss_min}
        $\E[\sigma_{\min}(B)] \ge 1/(2\sqrt{n})$, and
  \item \label{item:gauss_norm}
        $\Pr[\norm{v}^2 \notin [n/2,2n]] \le 2e^{-n/16}$.
\end{enumerate}
\end{lemma}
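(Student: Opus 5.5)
Each of the three items is a standard fact about Gaussian vectors and matrices, so I would prove them separately by citing or quickly rederiving the usual concentration and small-ball estimates. The constants are loose, which leaves room for crude arguments.

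For \ref{item:gauss_max}, I will use Gaussian concentration for the $1$-Lipschitz function $B \mapsto \sigma_{\max}(B)$ (with respect to the Frobenius norm), combined with the Gordon/Slepian bound $\E[\sigma_{\max}(B)] \le 2\sqrt{n}$. Together these give
\begin{align}
  \Pr[\sigma_{\max}(B) > 2\sqrt{n} + s] \le e^{-s^2/2}.
\end{align}
Taking $s=\sqrt{n}$ yields the tail bound $e^{-n/2}$. For the second moment, I write $\sigma_{\max} \le 2\sqrt n + Z$, where $Z$ is the nonnegative part of the deviation and satisfies $\Pr[Z>s]\le e^{-s^2/2}$. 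Then
\begin{align}
  \E[\sigma_{\max}^2] \le 2\cdot 4n + 2\E[Z^2] \le 8n + 4,
\end{align}
which is at most $12n$ for $n \ge 1$.

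For \ref{item:gauss_max}'s companion \ref{item:gauss_norm}, I will apply the Laurent--Massart chi-squared tail bounds to $\norm{v}^2\sim\chi^2_n$:
\begin{align}
  \Pr[\chi^2_n \ge n + 2\sqrt{nx} + 2x] \le e^{-x}, \qquad \Pr[\chi^2_n \le n - 2\sqrt{nx}] \le e^{-x}.
\end{align}
With $x=n/16$, the lower threshold is $n - n/2 = n/2$ and the upper threshold is $n + n/2 + n/8 < 2n$. Each side therefore contributes at most $e^{-n/16}$, giving the factor $2$.

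For \ref{item:gauss_min}, which I expect to be the main obstacle, I will use the Edelman/Szarek small-ball estimate for square Gaussian matrices, $\Pr[\sigma_{\min}(B) \le \epsilon/\sqrt n] \le \epsilon$; Edelman's exact distribution for $n\sigma_{\min}^2$ gives this with a constant at most $1$. Then
\begin{align}
  \E[\sigma_{\min}] \ge \frac{1}{\sqrt n}\int_0^1 (1-\epsilon)\, \d\epsilon = \frac{1}{2\sqrt n},
\end{align}
as required. If the constant-$1$ form of the small-ball bound is not directly citable, a fallback is Edelman's limiting density of $n\sigma_{\min}^2$ together with his uniform bound $\Pr[n\sigma_{\min}^2 \le x] \le \sqrt{x}$. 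That bound gives exactly $\Pr[\sigma_{\min} \le \epsilon/\sqrt n] \le \epsilon$, so the same integral applies.
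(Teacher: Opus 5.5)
Your proposal is correct and follows essentially the same route as the paper: the paper cites Davidson--Szarek for $\Pr[\sigma_{\max}(B)>2\sqrt{n}+t]\le e^{-t^2/2}$ (which is exactly your Gaussian concentration plus Gordon bound), cites Sankar--Spielman--Teng for $\Pr[\sigma_{\min}(B)\le\epsilon/\sqrt{n}]\le\epsilon$ and integrates that tail the same way you do, and applies Laurent--Massart with $x=n/16$ exactly as you do. The only cosmetic difference is in the second moment: the paper integrates $2x\Pr[\sigma_{\max}(B)\ge x]$ directly to get $4n+2\sqrt{2\pi n}+2$, whereas you use $(a+b)^2\le 2a^2+2b^2$ to get $8n+4$, and both are at most $12n$.
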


\begin{proof} \leavevmode
\begin{enumerate}
\item[\ref{item:gauss_max}] By \cite[Theorem II.13]{DS01}, $\Pr[\sigma_{\max}(B)>2\sqrt{n}+t] \le e^{-t^2/2}$; the first claim follows by taking $t=\sqrt{n}$. For the expectation, we have
\begin{align}
  \E[\sigma_{\max}(B)^2]
  &= \int_0^\infty 2x \Pr[\sigma_{\max}(B) \ge x] \, \d{x} \\
  &\le 4n + \int_{2\sqrt{n}}^\infty 2x \Pr[\sigma_{\max}(B) \ge x] \, \d{x} \\
  &= 4n + \int_0^\infty 2(x+2\sqrt{n}) \Pr[\sigma_{\max}(B) \ge x+2\sqrt{n}] \, \d{x} \\
  &\le 4n + \int_0^\infty 2(2\sqrt{n}+x) e^{-x^2/2} \, \d{x} \\
  &= 4n + 2\sqrt{2\pi n}+2 \\
  &\le 12n.
\end{align}
\item[\ref{item:gauss_min}] By \cite[Theorem 3.4]{SST06}, $\Pr[\sigma_{\min}(B) \le \epsilon/\sqrt{n}] \le \epsilon$ for all $\epsilon > 0$. Therefore
\begin{align}
  \E[\sigma_{\min}(B)]
  &= \int_0^\infty \Pr[\sigma_{\min}(B)\ge x] \, \d{x} \\
  &\ge \int_0^\infty \max\{1-\sqrt{n}x,0\} \, \d{x} \\
  &= \int_0^{1/\sqrt{n}} (1-\sqrt{n}x) \, \d{x} \\
  &= 1/(2\sqrt{n}).
\end{align}
\item[\ref{item:gauss_norm}] By \cite[Lemma 1]{LM00}, we have
\begin{align}
  \Pr[\norm{v}^2 \ge n + 2 \sqrt{nx} + 2 x] &\le e^{-x} \\
  \Pr[\norm{v}^2 \le n - 2 \sqrt{nx}] &\le e^{-x}
\end{align}
for any $x > 0$. Taking $x=n/16$ in both, we have
\begin{align}
  \Pr[\norm{v}^2 \ge 13n/8] &\le e^{-n/16} \implies \Pr[\norm{v}^2 \ge 2n] \le e^{-n/16} \\
  \Pr[\norm{v}^2 \le n/2] &\le e^{-n/16}
\end{align}
and the claim follows by the union bound. \qedhere
\end{enumerate}
\end{proof}

We are now ready to establish \cref{lem:inv_correlation}.

\begin{proof}[Proof of \cref{lem:inv_correlation}]
By \cref{lem:inv_scalar}, it suffices to show that $\E_{u,v,B}[g_\theta(u\transp B^{-1} v)] \in \Omega(1/n)$.

Let $\kappa \coloneqq \min_{x \in [\frac{1}{\theta},\frac{2}{\theta}]} g_\theta(x)>0$ (positivity follows from the proof of \cref{lem:inv_scalar}). The distribution of a linear combination of zero-mean Gaussians is a zero-mean Gaussian, with variance equal to the squared $\ell^2$ norm of the coefficients. Therefore, conditioned on $B$ and $v$, we have $u\transp B^{-1} v \sim \N(0,V)$ with $V \coloneqq \norm{B^{-1}v}^2$. Therefore
\begin{align}
  \E_{u,v,B}[g_\theta(u\transp B^{-1} v)] = \E_{B,v} \Gamma(V)
\end{align}
where
\begin{align}
  \Gamma(V) 
  &\coloneqq \E_{\Upsilon \sim \N(0,V)} g_\theta(\Upsilon) \\
  &= \frac{1}{\sqrt V} \int_{-\infty}^\infty g_\theta(x) \phi(\tfrac{x}{\sqrt{V}}) \, \d{x} \\
  &\ge \frac{\kappa}{\sqrt V} \int_{1/\theta}^{2/\theta} \phi(\tfrac{x}{\sqrt{V}}) \, \d{x} \\
  &= \kappa \bigl( \Phi(\tfrac{2}{\theta\sqrt{V}}) - \Phi(\tfrac{1}{\theta\sqrt{V}}) \bigr)
\end{align}
(recall that $\phi$ is the probability density of $\N(0,1)$ and $\Phi$ is its cumulative distribution function).

We now identify a good event that results in a suitable lower bound.
By the mean value theorem and monotonicity of $\phi$ on $[\frac{1}{\theta\sqrt{V}},\frac{2}{\theta\sqrt{V}}]$, we have $\Phi(\frac{2}{\theta\sqrt{V}}) - \Phi(\frac{1}{\theta\sqrt{V}}) \ge \frac{1}{\theta\sqrt{V}} \phi(\frac{2}{\theta\sqrt{V}})$.
Therefore, for $V \ge \frac{1}{\theta^2}$,
\begin{align}
  \Gamma(V) \ge \frac{\kappa}{\theta \sqrt{V}} \phi(2).
  \label{eq:Gamma_V_bound}
\end{align}
Let $G$ be the event that $\sigma_{\max}(B) \le 3\sqrt{n}$ and $\frac{n}{2} \le \norm{v}^2 \le 2n$.
Conditioned on $G$, we have
\begin{align}
  V 
  = \norm{B^{-1}v}^2 
  \ge \frac{\norm{v}^2}{\sigma_{\max}(B)^2}
  \ge \frac{n/2}{9n}
  = \frac{1}{18}
  > \frac{1}{\theta^2},
\end{align}
so \cref{eq:Gamma_V_bound} applies.

Since $\sigma_{\min}(B) = 1/\norm{B^{-1}}$, we have $V^{-1/2} \ge \sigma_{\min}(B)/\norm{v}$. Therefore
\begin{align}
  \E_{u,v,B}[g_\theta(u\transp B^{-1}v)] 
  &\ge \frac{\kappa}{\theta} \phi(2) \E_{v,B}[\tfrac{1}{\sqrt V} \indicator_G] \\
  &\ge \frac{\kappa \phi(2)}{\theta\sqrt{2n}} \E_{v,B}[\sigma_{\min}(B) \indicator_G] \\
  &= \frac{\kappa \phi(2)}{\theta\sqrt{2n}} \E[\sigma_{\min}(B) \indicator_{\sigma_{\max}(B) \le 3\sqrt{n}}] \textstyle\Pr_v[n/2 \le \norm{v}^2 \le 2n] \\
  &\ge \frac{(1-2e^{-n/16}) \kappa \phi(2)}{\theta\sqrt{2n}} \E[\sigma_{\min}(B) \indicator_{\sigma_{\max}(B) \le 3\sqrt{n}}]
\end{align}
by \cref{lem:gaussian_bounds}\ref{item:gauss_norm}.
Now
\begin{align}
  \E[\sigma_{\min}(B) \indicator_{\sigma_{\max}(B) \le 3\sqrt{n}}]
  &= \E[\sigma_{\min}(B)] - \E[\sigma_{\min}(B) \indicator_{\sigma_{\max}(B) > 3\sqrt{n}}] \\
  &\ge \frac{1}{2\sqrt{n}} - \sqrt{\E[\sigma_{\min}(B)^2] \E[\indicator_{\sigma_{\max}(B) > 3\sqrt{n}}]} \\
  &\ge \frac{1}{2\sqrt{n}} - \sqrt{\E[\sigma_{\max}(B)^2] \Pr[\sigma_{\max}(B) > 3\sqrt{n}]} \\
  &\ge \frac{1}{2\sqrt{n}} - \sqrt{12n e^{-n/2}} \in \Omega(1/\sqrt{n})
\end{align}
where we used \cref{lem:gaussian_bounds}\ref{item:gauss_min}, Cauchy-Schwarz, and \cref{lem:gaussian_bounds}\ref{item:gauss_max}.
Therefore $\E_{u,v,B}[g_\theta(u\transp B^{-1}v)] \ge c_\theta/n$ for some $c_\theta$ and all sufficiently large $n$, so by \cref{lem:inv_scalar},
\begin{align}
  \E[q(B)P_\theta(B)] \ge c_\theta \frac{(n-1)!}{n}
\end{align}
and the claim follows.
\end{proof}

\subsection{Inversion lower bound}\label{sec:inv_bound}

With the correlation bound in hand, we now show our main result on the quantum query complexity of matrix inversion.

\begin{theorem}\label{thm:inv}
Computing the first column of $A^{-1}$ requires $\Omega(n/\log n)$ quantum matrix-vector queries to an invertible matrix $A \in \R^{(n+1) \times (n+1)}$.
In particular, for fixed $\theta \ge 5$, $\Omega(n/\log n)$ queries are needed to determine whether $|(A^{-1})_{11}| \le \theta-n^{-3}$ or $|(A^{-1})_{11}| \ge \theta+n^{-3}$ with the promise that $\sigma_{\min}(A) \ge n^{-13/2}$ and $\sigma_{\max}(A) \le 3\sqrt{n}$.
\end{theorem}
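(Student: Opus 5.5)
We apply \cref{lem:main} with the witness $\xi=\beta^{(2)}_I$ of \cref{sec:inv_witness}. By \cref{eq:det_beta_hat}, $\widehat{\beta^{(2)}_I}$ vanishes on $\rankvar_{n-1}$, so it is a rank-$2t$ dual witness for any $t<n/2$. The decision function is $D_\theta$ with $\theta\ge5$ fixed. The promise set $\P$ consists of matrices with $\abs{\abs{(A^{-1})_{11}}-\theta}\ge n^{-3}$, $\sigma_{\min}(A)\ge n^{-13/2}$, and $\sigma_{\max}(A)\le 3\sqrt n$. We split $\xi_0=\beta^{(2)}_I\indicator_\P$ and $\xi_1=\beta^{(2)}_I-\xi_0$.

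\textbf{Normalization.} Since $\beta^{(2)}_I=q(B)\gamma(A)/(2\pi i)^{2n}$, we have
\[
\norm{\beta^{(2)}_I}_1=\E\abs{q(B)}/(2\pi)^{2n}\le\sqrt{(n+1)!n!}/(2\pi)^{2n}
\]
by \cref{lem:q_expectations}. \cref{lem:inv_correlation} gives $\abs{\int D_\theta\,\d\beta^{(2)}_I}\ge c_\theta(n-1)!/(n(2\pi)^{2n})$. The ratio of these is $\Omega(n^{-5/2})$, using $(n-1)!/\sqrt{(n+1)!n!}=1/(n\sqrt{n+1})$. So \cref{lem:main} yields a failure probability of $\Omega(n^{-5/2})$ for any algorithm with fewer than $n/2$ queries, provided $\norm{\xi_1}_1$ is at most a small constant times $n^{-5/2}\norm{\beta^{(2)}_I}_1$. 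Standard boosting then converts this into the $\Omega(n/\log n)$ bound. The first-column statement follows because such a column determines $(A^{-1})_{11}$ to within $n^{-3}$.

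\textbf{Bounding $\norm{\xi_1}_1$ (main obstacle).} We need
\[
\E_A[\abs{q(B)}\indicator_{A\notin\P}]\le c\,n^{-5/2}\E\abs{q(B)}.
\]
We first get a lower bound on $\E\abs{q(B)}$ from $\E[q(B)(\det B)^2]=(n+1)!n!$ together with Hölder. Writing $E$ for the bad event, we then use Cauchy--Schwarz:
\[
\E[\abs{q}\indicator_E]\le\sqrt{\E[q^2]\Pr[E]}.
\]
This reduces the problem to showing that $\Pr[E]$ is polynomially small enough. There are three bad events.
\begin{enumerate}
\item $\sigma_{\max}(A)>3\sqrt n$. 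By \cref{lem:gaussian_bounds}\ref{item:gauss_max} applied to the $(n+1)$-dimensional matrix, this has probability $e^{-\Omega(n)}$. The constant needs slight adjustment (e.g.\ $3\sqrt{n+1}$ versus $3\sqrt n$); I would fix this by taking the tail bound with $t$ slightly smaller, or by accepting a small constant change.
\item $\sigma_{\min}(A)<n^{-13/2}$. By the smallest-singular-value tail $\Pr[\sigma_{\min}\le\epsilon/\sqrt{n+1}]\le\epsilon$ from \cite{SST06}, this has probability $O(n^{-6})$.
\item $\abs{(A^{-1})_{11}}$ lies within $n^{-3}$ of $\theta$. Given $B,u,v$, the value $(A^{-1})_{11}=1/(a-s)$ with $a\sim\N(0,1)$ independent. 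This event forces $a$ into an interval of length $O(n^{-3}/\theta^2)$, so it has probability $O(n^{-3})$.
\end{enumerate}

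The delicate point is that Cauchy--Schwarz costs a square root. With $\E\abs{q}$ possibly as small as $\sqrt{(n+1)!n!}\cdot\mathrm{poly}(n)^{-1}$, a probability of $O(n^{-3})$ only gives $O(n^{-3/2})$ relative mass, which is not enough against $n^{-5/2}$. So for the near-threshold event I would avoid Cauchy--Schwarz. Because $a$ is independent of $B$, integrating $a$ first gives directly
\[
\E[\abs{q(B)}\Pr_a[\cdot]]\le O(n^{-3})\E\abs{q(B)}.
\]
Only the two singular-value events need Cauchy--Schwarz, and those should have tails of roughly $n^{-12}$, after squaring, relative to the ratio needed.

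To guarantee this I would prove $\E\abs{q(B)}\ge (n+1)!n!/\sqrt{\E[q^2(\det B)^4]}$-type bounds. Alternatively, and more simply, I would compare against the correlation itself: it suffices that $\norm{\xi_1}_1\le\frac12\abs{\int D_\theta\,\d\beta^{(2)}_I}$. This holds since
\[
\sqrt{(n+1)!n!\cdot O(n^{-6})}=O(n^{-2})\,n!\ll (n-1)!/n
\]
only if the exponents are chosen generously. This exponent bookkeeping, and making the $\sigma_{\min}$ tail strong enough, is where I expect the work to concentrate. If necessary I would use the sharper $\sigma_{\min}$ tail with the exponent $13/2$, which gives probability $n^{-6}$, so that after the square root the contribution is $n^{-3}\sqrt{(n+1)!n!}$. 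That is below $c_\theta(n-1)!/(2n)$ when compared carefully against $\norm{\xi_0}_1$.
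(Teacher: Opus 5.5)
Your proposal is correct and follows the paper's proof. It uses the same witness $\beta^{(2)}_I$ and split $\xi_0=\beta^{(2)}_I\indicator_\P$, Cauchy--Schwarz against $\E[q(B)^2]=(n+1)!n!$ for the singular-value events, and integration over $a$ first for the near-threshold event. As in the paper, your final version compares $\norm{\xi_1}_1$ directly with the correlation from \cref{lem:inv_correlation}, using only the upper bound $\E\abs{q(B)}\le n!\sqrt{n+1}$ in the denominator. The H\"older lower bound on $\E\abs{q(B)}$ you contemplate is therefore unnecessary.

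One bookkeeping correction: the singular-value term is $n^{-3}\sqrt{(n+1)!n!}=O(n^{-5/2})\,n!$, not merely $O(n^{-2})\,n!$. It is this extra factor $n^{-1/2}$ that makes it $o((n-1)!/n)$, which your final sentence correctly uses.
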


\begin{proof}
Fix $\theta \ge 5$. For all sufficiently large $n$, \cref{lem:inv_correlation} gives
\begin{align}
  \abs*{\int_{\R^{(n+1) \times (n+1)}} D_\theta(A) \, \d\beta^{(2)}_I(A)}
  &\ge \frac{c_\theta}{(2\pi)^{2n}} \frac{n!}{n^2}.
  \label{eq:inv_corr}
\end{align}

Let
\begin{align}
  \G &\coloneqq \{A \in \R^{(n+1) \times (n+1)} : \sigma_{\min}(A) \ge n^{-13/2},\, \sigma_{\max}(A) \le 3\sqrt{n}\} \\
  \H &\coloneqq \{A \in \R^{(n+1) \times (n+1)} : |(A^{-1})_{11}| \notin (\theta-n^{-3},\theta+n^{-3})\},
\end{align}
and consider the decision problem $D_\theta$ with the promise $\P = \G \cap \H$.
Let $\xi_0 = \beta^{(2)}_I \indicator_\P$ and $\xi_1 = \beta^{(2)}_I - \xi_0$.

We first bound the witness mass removed by $\G$.
By the Gaussian singular-value tail bounds from the proof of \cref{lem:gaussian_bounds} (with $n$ increased by $1$), with $\epsilon = \sqrt{n+1} \, n^{-13/2}$ in the bound on $\sigma_{\min}$ and $t=3\sqrt{n} - 2\sqrt{n+1} \ge \sqrt{n}/2$ (for $n \ge 2$) in the bound on $\sigma_{\max}$,
\begin{align}
  \Pr[A \notin \G] \le \sqrt{n+1} \, n^{-13/2} + e^{-n/8} = O(n^{-6}).
\end{align}
Since $\E[q(B)^2] = (n+1)!n! = (n!)^2(n+1)$ by \cref{lem:q_expectations}, Cauchy-Schwarz gives
\begin{align}
  \norm{\beta^{(2)}_I \indicator_{\bar\G}}_1 
  \le \frac{1}{(2\pi)^{2n}} \sqrt{\E[q(B)^2] \Pr[A \notin \G]}
  = O\biggl(\frac{n!}{(2\pi)^{2n} n^{5/2}}\biggr).
  \label{eq:inv_measure_no}
\end{align}

Now we bound the mass removed by $\H$. Since $(A^{-1})_{11} = (a-u\transp B^{-1} v)^{-1}$, $\H$ excludes the event that
\begin{align}
  \theta - n^{-3} < \frac{1}{|a-u\transp B^{-1} v|} < \theta + n^{-3}.
\end{align}
In other words, $a$ is excluded from two intervals, each of length
\begin{align}
  \frac{1}{\theta - n^{-3}} - \frac{1}{\theta + n^{-3}} = \frac{2}{n^3(\theta^2 - n^{-6})}.
\end{align}
The standard normal density is at most $1/\sqrt{2\pi}$, so using $\E\abs{q(B)} \le n!\sqrt{n+1}$,
\begin{align}
  \norm{\beta^{(2)}_I \indicator_{\bar\H}}_1 =
  O\biggl(\frac{1}{(2\pi)^{2n} n^3} \E\abs{q(B)}\biggr) = O\biggl(\frac{n!}{(2\pi)^{2n} n^{5/2}}\biggr).
  \label{eq:inv_measure_yes}
\end{align}

By \cref{eq:inv_measure_yes,eq:inv_measure_no},
$\norm{\xi_1}_1 = o(\frac{n!}{(2\pi)^{2n}n^2})$. Thus, for sufficiently large $n$, $\norm{\xi_1}_1$ is at most half the correlation in \cref{eq:inv_corr}.
By \cref{eq:det_beta_hat}, $\widehat{\beta^{(2)}_I}$ vanishes on $\rankvar_{n-1}$.
Therefore, since $\norm{\beta^{(2)}_I}_1 = \frac{1}{(2\pi)^{2n}} \E\abs{q(B)} \le \frac{n!\sqrt{n+1}}{(2\pi)^{2n}}$, \cref{lem:main} implies that any algorithm using fewer than $n/2$ quantum queries has failure probability $\Omega(n^{-5/2})$. As usual, since the failure probability can be reduced from constant to inverse polynomial with only logarithmic overhead, this shows that the bounded-error quantum query complexity is $\Omega(n/\log n)$.
\end{proof}

\subsection{Convex optimization}\label{sec:inv_opt}

\cref{thm:inv} does not immediately provide a lower bound for quadratic optimization because the matrix $A$ in the proof need not be positive definite. Indeed, a matrix with entries chosen independently from $\N(0,1)$ is almost surely not symmetric, and its symmetric part is very likely to have both positive and negative eigenvalues. However, it is straightforward to adapt the argument to give a similar lower bound for the task of inverting the positive semidefinite (indeed, positive definite since we assume $A$ is invertible) matrix $M \coloneqq AA\transp$.

Since a query to $A\transp$ can be implemented using one query to $A$ (as discussed in \cref{sec:model}), we can implement a query to $M$ with three queries to $A$. Intuitively, since $y\transp M x = y\transp A(A\transp x)$, we can compute $z=A\transp x$ into an ancilla with one query, apply the phase $e^{2\pi i y\transp A z}$ with a second query, and uncompute $z$ with a third query. Since $A\transp x$ need not lie in $\D^n$, the intermediate computation only approximates it. However, the following lemma shows that the error can be made arbitrarily small by taking the discretization level $k$ sufficiently high.

\begin{lemma}\label{lem:query_sim}
Let $A \in \R^{n \times n}$, $\Lambda \ge \max\{\norm{A}_{\max},1\}$, $L \ge 1$, and $\epsilon \in (0,1)$. For any discretization level $k \ge \log_2(256n^3\Lambda L/\epsilon^3)$, the following hold using queries to $\hat U_A$ at level $k$, together with $A$-independent unitaries:
\begin{enumerate}
  \item For any $x \in \D^n$ with $\norm{x}_\infty \le L$ and $B \in \{A,A\transp\}$, we can produce an estimate $z$ with $\norm{z-Bx}_\infty \le \epsilon$ with probability at least $1-\epsilon$ using one query. \label{item:query_sim_est}
  \item Three queries suffice to implement $\hat U_{AA\transp}\colon \ket{x,y} \mapsto e^{2\pi i y\transp AA\transp x} \ket{x,y}$ for $x,y \in \D^n$ with $\norm{x}_\infty,\allowbreak\norm{y}_\infty \le L$ within trace distance $\epsilon$. \label{item:query_sim_M}
\end{enumerate}
\end{lemma}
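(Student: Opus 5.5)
Part (a) is phase estimation on a phase oracle. Given $x$, we prepare the uniform superposition over $y \in \D^n$ in the second register, which is the Fourier basis for the discretization. One application of $\hat U_A$ (or $\hat U_{A\transp}$, obtained with one query by conjugating with the swap $S$ as in \cref{sec:model}) imprints the phase $e^{2\pi i y\transp Bx}$. The phase factorizes over coordinates as $\prod_j e^{2\pi i y_j (Bx)_j}$, so the state is a product over coordinates $j$. Each factor is a one-dimensional phase-estimation state for the frequency $(Bx)_j$ over the grid $\D$.

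We then apply the inverse discrete Fourier transform over $\Z_{2^{2k}-1}$ to each coordinate and measure. The standard phase-estimation tail bound gives the following: the outcome lies within $m$ grid steps of the true value (modulo the wraparound period) with probability $1-O(1/m)$. The frequency resolution is about $2^{-k}$, and the representable range is about $2^{k}$. We need $|(Bx)_j| \le n\Lambda L$ to lie well inside this range, so that no aliasing occurs. We also need the resolution times $m$ to be at most $\epsilon$.

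Take $m \approx 8n/\epsilon$ and apply a union bound over the $n$ coordinates. This gives failure probability at most $\epsilon$ provided $2^{-k}\cdot 8n/\epsilon \lesssim \epsilon$ and $2^k \gtrsim n\Lambda L$. The stated choice $2^k \ge 256 n^3\Lambda L/\epsilon^3$ covers both conditions with room to spare. Along the way I would check the exact normalization, because $y\transp Bx$ pairs frequencies with the $2^{-k}$ grid and the Fourier modulus $2^{2k}-1$ is not exactly $2^{2k}$. This mismatch produces a small extra dephasing error, which also has to be absorbed by the slack.

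For part (b), we run the procedure of (a) coherently with $B=A\transp$, writing $z\approx A\transp x$ into an ancilla register. Before using it, we round $z$ into $\D^n$ with $\norm{z}_\infty \le n\Lambda L + 1$. Next we apply $\hat U_A$ to $\ket{z,y}$, which gives the phase $e^{2\pi i y\transp A z}$. Finally we uncompute the ancilla with the inverse of the first step, which costs one more query (inverse queries are free, as noted in \cref{sec:model}). That makes three queries in total.

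To get the error bound, write the coherent estimation state as $\sum_z \alpha_z(x)\ket{z}$. On the good set of $z$, where $\norm{z-A\transp x}_\infty \le \epsilon'$ and which carries weight at least $1-\epsilon'$, the phase error is
\[
|y\transp A(z-A\transp x)| \le nL\cdot n\Lambda\epsilon'.
\]
So the implemented operation is close to $e^{2\pi i y\transp AA\transp x}$ times the identity on the ancilla after uncomputation. We then convert the amplitude and phase deviations into trace distance, which is $O(\sqrt{\epsilon'} + n^2\Lambda L\epsilon')$.

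We apply (a) with a smaller $\epsilon'$ (something like $\epsilon'=\epsilon^2/(n^2\Lambda L)$ scaled appropriately), and this fixes $k$. I expect the main obstacle to be the bookkeeping here: the rounded $z$ needs $L' \approx n\Lambda L$, and matching the resulting constraint to the exact exponent in the hypothesis on $k$ will likely force the cleaner route of bounding the error in terms of $\epsilon$ directly via the $O(1/m)$ tail, not the $\sqrt{\cdot}$ conversion. If the constants don't match, I would adjust which tail bound is used rather than the construction.
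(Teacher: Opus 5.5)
Your construction is the paper's in both parts. For (a) it is one query followed by per-coordinate inverse Fourier transforms over $\Z_{2^{2k}-1}$. For (b) it is coherent estimation of $A\transp x$, a phase query on $\ket{z,y}$, and uncomputation.

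The gap is in the quantitative step of (b), which is exactly what produces the stated bound on $k$. Suppose you invoke (a) as a black box with one parameter $\epsilon'$ serving as both accuracy and failure probability. Making $2\sqrt{\epsilon'}+2\pi n^2\Lambda L\epsilon'\le\epsilon$ then forces $\epsilon'\lesssim\min\{\epsilon^2,\epsilon/(n^2\Lambda L)\}$. The resulting requirement $2^k\ge 256n^3\Lambda L/\epsilon'^3$ is far stronger than the hypothesis.

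Your suggested remedy, avoiding the square-root conversion, does not help. The bad set of $z$ contributes up to $2\sqrt{P_{\mathrm{bad}}}$ to the norm of the error vector, and this cannot be avoided in this scheme. What the paper does instead is decouple the radius of the good set from its probability, using the raw phase-estimation tail: each coordinate lands more than $2p/2^k$ from $(A\transp x)_j$ with probability at most $1/(2(p-1))$, for any integer $p\ge 2$. Taking $p=1+\lceil 8n/\epsilon^2\rceil$ makes the bad-set contribution at most $\sqrt{2n/(p-1)}\le\epsilon/2$. On the good set the phase error is at most $2\pi n^2\Lambda L\cdot 2p/2^k\le 40\pi n^3\Lambda L/(\epsilon^2 2^k)$. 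This is at most $\epsilon/2$ once $2^k\ge 80\pi n^3\Lambda L/\epsilon^3$, and since $80\pi<256$, this is where the exponent in the hypothesis comes from.

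The rounding of $z$ and the worry about needing $L'\approx n\Lambda L$ are unnecessary, for four reasons:
\begin{itemize}
\item The phase-estimation output register already ranges over $\D^n$, so it can serve directly as the input register of $\hat U_A$.
\item The second query is a pure phase, so it imposes no aliasing condition on $z$.
\item The good-set bound $\abs{y\transp A(z-A\transp x)}\le\norm{A\transp y}_1\norm{z-A\transp x}_\infty\le n^2\Lambda L\norm{z-A\transp x}_\infty$ uses only $\norm{y}_\infty\le L$ and $\norm{A}_{\max}\le\Lambda$.
\item For bad $z$ the trivial bound $\abs{e^{2\pi i a}-e^{2\pi i b}}\le 2$ suffices.
\end{itemize}
Clipping $z$ in place would not even be unitary.

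Finally, the modulus mismatch in (a) is not a dephasing but a deterministic shift. The peak sits at $\frac{2^{2k}-1}{2^k}(Bx)_j$, at most $\abs{(Bx)_j}/2^k\le 1/256$ grid steps from $2^k(Bx)_j$. The paper absorbs this, together with integer rounding, into an additive $3/2$ grid steps of slack.
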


\begin{proof}
First we estimate $B x$ for $B \in \{A,A\transp\}$ using phase estimation.
Let $x \in \D^n$ with $\norm{x}_\infty \le L$. Prepare an ancilla in the state $\sum_{z \in \D^n} \ket{z}/\sqrt{|\D|^n}$.  Applying $\hat U_B$ with $\ket{x}$ in the input register and the ancilla in the output register gives
\begin{align}
  \frac{1}{\sqrt{|\D|^n}} \ket{x} \sum_{z \in \D^n} e^{2\pi i z\transp B x} \ket{z}.
\end{align}
Identify $\D$ with $\Z_{2^{2k}-1}$, with $d \in \Z_{2^{2k}-1}$ corresponding to $d/2^k \in \D$ as in \cref{sec:model}.
Then applying the inverse Fourier transform over $\Z_{2^{2k}-1}$ to each ancilla coordinate gives a state $\ket{x} \ket{\eta}$, where register $j$ of $\ket{\eta}$ is simply the output of phase estimation (with modulus $2^{2k}-1$) for the phase $(Bx)_j/2^k$.

Since $\abs{(Bx)_j} \le n\Lambda L < 2^{k-1}$ by assumption, this phase has magnitude less than $1/2$. As the phase determines $(Bx)_j$ only modulo $2^k$, this is precisely the condition under which $(Bx)_j$ is the unique consistent value in $[-2^{k-1},2^{k-1})$, so no aliasing occurs. Let $b_j$ denote the integer with $0 \le \frac{2^{2k}-1}{2^k}(Bx)_j - b_j < 1$. Since $\abs{(Bx)_j}/2^k \le n\Lambda L/2^k \le \frac{1}{256}$, we have $\abs{b_j} \le \frac{2^{2k}-1}{256} + 1$, so $b_j$ labels a valid state of an ancilla register. Then
\begin{align}
  2^k \abs{z_j - (Bx)_j}
  &= \abs{2^k z_j - 2^k (Bx)_j} \\
  &= \abs{(2^k z_j - b_j) - (\tfrac{2^{2k}-1}{2^k}(Bx)_j - b_j) - (Bx)_j/2^k} \\
  &\le \abs{2^k z_j - b_j} + \tfrac{3}{2},
\end{align}
so for any integer $p \ge 2$,  $\abs{2^k z_j - b_j} \le p$ implies $\abs{z_j - (Bx)_j} \le 2p/2^k$.
Thus, by the standard analysis of phase estimation \cite[Equation (5.34)]{NC00} (which carries over straightforwardly to a modulus that is not a power of $2$),
\begin{align}
  \sum_{z \in \D^n\colon \abs{z_j - (Bx)_j} > 2p/2^k} \abs{\braket{z}{\eta}}^2 \le \frac{1}{2(p-1)}
  \label{eq:pest}
\end{align}
for each $j \in [n]$ and any integer $p \ge 2$.

Part \ref{item:query_sim_est} follows by measuring the ancilla. Letting $p = 1 + \ceil{n/2\epsilon}$ and taking a union bound over the $n$ coordinates, the outcome $z$ satisfies $\norm{z - Bx}_\infty \le 2p/2^k \le \epsilon$ (since $2p \le 4+n/\epsilon$ and $2^k \ge 256n/\epsilon^3$) with probability at least $1-\epsilon$.

For part \ref{item:query_sim_M}, perform a query to $\hat U_A$ with the ancilla of the above procedure for $B=A\transp$ in its input register and $\ket{y}$ in its output register. The result is a state
\begin{align}
  \ket{\psi} \coloneqq
  \ket{x} \ket{y} \sum_{z \in \D^n} e^{2\pi i y\transp A z} \braket{z}{\eta} \ket{z}
\end{align}
that approximates $e^{2\pi i y\transp AA\transp x}\ket{x}\ket{y}\ket{\eta}$ since $\braket{z}{\eta}$ is peaked around $z\approx A\transp x$. Specifically, splitting the sum over $z$ according to whether $\norm{z-A\transp x}_\infty \le 2p/2^k$ and using $\abs{e^{2\pi i a}-e^{2\pi i b}} \le \min\{2\pi\abs{a-b},2\}$,  $\abs{y\transp A(z-A\transp x)} \le \norm{A\transp y}_1 \norm{z-A\transp x}_\infty \le n^2 \Lambda L \norm{z-A\transp x}_\infty$, and \cref{eq:pest} with a union bound over the $n$ coordinates gives
\begin{align}
  \norm{\ket{\psi} - e^{2\pi i y\transp AA\transp x}\ket{x}\ket{y}\ket{\eta}}
  &\le 4\pi n^2 \Lambda L p/2^k + \sqrt{\frac{2n}{p-1}}.
\end{align}
With $p = 1 + \ceil{8n/\epsilon^2}$, the second term is clearly at most $\epsilon/2$, and the first is also at most $\epsilon/2$ since $p \le 10n/\epsilon^2$ and $2^k > 80\pi n^3 \Lambda L/\epsilon^3$. Finally, applying the Fourier transform to each ancilla coordinate, followed by $\hat U_{A\transp}^{-1}$ with the $x$ register as the input and the ancilla as the output, we return the ancilla to a uniform superposition. Overall, this implements $\hat U_{AA\transp}$ using three queries to $\hat U_A$, as claimed. The error bound established above for basis states implies the general error bound since the implementation is a direct sum over the $x$ and $y$ registers, and the lemma follows.
\end{proof}

We can now formalize the reduction from inverting $M$ to inverting $A$.
We have $M^{-1} = (A\transp)^{-1} A^{-1}$, so $A\transp M^{-1} = A^{-1}$. This means that, given the first column of $M^{-1}$, we can compute the first column of $A^{-1}$ using one additional query (\cref{lem:query_sim}\ref{item:query_sim_est} applied to $A\transp$). Therefore, a $t$-query algorithm for inverting $M$ gives, with arbitrarily small additional error, a $(3t+1)$-query algorithm for inverting $A$, so \cref{thm:inv} implies the hardness of matrix inversion even for positive definite matrices.

In the two corollaries below, we relabel the dimension $n+1$ in \cref{thm:inv} as $n$. Increasing the dimension by $1$ in the corollaries weakens their promises and makes their error tolerances tighter, so the lower bounds remain valid.

\begin{corollary}\label{cor:inv_pos}
Computing the first column of $M^{-1}$ with the promise that $M>0$ requires $\Omega(n/\log n)$ quantum matrix-vector queries to $M \in \R^{n\times n}$. In particular, this holds for the task of computing the first column of $M^{-1}$ within Euclidean error at most $\frac{1}{6}n^{-7/2}$ with the promise that
\begin{align}
  \frac{1}{n^{13}}\1 \le M \le 9n \1.
  \label{eq:inv_pos}
\end{align}
\end{corollary}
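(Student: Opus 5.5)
We prove the corollary by reduction to \cref{thm:inv}, taking the reduction sketched just before the statement and making its error and promise bookkeeping explicit. Write the dimension in \cref{thm:inv} as $n$ rather than $n+1$; as the paper notes, this only weakens the promises and tightens the tolerances below. Suppose an algorithm $\mathcal{A}$ computes the first column of $M^{-1}$ within Euclidean error $\frac16 n^{-7/2}$ using $t$ queries, whenever $n^{-13}\1 \le M \le 9n\1$. We build from it an algorithm for the decision problem of \cref{thm:inv} that uses $O(t)$ queries, so that $t = \Omega(n/\log n)$.

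\emph{Promise transfer.} Take any $A$ in the promise set of \cref{thm:inv}, so $\sigma_{\min}(A) \ge n^{-13/2}$ and $\sigma_{\max}(A) \le 3\sqrt n$. Then $M = AA\transp$ has eigenvalues $\sigma_i(A)^2 \in [n^{-13}, 9n]$, which is exactly \cref{eq:inv_pos}. In addition, $\norm{M}_{\max} \le 9n$ gives a valid bound $\Lambda$ for the simulated oracle.

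\emph{Query simulation.} Each query of $\mathcal{A}$ to $\hat U_M$ is replaced by three queries to $\hat U_A$ via \cref{lem:query_sim}\ref{item:query_sim_M}, with per-query accuracy $\epsilon/t$. The total trace-distance error is then at most $\epsilon$ for any fixed small constant $\epsilon$. This requires the discretization level to be large enough that all inputs satisfy $\norm{x}_\infty,\norm{y}_\infty \le L$. Because $\D$ is bounded for fixed $k$, we can take $L = 2^{k}$ and verify that $k \ge \log_2(256 n^3 \Lambda L/\epsilon^3)$ can still be met. I expect this to be the main technical obstacle: the condition is circular in $k$. I would resolve it by letting the simulated algorithm run at level $k'$ and implementing its queries at a larger level $k$ that contains $\D_{k'}$, with $k$ chosen depending on $k'$. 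I would also check that the embedding of $\D_{k'}$ into $\D_k$ preserves the phase oracle exactly on the embedded inputs.

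\emph{Recovering $(A^{-1})_{11}$.} Let $w \approx M^{-1}e_1$ denote the output of $\mathcal{A}$, with error at most $\eta = \frac16 n^{-7/2}$. Since $A^{-1} = A\transp M^{-1}$, we have $(A^{-1})_{11} = e_1\transp A\transp M^{-1} e_1 = (Ae_1)\transp M^{-1}e_1$. One query suffices, via \cref{lem:query_sim}\ref{item:query_sim_est} with input $e_1$, to estimate the column $Ae_1$ within $\ell_\infty$ error $\epsilon'$; alternatively, use $A\transp w$ as in the text. The resulting error is at most $\norm{Ae_1}\,\eta + \epsilon'\sqrt n\,\norm{w}_1$. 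We have $\norm{Ae_1} \le 3\sqrt n$, and $\norm{w}$ is at most about $n^{13}$, so the second term is negligible once $\epsilon'$ is taken polynomially small, which costs only a larger $k$. The first term is at most $3\sqrt n \cdot \frac16 n^{-7/2} = \frac12 n^{-3} < n^{-3}$. This distinguishes $|(A^{-1})_{11}| \le \theta - n^{-3}$ from $|(A^{-1})_{11}| \ge \theta + n^{-3}$, provided the first term is kept slightly below the margin $n^{-3}$ so that the small extra errors fit.

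The total failure probability is the failure probability of $\mathcal{A}$ plus $O(\epsilon)$, which remains bounded away from $\frac12$. The reduction uses $3t+1$ queries, so \cref{thm:inv} gives $3t+1 = \Omega(n/\log n)$, hence $t = \Omega(n/\log n)$.
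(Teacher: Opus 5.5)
Your proposal is correct and follows the paper's proof. It uses the same promise transfer via $M=AA\transp$, the same three-query simulation from \cref{lem:query_sim}\ref{item:query_sim_M} with the simulated algorithm's grid nested inside a finer one (so $L$ does not depend on the finer level), and the same error budget of $\frac12 n^{-3}$ from $\norm{A}\le 3\sqrt n$ plus a hybrid argument. Your one deviation is to spend the final query estimating $Ae_1$ and output $(Ae_1)\transp w$ instead of applying $A\transp$ to $w$; this is a slight simplification, since $e_1$ already lies on the grid and $w$ never enters a query, so you avoid rounding $w$ to $\bar w$ and the bounding-box argument, while still obtaining $(A^{-1})_{11}$, which is all the decision problem of \cref{thm:inv} needs.
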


\begin{proof}
For the inputs $A \in \G$ of \cref{thm:inv}, the matrix $M=AA\transp$ satisfies \cref{eq:inv_pos}, and $\norm{A} \le 3\sqrt{n}$.
Given a $t$-query algorithm that obtains the first column of $M^{-1}$ with Euclidean error at most $\frac{1}{6} n^{-7/2}$, application of $A\transp$ gives an estimate of the first column of $A^{-1}$ with Euclidean error at most $\frac{1}{6} \norm{A} n^{-7/2} \le 1/(2n^3)$, neglecting error in the application of $A\transp$.

It remains to choose parameters for \cref{lem:query_sim} that let us simulate the algorithm making queries to $M$ with an algorithm making queries to $A$. Let the algorithm querying $M$ use discretization level $k_M$, with corresponding grid $\D_M$. A successful algorithm obtains a vector $w$ with $\norm{w-M^{-1}e_1} \le \frac{1}{6}n^{-7/2} < 1$ (where $e_1$ is the first standard basis vector), and $\norm{M^{-1}e_1} \le n^{13}$ by \cref{eq:inv_pos}, so $w \in [-n^{13}-1,n^{13}+1]^n$. (If the returned vector lies outside this box, simply declare failure.) Denote the discretization level for queries to $A$ by $k_A > k_M$, with corresponding grid $\D_A \supset \D_M$. The algorithm need not return a vector on the grid, so round $w$ coordinatewise to $\bar w \in \D_A^n$. Then $\norm{\bar w - w}_\infty \le 2^{-k_A-1}$ and $\norm{A\transp(\bar w - w)} \le 3n2^{-k_A-1}$. Every coordinate in $\D_M$ has magnitude less than $2^{k_M-1}$, and every coordinate of $\bar w$ has magnitude less than $n^{13}+2$, so $L = \max\{2^{k_M-1},n^{13}+2\}$ bounds every vector to which \cref{lem:query_sim} is applied, independently of $k_A$. Choose $k_A$ large enough that every simulated query to $M$ has trace-distance error at most $\frac{1}{12(t+1)}$ and $\norm{A\transp(\bar w-w)} \le \frac{1}{8n^3}$. In the final step of the reduction, apply \cref{lem:query_sim}\ref{item:query_sim_est} with $\epsilon = \min\{\frac{1}{8n^{7/2}},\frac{1}{12(t+1)}\}$ so that it estimates $A\transp \bar w$ with Euclidean error at most $\frac{1}{8n^3}$, with probability at least $1-\frac{1}{12(t+1)}$. Then, provided the algorithm querying $M$ succeeds and the final query to $A\transp$ succeeds, the final Euclidean error is at most $\frac{3}{4n^3} < \frac{1}{n^3}$. A hybrid argument for the simulated queries shows that the failure probability increases by at most $\frac{t}{12(t+1)} + \frac{1}{12(t+1)} = \frac{1}{12}$. If the algorithm querying $M$ succeeds with probability at least $2/3$, then comparing the absolute value of the first output coordinate with $\theta$ solves the decision problem of \cref{thm:inv} with probability at least $7/12$. Constant amplification gives a bounded-error $O(t)$-query algorithm, so \cref{thm:inv} implies $t = \Omega(n/\log n)$.
\end{proof}

Essentially the same reduction also shows that \cref{thm:det_magnitude} still holds with the assumption that the matrix is positive definite, as we now sketch. Because the singular matrices have zero measure and $\beta^{(2)}$ is absolutely continuous, excluding singular $A$ does not change the hard witness, so the construction can be restricted to invertible $A$ and hence to positive definite $M$. In particular, using the notation of the proof of that theorem, $\Omega(n/\log n)$ queries are needed to distinguish whether $\det M \le (\sqrt{\theta^*} - \epsilon_n)^2$ or $\det M \ge (\sqrt{\theta^*} + \epsilon_n)^2$ for a positive definite $M$. These instances also satisfy $\norm{M}_{\max} \le n \Lambda^2$.

\cref{cor:inv_pos} suggests that the quantum query complexity of quadratic minimization---and therefore general convex optimization---is $\Omega(n/\log n)$. We now rigorously demonstrate this in the discretized function evaluation model, even in the setting of constrained optimization over a fixed convex region, the unit ball.

\begin{corollary}\label{cor:constrained}
Let $K \coloneqq \{x \in \R^n : \norm{x} \le 1\}$. Given query access to a quadratic, $1$-Lipschitz on $K$, $\frac{1}{10n^{14}}$-strongly convex function $f\colon \R^n \to \R$ that is bounded by $1$ on $K$, $\Omega(n/\log n)$ quantum queries are needed to identify a point $\tilde x \in K$ satisfying
\begin{align}
  f(\tilde x) - \min_{x \in K} f(x) \le \frac{1}{720n^{49}}.
  \label{eq:constrained_approx_min}
\end{align}
\end{corollary}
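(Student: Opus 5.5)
We reduce from \cref{cor:inv_pos}. Given $M$ satisfying \cref{eq:inv_pos} with the hard instances $M = AA\transp$, $A\in\G$, we set
\begin{align}
  f(x) \coloneqq \tfrac{c}{2} x\transp M x - c\, s\, e_1\transp x
\end{align}
for scalings $c,s>0$. Its unconstrained minimizer is $x^* = s M^{-1}e_1$. Since $\norm{M^{-1}e_1} \le n^{13}$, choosing $s \approx n^{-13}/2$ places $x^*$ in the interior of $K$ (about half the unit ball). Hence the constrained and unconstrained minima coincide.

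Next we choose $c$ so that $f$ is $1$-Lipschitz and bounded by $1$ on $K$. We have $\norm{\nabla f(x)} \le c(\norm{M}+s) \le c(9n+1)$, so $c = 1/(10n)$ suffices. The same choice gives $\abs{f}\le 1$ on $K$. Strong convexity is $c\,\lambda_{\min}(M) \ge \frac{1}{10n}\cdot n^{-13} = \frac{1}{10n^{14}}$, matching the statement.

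Strong convexity converts approximate minimization into approximate location of the minimizer. If $f(\tilde x)-f(x^*)\le \epsilon$, then
\begin{align}
  \norm{\tilde x - x^*}^2 \le 2\epsilon\cdot 10n^{14}.
\end{align}
Then $\tilde x/s$ approximates $M^{-1}e_1$ within $\sqrt{20 n^{14}\epsilon}/s \approx 2n^{13}\sqrt{20n^{14}\epsilon}$. We need this below $\frac16 n^{-7/2}$, which gives $\epsilon \lesssim n^{-33}/(36\cdot 80)$. So $\epsilon = \frac{1}{720 n^{49}}$ is comfortably sufficient, and the exact exponent will come from bookkeeping of the constants and the $n\to n+1$ relabeling. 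Thus any optimizer yields an inverter meeting the accuracy requirement of \cref{cor:inv_pos}.

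The main obstacle is simulating the discretized phase oracle $\hat O_f\colon\ket{x,y}\mapsto e^{2\pi i y f(x)}\ket{x,y}$ for $x\in\D^n\cap K$, $y\in\D$ using $O(1)$ matrix-vector queries to $M$, with small trace-distance error.

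First, the linear term is $A$-independent, so we apply it as a known phase. For the quadratic term we need the phase $e^{2\pi i\, y\,\frac{c}{2} x\transp M x}$. We use \cref{lem:query_sim}\ref{item:query_sim_est} to compute an estimate $z\approx Mx$ into an ancilla, with one query. We then apply the $A$-independent phase $e^{\pi i c\, y\, x\transp z}$, and uncompute $z$ by running the estimation procedure in reverse, with a second query.

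The error analysis mirrors part \ref{item:query_sim_M} of \cref{lem:query_sim}. The phase error is bounded by $2\pi\abs{y}c\norm{x}_1\norm{z-Mx}_\infty$, and $y$ and $x$ range over bounded grids of level $k$, so the error can be made small by taking the query discretization level large. The points $x\notin K$ are handled by a controlled identity, since membership in $K$ is $A$-independent.

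A hybrid argument over the $t$ simulated queries then shows that a $t$-query optimizer yields an $O(t)$-query inverter with success probability only slightly degraded. \cref{cor:inv_pos} then gives $t=\Omega(n/\log n)$. We must also confirm that the precision $k$ used for the $M$-queries can be chosen independently of $A$. It can, because the bounds $\norm{M}_{\max}\le 9n$ and $\abs{f}\le 1$ are uniform over the instances.
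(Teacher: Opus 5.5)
Your reduction is correct and has the same architecture as the paper's proof: essentially the same $f$, localization of the minimizer by strong convexity, simulation of $\hat O_f$, and a hybrid argument. It differs in two steps, plus one arithmetic slip.

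\textbf{Oracle simulation.} The paper does not compute $Mx$ into an ancilla. It rounds $z=\frac{yx}{20n}$ onto the finer grid and feeds it into the \emph{output} register of the three-query simulation of $\hat U_{AA\transp}$ from \cref{lem:query_sim}\ref{item:query_sim_M}. Bilinearity of the phase $e^{2\pi i \bar z\transp M x}$ then yields the quadratic form directly, with only a rounding error to control. Your compute--phase--uncompute gadget costs two $M$-queries per function query instead of one and needs a phase-estimation error analysis. In exchange it is more generic, since any $A$-independent function of $Mx$ could be kicked back the same way.

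\textbf{Reduction target.} The paper reduces straight to the decision problem of \cref{thm:inv}, reusing the final-step bookkeeping from the proof of \cref{cor:inv_pos}. There, boosting a success probability of $7/12$ is routine. Using \cref{cor:inv_pos} as a black box, as you do, needs one more sentence. Either that corollary must tolerate success probability slightly below $2/3$, or you must amplify a vector-valued output. For example, output a repetition lying within $2\varepsilon$ of a majority of the others, which costs a factor $3$ in precision. Your slack absorbs this, but you should say so.

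\textbf{Arithmetic slip.} Solving $2n^{13}\sqrt{20n^{14}\epsilon}\le\frac16 n^{-7/2}$ gives $\epsilon\le\frac{1}{2880n^{47}}$, not $n^{-33}/2880$; you dropped the factor $n^{14}$. The stated $\frac{1}{720n^{49}}$ still suffices for $n\ge 2$, giving error $\frac{1}{3n^{9/2}}$ for $\tilde x/s$. However, the margin is only a factor $n/2$ in precision, not the large slack you suggest. With the paper's linear-term scale $n^{-14}$, the requirement $n^{14}\sqrt{20n^{14}\epsilon}\le\frac16 n^{-7/2}$ holds with equality at $\epsilon=\frac{1}{720n^{49}}$. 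That is where the exponent $49$ comes from.
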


\begin{proof}
For an input $A \in \G$ as in the proof of \cref{thm:inv}, let $M = AA\transp$, and consider
\begin{align}
  f(x) = \frac{1}{10 n}\bigl(\tfrac{1}{2}x\transp M x + n^{-14} e_1\transp x \bigr).
\end{align}
The Hessian $\frac{1}{10 n} M$ satisfies $\frac{1}{10n^{14}}\1 \le \frac{1}{10 n} M \le \frac{9}{10}\1$ (as in \cref{eq:inv_pos}), so $f$ has the claimed strong convexity.
Furthermore, for $x \in K$, we have $|f(x)| \le 1$ and $\norm{\nabla f(x)} \le \frac{9n + n^{-14}}{10n} \le 1$, so $f$ is $1$-Lipschitz. Its unconstrained minimizer is $x_* = -n^{-14} M^{-1} e_1$, with $\norm{x_*} \le n^{-14}\norm{M^{-1}} \le 1/n$, so this is also the unique minimum over $K$.
If $\tilde x$ satisfies \cref{eq:constrained_approx_min}, then by strong convexity,
\begin{align}
  \norm{\tilde x - x_*} 
  &\le \sqrt{20n^{14}(f(\tilde x)-f(x_*))} \\
  &\le \frac{1}{6 n^{35/2}}.
\end{align}

Suppose the optimization algorithm makes $t$ function queries at discretization level $k_f$, with grid $\D_f$. We simulate this using queries to $A$ with some discretization level $k_A$, with grid $\D_A$. For any $x \in \D_f^n$ and $y \in \D_f$, compute auxiliary vectors $x_K,\bar z \in \D_A^n$ as follows. If $x \in K$, let $x_K=x$ and obtain $\bar z$ by rounding $z=\frac{yx}{20n}$ coordinatewise to a nearest grid point. Otherwise, set $x_K=\bar z=0$. Apply the three-query implementation of $\hat U_M$ from \cref{lem:query_sim}\ref{item:query_sim_M} to $x_K,\bar z$ and apply the known phase $e^{2\pi i yx_1/(10n^{15})}$ conditional on $x \in K$. Let $L = \max\{\frac{2^{k_f}-1}{20n}+1,n^{14}+1\}$, which
bounds every vector used in the simulation independently of $k_A$. 
For $x \in K$, nearest-grid rounding gives $\norm{\bar z - z} \le \sqrt{n}2^{-k_A-1}$, so $2\pi\abs{(\bar z - z)\transp M x} \le 9 \pi n^{3/2} 2^{-k_A}$.
Choose $k_A$ large enough that this rounding error and the trace-distance error from \cref{lem:query_sim}\ref{item:query_sim_M} are each at most $\frac{1}{24(t+1)}$.
Then every simulated function query has trace-distance error at most $\frac{1}{12(t+1)}$. If the returned point lies outside $K$, declare failure. On a successful run, $\norm{A\transp(-n^{14}\tilde x) - A^{-1}e_1} \le n^{14} \norm{A} \norm{\tilde x - x_*} \le \frac{1}{2n^3}$. Using the final-query rounding, accuracy, and failure bounds from the proof of \cref{cor:inv_pos}, with $-n^{14}\tilde x$ as the returned vector, gives an estimate of $A^{-1}e_1$ with Euclidean error less than $1/n^3$. The same hybrid and union-bound calculation as before increases the failure probability by at most $1/12$. It follows that a $t$-query algorithm optimizing $f$ within the stated precision implies a $(3t+1)$-query algorithm for the decision problem in \cref{thm:inv} with success probability at least $7/12$, which can be increased above $2/3$ with constant overhead.
\end{proof}

\section*{AI statement}

The results of this paper were generated primarily through conversations with Claude Opus 4.8 and 5, which provided the majority of the technical approach. Initial prompts about generalizing the results of \cite{CHL21} to the reals led to a formulation of the determinantal witness method and the trace lower bound, and further interactions produced extensions to the determinant and to the inversion problem, with the arguments for the latter case relying on conjectured behavior of random Gaussian matrices. Claude was then prompted to numerically check this conjecture, and in the course of optimizing numerical calculations, made analytical observations that enabled a proof. Further interactions with Claude Opus 5 and ChatGPT 5.6-Sol helped to illuminate discretization of the query model and the connection to convex optimization. The manuscript was written manually, informed by AI-generated notes. The paper was also checked for correctness and clarity using ChatGPT 5.6-Sol and Claude Opus 5; some technical issues were found, and the paper was revised accordingly. The author takes sole responsibility for correctness of the final manuscript.

\section*{Acknowledgments}

We thank Yanlin Chen and Yufan Zheng for highlighting the similarity of dual witnesses and dual polynomials \cite{She08}, Sander Gribling for emphasizing the significance of being given a feasible point in the lower bound for convex optimization (indeed, a linear lower bound was previously shown if no such point is available \cite{AGGW18}), and Tongyang Li for bringing Ref.~\cite{ZZQL26} to our attention.

After sharing this manuscript with several colleagues, we learned of independent work by Brandon Augustino, Shouvanik Chakrabarti, Enrico Fontana, Dylan Herman, Junhyung Kim, Guneykan Ozgul, and Nadezhda Voronova that also shows a nearly linear lower bound on the quantum query complexity of convex optimization using related techniques, also with significant use of AI tools \cite{Aug26}.

This work received support from the National Science Foundation (grant 26-17356) and the Department of Energy, Office of Science, Office of Advanced Scientific Computing Research, Accelerated Research in Quantum Computing program.


\end{document}